\newtheorem{theorem}{Theorem}[section]
\newtheorem{lemma}[theorem]{Lemma}
\newtheorem{corollary}[theorem]{Corollary} 
\theoremstyle{definition}
\newtheorem{remark}[theorem]{Remark} 
\numberwithin{equation}{section}
\def\z*{\bar z}
\def\B{\mathscr B}
\def\uno{\mathsf 1}
\def\H{\mathsf H}
\def\fh{\mathfrak h}
\def\fb{\mathfrak b}
\def\X{\mathsf X}
\def\Y{\mathsf Y}
\def\D{\text{\rm dom}}
\def\dom{\text{\rm dom}}
\def\ran{\text{\rm ran}}
\def\supp{\text{\rm supp}}
\def\RE{\mathbb R}
\def\CO{{\mathbb C}}
\def\SL{{\rm SL}}
\def\DL{{\rm DL}}
\def\NA{\mathbb N}
\def\ph*{\phi_\star}
\def\be{\begin{equation}}
\def\ee{\end{equation}}
\def\min{{\rm min}}
\def\max{{\rm max}}
\def\n{{\|}}
\def\-{{\rm in}}
\def\+{{\rm ex}}
\journal{Journal de Math\'ematiques Pures et Appliqu\'ees}
\begin{document}
\begin{frontmatter}

\author[man]{Andrea Mantile}
\ead{andrea.mantile@univ-reims.fr}
\address[man]{Laboratoire de Math\'{e}matiques, Universit\'{e} de Reims -
FR3399 CNRS, Moulin de la Housse BP 1039, 51687 Reims, France}

\author[pos]{Andrea Posilicano\corref{cor}}
\ead{andrea.posilicano@uninsubria.it}
\cortext[cor]{Corresponding author}
\address[pos]{DiSAT - Sezione di Matematica, Universit\`a dell'Insubria, Via Valleggio 11, I-22100 Como, Italy}

\title%[Asymptotic Completeness and S-Matrix for Singular Perturbations]
{Asymptotic Completeness and S-Matrix for Singular Perturbations}

\begin{abstract} 
We give a criterion of asymptotic completeness and provide a representation of the scattering matrix for the scattering couple $(A_{0},A)$, where $A_{0}$ and $A$ are semi-bounded self-adjoint operators in $L^{2}(M,{\mathcal B},m)$ such that the set $\{u\in\D(A_{0})\cap\D(A):A_{0}u=Au\}$ is dense. No sort of trace-class condition on resolvent differences is required. Applications to the case in which $A_{0}$ corresponds to the free Laplacian in $L^{2}({\mathbb R}^{n})$ and $A$ describes the Laplacian with self-adjoint boundary conditions on rough compact hypersurfaces are given.
\vskip 10pt 
\noindent
{\bf R\'esum\'e}
\vskip8pt\noindent
On fournit un crit\`{e}re pour la compl\'{e}tude asymptotique et une
repr\'{e}sentation de la matrice de la diffusion pour un syst\`{e}me de
diffusion $(  A_{0},A)  $, \'{e}tant $A_{0}$ et $A$ op\'{e}rateurs
autoadjointes demi-born\'{e}s dans $L^{2}(M,{\mathcal B},m)$
tels que l'ensemble  $\{u\in\D(A_{0})\cap\D(A):A_{0}u=Au\}$ est dense. Aucune
condition de trace sur les r\'{e}solvantes est requise. On consid\`{e}re des
applications aux cas o\`{u} $A_{0}$ est le Laplacien libre dans $L^{2}({\mathbb R}^{n})$ et $A$ d\'{e}crit le Laplacien avec conditions au
bord autoadjointes sur une hypersurface compacte et non r\'{e}guli\`{e}re.

\end{abstract}
\begin{keyword} Abstract scattering theory, Scattering matrix, Self-adjoint extensions of symmetric operators\MSC[2010] 47A40\sep  47B25\sep 35P25 \sep 35J10

\end{keyword}

\end{frontmatter}

%\maketitle

\begin{section}{Introduction.}
Let $A_{0}:\dom(A_{0})\subseteq\H\to\H$ be a self-adjoint operator in the
Hilbert space $\H$. Another self-adjoint operator $A:\dom(A)\subseteq\H\to\H$ is said to
be a singular perturbation of $A_{0}$ if the set $N_{0}:=\{u\in
\dom(A_{0})\cap \dom(A)\, :\, A_{0}u=Au\}$ is dense in
$\H$ (see e.g. \cite{Kosh}, \cite{P04}); in typical situations $A_{0}$ and $A$ correspond to the same differential expression and they differ due to some boundary conditions imposed on a null subset.\par Since the subspace $N_{0}$ is closed with respect to the graph norm of $A_{0}$, the linear operator $S_{0}:=A_{0}|N_{0}$, obtained by restricting $A_{0}$ to $N_{0}$, is a densely defined closed symmetric operator and $A$ is one of its self-adjoint extensions. Therefore to find all singular perturbations of $A_{0}$ it suffices to pick out $\H$-dense subspaces $N\subsetneq \D(A_{0})$, closed with respect to the graph
norm of $A_{0}$, and then to look for the self-adjoint
extensions of $S=A_{0}|N$: for any of such a self-adjoint extensions $A\not=A_{0}$ one has $\D(S)=N\subseteq N_{0}=\{u\in
\dom(A_{0})\cap \dom(A)\, :\, A_{0}u=Au\}$ and so $N_{0}$ is dense. Since $\D(A_{0})$ is a Hilbert   space with respect to the scalar product $\langle u,v\rangle _{A_{0}}:=\langle u,v\rangle _{\H}+\langle A_{0}u,A_{0}v\rangle _{\H}$,  and $N$ is closed with respect to the corresponding norm, one has $\D(A_{0})=N\oplus N^{\perp}$ and so, without loss of generality,  we can suppose that $N=\ker(\tau)$, where $\tau:\D(A_{0})\to\fh$ is a bounded and surjective linear operator, $\fh$($\simeq N^{\perp}$) being an auxiliary Hilbert space, i.e. $\tau$ is a sort of a (abstract) trace map. \par In Section 2, building on \cite{P01}, we provide, by a Kre\u\i n-type resolvent formula (see Theorem \ref{teo}), the set of singular perturbations of a given self-adjoint $A_{0}$ in terms of certain families $\Lambda=\{\Lambda_{z}\}_{z\in Z_{\Lambda}}$ of bounded linear maps $\Lambda_{z}:\fb \to \fb^{*}$, where $\fb$ is a reflexive Banach space such that $\fh\hookrightarrow\fb$ is a continuous immersion and $Z_{\Lambda}$ is a not empty subset of the resolvent set of $A_{0}$. By an abstract Green-type formula, this entails the relation $\langle u,A_{\Lambda}v\rangle_{\H}=\langle A_{0}u,v\rangle_{\H}+
\langle\tau u,\varrho v\rangle_{\fh,\fh^{*}}$, where $\varrho$ is another $\fh^{*}$-valued (abstract) trace map; such a relation permits us to employ a variation (due to Schechter, see \cite{Sch} and \cite{Sch-book}) of the Cook-Kato-Kuroda method to get existence and completeness of the wave operators for the scattering couple $(A_{0},A_{\Lambda})$ in terms of conditions about the map $\tau$ and the operator family $\Lambda$ (see Theorem \ref{scattering}). \par In order to implement such conditions towards applications, in Section 3 we provide a Limiting Absorption Principle (LAP for short) holding, under certain conditions (see hypotheses H1-H4 there), for self-adjoint operators of the kind  $A_{\Lambda}$ defined in spaces of square integrable functions on arbitrary measure spaces $(M,{\mathcal B},m)$. This permits, under some further hypotheses (see hypotheses H5 and H6 in Section 3), to obtain an abstract  result about asymptotic completeness for the scattering couple $(A_{0}, A_{\Lambda})$ (see Theorem \ref{AC}).\par In Section 4, under the same hypotheses H1-H6 and using both the Birman-Kato invariance principle and Birman-Yafaev general scheme in scattering theory (see \cite{BY}, \cite{Y}, \cite{Y1}), 
we provide an explicit relation (see Corollary \ref{S-matrix}) between the Scattering Matrix $S^{\Lambda}_{\lambda}$ associated to the scattering couple $(A_{0},A_{\Lambda})$ and the limit operator $\Lambda^{+}_{\lambda}:=\lim_{\epsilon\downarrow 0}\Lambda_{\lambda+i\epsilon}$; such a limit exists in $\B(\fh,\fh^{*})$ by LAP (see Lemma \ref{bound}). \par 
Self-adjoint realizations of the Laplacian operator with boundary or interface conditions on a closed and bounded hypersurface in $\RE^{n}$ can be interpreted as singular perturbations of the free Laplacian; hence the scattering theory for these models naturally develops within the abstract scheme presented above. This point is considered in the final Section 5, where we specialize to the case in which the self-adjoint operator $A_{0}$ coincides with the free Laplacian in $L^{2}(\RE^{n})$, i.e. $A_{0}=\Delta: H^{2}(\RE^{n})\subset L^{2}(\RE^{n})\to L^{2}(\RE^{n})$, where $H^{2}(\RE^{n})$ denotes the usual Sobolev space of order two. Supposing that  to
the abstract trace map $\tau: H^{2}(\RE^{n})\to\fh$ corresponds a distribution with compact support, i.e $\ran(\tau^{*})\subseteq H_{comp}^{-2}(\RE^{n})$, and under a compactness hypothesis on $\Lambda$, we can apply our results to a wide set of singular perturbations of the free Laplacian (see Theorem \ref{LH}). Moreover, in such a setting the operator limit $\Lambda^{+}_{\lambda}$ appearing in the representation of the Scattering Matrix $S^{\Lambda}_{\lambda}$ exists in the more convenient (as regards applications) space $\B(\fb,\fb^{*})$.  In particular, we give applications to the case of scattering from Lipschitz bounded obstacles in $\RE^{n}$ both with Dirichlet (see Subsection 5.1) and Neumann (see Subsection 5.2) boundary conditions, to scattering for Schr\"odinger operators $\Delta_{\alpha}$ in $L^{2}(\RE^{n})$ with $\delta$-type potentials with unbounded strengths $\alpha$ supported on bounded $d$-sets with $0<n-d<2$ (that comprises, in the case $d=n-1$, finite unions of Lipschitz hypersurfaces which may intersect on subsets having zero ($n-1$)-dimensional measure, and, whenever $d$ is not an integer, self-similar fractals), see Subsection 5.3, and to scattering for Schr\"odinger operators $\Delta_{\theta}$ in $L^{2}(\RE^{n})$ with $\delta'$-type potentials with strength $\theta^{-1}$ supported on Lipschitz hypersurfaces (see Subsection 5.4). 
\par
Beside their interest in Quantum Mechanics, Laplace operators with boundary or interface conditions on hypersurfaces (in particular with semi-transparent boundary conditions corresponding to $\delta$ and $\delta'$ singular potentials) provide relevant models for classical scattering from obstacles or non-homogeneous acoustic media (see the recent paper \cite{Acu}). Playing a central role in direct and inverse scattering problems, the scattering amplitude (strictly related to the far-field pattern used in wave scattering, see, e.g., \cite[Chapter 6]{Kirsch}) easily derives from the $S$-matrix. Hence, our results yield to a rigorous definition and an explicit formula for this map, in the regime where the obstacles boundary or the singularity surface of the acoustic density have low regularity; this represents an important by-product and a relevant perspective of our work.\par
%Let us remark that such kind of Schr\"odinger operators, beside their interest in Quantum Mechanics, describing semi-transparent boundary conditions across hypersurfaces, have applications in direct and inverse scattering in acoustics .   \par
We conclude this introduction describing how our results extend and connect with previously known ones.  Since, by \cite{P03} (see also \cite[Theorem 2.5]{P08}), the operators $A_{\Lambda}$ have an additive representation of the kind $A_{\Lambda}=A_{0}+T_{\Lambda}$, our abstract results extend existence and completeness of scattering provided in \cite{F92} and in \cite{Bra}
for $-\Delta+\mu$, $\mu$ a signed measure (in fact Ford's paper \cite{F92} was our main inspiration in writing the present work). \par The construction developed in this work can be easily recast into the language of boundary triple theory (see \cite{DM}, \cite[Section 14]{Schm}), the maps $G_{z}$ playing the role of $\gamma$-fields and the maps $\Lambda_{z}$ being the inverses of the Weyl functions (see \cite{P04}); since we do not require any trace-class condition on resolvents differences, our results can be regarded as extensions of the abstract results  provided in \cite[Section 3]{BMN}. \par 
In Section 5 we extend to the Lipschitz case the results, there provided for smooth hypersurfaces, appearing in \cite{JST}; these already extended the results given in \cite[Section 5]{BMN}.  In more detail, the expressions for the scattering matrix we provide in \eqref{scatt-dir} relative to Dirichlet obstacles and in \eqref{scatt-neu} relative to Neumann obstacles, extend to any dimension and to Lipschitz obstacles the similar ones obtained for two-dimensional obstacles with piecewise $C^{2}$ boundary in \cite[Theorems 5.3 and 5.6]{EP1} and \cite[Theorems 4.2 and 4.3]{EP3}; similar formulae are also given, in a smooth two dimensional setting in \cite[Subsections 5.2 and 5.3]{BMN} and in a smooth $n$-dimensional setting in \cite[Subsections 6.1 and 6.2]{JST}. 
\par 
The construction of the operator $\Delta_{\alpha}$ with semi-transparent boundary condition of $\delta$-type provided in Theorem \ref{delta} extend, as regards the regularity of the boundary and/or the class of admissible strength functions, previous constructions given, for example, in \cite{Her}, \cite{F92}, \cite{BEKS}, \cite{P01}, \cite{BLL}, \cite{JDE}, \cite{ER}. Asymptotic completeness for the scattering couple $(\Delta,\Delta_{\alpha})$ provided in Theorem \ref{delta} extend results on existence and completeness given, in the case the boundary is smooth and the strength are bounded, in \cite{BLL} and \cite{JDE}. The formula for the scattering matrix provided in \eqref{scatt-delta} (respectively in \eqref{scatt-delta-lip}) extends to $d$-sets (respectively to Lipschitz hypersurfaces) the results given, in the case of a smooth hypersurface, in \cite[Subsections 6.4 and 7.4]{JST} and, in the case of a smooth $2$- or   $3$-dimensional hypersurface, in \cite[Subsection 5.4]{BMN} (see also the formula provided in \cite{F93} for Schr\"odinger operators of the kind $-\Delta+\mu$, $\mu$ a signed measure). 
\par 
The construction of the operator $\Delta_{\theta}$ with semi-transparent boundary condition of $\delta'$-type provided in Theorem \ref{delta'} extend, as regards the regularity of the boundary and/or the class of admissible strength functions, previous constructions given in \cite{BLL}, \cite{JDE}, \cite{ER}. Asymptotic completeness for the scattering couple $(\Delta,\Delta_{\theta})$ provided in Theorem \ref{delta'} extend results on existence and completeness given, whenever the boundary is smooth and $\theta$ is bounded, in \cite{BLL} and \cite{JDE}. The formula for the scattering matrix provided in \eqref{scatt-delta'} extend to Lipschitz hypersurfaces the results given, in the case of a smooth hypersurface, in \cite[Subsections 6.5 and 7.5]{JST}.      
\vskip10pt\noindent
{\bf Acknowledgments.} During the preparation of this work, the authors profited of some stays at the CNRS Institute Wolfgang Pauli of Vienna, which they gratefully acknowledge for the kind financial support.\par

\begin{subsection}{Notations.}
{\ }\par
\vskip5pt \noindent $\bullet$  $\|\cdot\|_{\X}$ denotes the norm on the complex Banach space $\X$; in case $\X$ is a Hilbert   space, $\langle\cdot,\cdot\rangle_{\X}$ denotes the (conjugate-linear w.r.t. the first argument) scalar product.
\vskip5pt\noindent $\bullet$ $\langle\cdot,\cdot\rangle_{\X^{*},\X}$ denotes the duality (assumed to be conjugate-linear w.r.t. the first argument) between the dual couple $(\X^{*},\X)$.
\vskip5pt\noindent $\bullet$ $L^{*}:\dom(L^{*})\subseteq \Y^{*}\to \X^{*}$ denotes the dual of the densely defined linear operator $L:\dom(L)\subseteq \X\to \Y$; in a Hilbert   spaces setting $L^{*}$ denotes the adjoint operator.
\vskip5pt\noindent $\bullet$ $\rho(A)$ and $\sigma(A)$ denote the resolvent set and the spectrum of the self-adjoint operator $A$; $\sigma_{p}(A)$, $\sigma_{pp}(A)$, $\sigma_{ac}(A)$,  $\sigma_{sc}(A)$, $\sigma_{ess}(A)$, $\sigma_{disc}(A)$, denote the point, pure point, absolutely continuous, singular continuous, essential and discrete spectra.
\vskip5pt\noindent $\bullet$ $\B(\X,\Y)$, $\B(\X)\equiv \B(\X,\X)$, denote the Banach space of bounded linear operator on the Banach space $\X$ to the Banach space $\Y$; ${\|}\cdot {\|}_{\X,\Y}$ denotes the corresponding norm.
\vskip5pt\noindent $\bullet$ ${\mathfrak S}_{\infty}(\X,\Y)$ denotes the space of compact operators on the Banach space $\X$ to the Banach space $\Y$.
\vskip5pt\noindent $\bullet$ $\X\hookrightarrow \Y$ means that $\X\subseteq\Y$ and for any $u\in \X$ there exists $c>0$ such that 
$\|u\|_{\Y}\le c\,\|u\|_{\X}$; we say that $\X$ is continuously embedded into $\Y$. 
%the Banach space $\X$ is continuously embedded into the Banach space $\Y$.
\vskip5pt\noindent $\bullet$ Given the measure space $(M,{\mathcal B},m)$, $L^{2}(M,{\mathcal B},m)\equiv L^{2}(M)$ denotes the corresponding Hilbert space of measurable, square-integrable functions. %For notational convenience we use the  shorthand notation $L^{2}(M)\equiv L^{2}(M,{\mathcal B},m)$.
\vskip5pt\noindent $\bullet$ $u|\Gamma$ denotes the restriction of the function $u$ to the set $\Gamma$; $L|{\mathsf V}$ denotes the restriction of the linear operator $L$ to the subspace ${\mathsf V}$. 
\vskip5pt\noindent $\bullet$ $u^{\xi}_{\lambda}$ denotes the plane wave with direction $\xi$ and wavenumber $|\lambda|^{\frac12}$, i.e. $u^{\xi}_{\lambda}(x)=e^{i\,|\lambda|^{\frac12}\xi\cdot x}$.
\vskip5pt\noindent $\bullet$  ${P}_{z}^{\-/\+}$ and ${Q}_{z}^{\-/\+}$ denote the Dirichlet-to-Neumann and Neumann-to-Dirichlet operators relative to the domain $\Omega_{\-/\+}$, where $\Omega_{\-}\equiv\Omega$ and $\Omega_{\+}:=\RE^{n}\backslash\overline\Omega$.

\end{subsection}
\end{section}
\begin{section}{Singular perturbations of self-adjoint operators.}\label{Pre1}
Given the self-adjoint operator $$A_{0}:\D(A_{0})\subseteq\H\to\H$$ in the Hilbert space $\H$ and the auxiliary Hilbert space $\fh$, let $$\tau:\D(A_0)\to\fh$$ be continuous (w.r.t. the graph norm in $\D(A_0)$) and surjective. We further assume that  ker$(\tau)$ is dense in $\H$.  For notational convenience we do not identify $\fh$ with its dual $\fh^{*}$; however we use $\fh^{**}\equiv\fh$. Typically $\fh\hookrightarrow\fh_{0}\hookrightarrow\fh^{*}$ with dense inclusions and the $\fh$-$\fh^{*}$ duality is defined in terms of the scalar product of the intermediate Hilbert space $\fh_{0}$.\par 
For any $z\in\rho(A_{0})$ we define $R^0_{z}\in\B(\H,\dom(A_0))$  by $R^{0}_{z}:=(-A_{0}+z)^{-1}$ and $G_{z}\in\B(\fh^{*},\H)$ by  
$$ G_{z}:\fh^{*}\to\H\,, \quad G_{z}:=(\tau R^{0}_{\bar z})^*\,,
$$
i.e.
\be\label{gz}
\langle G_{z}\phi,u\rangle_{\H}=\langle\phi,\tau(-A_{0}+\bar z)^{-1}u\rangle_{\fh^{*}\!,\fh}
\quad \phi\in\fh^{*}\,,\ u\in\H\,.
\ee
%Since $\tau R^0_{z}$ is surjective, $G_{z}$ is injective; 
Since $\ker(\tau)$ is dense in $\H$, one has (see \cite[Remark 2.9]{P01},
\be\label{range}
\ran(G_{z})\cap\dom(A_0)=\{0\}\,.
\ee
However,  by the resolvent identity, 
\be\label{Gzw}
G_{z}-G_{w}=(w-z)R^0_{w}G_{z}
\ee 
and so 
\be\label{reg}
\ran(G_{z}-G_{w})\subset\dom(A_0)\,.
\ee
Notice that by \eqref{Gzw} there follows
\be\label{Gwz}
G^{*}_{z}G_{w}=G^{*}_{\bar w}G_{\bar z}\,.
\ee
Let us now suppose that there exist a reflexive Banach space $\fb\supseteq\fh$, $\fh\hookrightarrow\fb$, a set $\CO\backslash\RE\subseteq Z_{\Lambda}\subseteq\rho(A_{0})$, and a family $\Lambda$ of linear bounded maps $\Lambda_{z}\in\B(\fb ,\fb^*)$, $z\in Z_{\Lambda}$, such that (see \cite[equations (2) and (4)]{P01})
\be\label{Lambda1}
\Lambda_{z}^{*}=\Lambda_{\bar z}\,,
\ee
\be\label{Lambda2} 
\Lambda_{w}-\Lambda_{z}=(z-w)\Lambda_{w}G_{\bar w}^{*}G_{z}\Lambda_{z}\,.
\ee 
\begin{remark} In writing \eqref{Lambda2} we are implicitly using the continuous embeddings  $\fh\hookrightarrow\fb$ and  $\fb^{*}\hookrightarrow\fh^{*}$; such embeddings also give ${\|}\Lambda_{z}{\|}_{\fh,\fh^{*}}\le c\,{\|}\Lambda_{z}{\|}_{\fb,\fb^{*}}$.
\end{remark}
\begin{remark}\label{RemL} Notice that whenever $\Lambda_{z}$ has inverse $M_{z}:=\Lambda_{z}^{-1}$, then \eqref{Lambda1} and \eqref{Lambda2} are equivalent to 
\be\label{RL}
M_{z}^{*}=M_{\bar z}\,,\quad
M_{z}-M_{w}=(z-w)G_{\bar w}^{*}G_{z}\,.
\ee
\end{remark}
\begin{remark} Notice that the class of families $\Lambda$ satisfying \eqref{Lambda1} and \eqref{Lambda2} is not void: it can be parametrized by couples $(\Pi,\Theta)$, where $\Pi:\fh\to\ran(\Pi)$ is an orthogonal projection in the Hilbert space $\fh$ and $\Theta:\dom(\Theta)\subseteq\ran(\Pi)^{*}\to\ran(\Pi)$ is self-adjoint, setting (see \cite[Section 2]{P01}, \cite[Section 2]{JDE}) 
\be\label{Gamma}
\fb=\fh\,,\qquad \Lambda_{z}=\Pi^{*}(\Theta-\Pi\tau(G_{z}-(G_{z_{\circ}}+G_{\bar z_{\circ}})/2)\Pi^{*})^{-1}\Pi\,,\quad z_{\circ}\in\rho(A_{0})\,,
\ee  
$$
Z_{\Lambda}=Z_{\Pi,\Theta}:=\{z\in\rho(A_{0}):\text{$\Theta-\Pi\tau(G_{z}-(G_{z_{\circ}}+G_{\bar z_{\circ}})/2)\Pi^{*}$ has a bounded inverse}\}\,.
$$
The set $Z_{\Pi,\Theta}$ always contains $\CO\backslash\RE$ (see the proof of \cite[Theorem 2.1]{P08}; see also \cite[Proposition 2.1]{P01}) and so it is not void. In concrete situations it could happen that it is  better to 
work with different representations and/or to choose a space $\fb$ strictly larger than $\fh$; then \eqref{Lambda1} and \eqref{Lambda2} have to be checked case by case.
\end{remark}
\begin{theorem}\label{teo}
Let $\Lambda$ satisfy \eqref{Lambda1} and \eqref{Lambda2}. Then the family of bounded linear maps $R^{\Lambda}_{z}\in \B(\H)$, $z\in  Z_{\Lambda}$, defined by 
\be\label{resolvent}
R_{z}^{\Lambda}:=R^0_{z}+G_{z}\Lambda_{z}G^{*}_{\bar z}\,.
\ee
is the resolvent of a self-adjoint operator $A_{\Lambda}$ which is a self-adjoint extension of the closed symmetric operator $S:=A_{0}|\ker(\tau)$. 
\end{theorem}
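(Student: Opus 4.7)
The plan is to verify in turn that the family $\{R^{\Lambda}_{z}\}_{z\in Z_{\Lambda}}$ is a pseudo-resolvent enjoying the symmetry $(R^{\Lambda}_{z})^{*}=R^{\Lambda}_{\bar z}$ and has trivial kernel; standard pseudo-resolvent theory then produces a unique densely defined closed operator $A_{\Lambda}$ with $R^{\Lambda}_{z}=(-A_{\Lambda}+z)^{-1}$, and the symmetry translates at once into self-adjointness. That $A_{\Lambda}$ extends $S$ is then obtained by a direct comparison at the level of resolvents.

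The key algebraic task, which I expect to be the main obstacle, is the first resolvent identity $R^{\Lambda}_{z}-R^{\Lambda}_{w}=(w-z)R^{\Lambda}_{z}R^{\Lambda}_{w}$. The free part contributes the standard $(w-z)R^{0}_{z}R^{0}_{w}$. For the remaining difference $G_{z}\Lambda_{z}G^{*}_{\bar z}-G_{w}\Lambda_{w}G^{*}_{\bar w}$ I would use the telescoping
$$(G_{z}-G_{w})\Lambda_{z}G^{*}_{\bar z}+G_{w}(\Lambda_{z}-\Lambda_{w})G^{*}_{\bar z}+G_{w}\Lambda_{w}(G^{*}_{\bar z}-G^{*}_{\bar w}),$$
rewriting each piece by \eqref{Gzw} (and by its adjoint form, for the last term) together with \eqref{Lambda2}. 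In parallel I would expand $R^{\Lambda}_{z}R^{\Lambda}_{w}$ into its four terms and simplify using $R^{0}_{z}G_{w}=(w-z)^{-1}(G_{z}-G_{w})$, $G^{*}_{\bar z}R^{0}_{w}=(w-z)^{-1}(G^{*}_{\bar z}-G^{*}_{\bar w})$, and the commutation $G^{*}_{\bar z}G_{w}=G^{*}_{\bar w}G_{z}$ from \eqref{Gwz}. Matching the two expressions is the one step in which the full content of the compatibility relations \eqref{Lambda1}--\eqref{Lambda2} on $\Lambda$ is essentially used.

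The symmetry $(R^{\Lambda}_{z})^{*}=R^{\Lambda}_{\bar z}$ then follows by taking the adjoint in \eqref{resolvent} and combining $(R^{0}_{z})^{*}=R^{0}_{\bar z}$, $(G^{*}_{\bar z})^{*}=G_{\bar z}$, $(G_{z})^{*}=G^{*}_{z}$ with $\Lambda_{z}^{*}=\Lambda_{\bar z}$ from \eqref{Lambda1}. For the trivial kernel I would assume $R^{\Lambda}_{z}u=0$; then $R^{0}_{z}u=-G_{z}\Lambda_{z}G^{*}_{\bar z}u$ has its left-hand side in $\dom(A_{0})$ and its right-hand side in $\ran(G_{z})$, so by \eqref{range} both sides vanish, and injectivity of $R^{0}_{z}$ forces $u=0$. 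The operator $A_{\Lambda}:=z-(R^{\Lambda}_{z})^{-1}$ defined on $\dom(A_{\Lambda}):=\ran(R^{\Lambda}_{z})$ is then well defined (independent of $z\in Z_{\Lambda}$ by the resolvent identity), closed, densely defined, and self-adjoint by the standard pseudo-resolvent argument.

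Finally, to see that $A_{\Lambda}\supseteq S$, I would pick $u\in\ker(\tau)\subseteq\dom(A_{0})$ and set $v:=(-A_{0}+z)u$, so that $R^{0}_{z}v=u$. The definition $G_{\bar z}=(\tau R^{0}_{z})^{*}$ yields $G^{*}_{\bar z}v=\tau R^{0}_{z}v=\tau u=0$, hence $R^{\Lambda}_{z}v=R^{0}_{z}v+G_{z}\Lambda_{z}(G^{*}_{\bar z}v)=u$. Therefore $u\in\dom(A_{\Lambda})$ and $(-A_{\Lambda}+z)u=v=(-A_{0}+z)u$, i.e.\ $A_{\Lambda}u=A_{0}u$, as required.
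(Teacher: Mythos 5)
Your proposal is correct and follows essentially the same route as the paper's proof: the pseudo-resolvent identity via \eqref{Gzw}, \eqref{Gwz} and \eqref{Lambda2} (which the paper simply cites from \cite{P01} and which your telescoping computation indeed yields), injectivity from \eqref{range}, the symmetry $(R^{\Lambda}_{z})^{*}=R^{\Lambda}_{\bar z}$ from \eqref{Lambda1}, and standard pseudo-resolvent/Stone theory to produce the self-adjoint $A_{\Lambda}$. The only (cosmetic) difference is in the extension step, where you verify directly that $R^{\Lambda}_{z}(-A_{0}+z)u=u$ for $u\in\ker(\tau)$, while the paper passes through the domain representation \eqref{domLambda} and \eqref{op}.
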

\begin{proof} We proceed as in the proof of \cite[Theorem 2.1]{P01}. By \eqref{Lambda2}, $z\mapsto R^{\Lambda}_{z}$ is a pseudo-resolvent, i.e. it satisfies the resolvent identity (see \cite[page 113]{P01}). Since, by \eqref{resolvent}, $u\in\ker (R^{\Lambda}_{z})$ gives $R^0_{z}u\in\ran(G_{z})$, one gets $u=0$ by \eqref{range} and so $R^{\Lambda}_{z}$ is injective. Moreover, by \eqref{Lambda1} one gets $(R^{\Lambda}_{z})^{*}=R^{\Lambda}_{\bar z}$. Thus, by \cite[Theorems 4.10 and 4.19]{Stone}, $R^{\Lambda}_{z}$ is the resolvent of a self-adjoint operator $A_{\Lambda}$. Let us now fix $z\in Z_{\Lambda}$. By \eqref{range} and by 
\be\label{domLambda}
\dom(A_{\Lambda})=\ran(R^{\Lambda}_{z})=\{u=u_z+G_{z}\Lambda_{z}\tau u_{z},\ u_{z}\in\dom(A_0) \}\,,
\ee
one gets 
\be\label{dom}
\dom(A_0)\cap\dom(A_{\Lambda})=\{u\in\dom(A_0):\Lambda_{z}\tau u=0\}\,.
\ee
Thus, by  
\be\label{op}
(-A_{\Lambda}+z)u=(R^{\Lambda}_{z})^{-1}u=(-A_{0}+z)u_{z}\,,
\ee 
one gets 
$A_{\Lambda}|\ker(\tau)=A_{0}|\ker(\tau)=S$.
\end{proof}
\begin{remark}\label{rem} By the above proof there follows that Theorem \ref{teo} holds true without requiring that $Z_{\Lambda}$ contains the whole $\CO\backslash\RE$: it suffices to suppose that $Z_{\Lambda}\subseteq \rho(A_{0})$ is a not empty set which is symmetric with respect to the real axis. However, the former hypothesis is used in our successive treatments of Scattering Theory and Limiting Absorption Principle.
\end{remark}
\begin{remark} By \eqref{dom} and \eqref{op}, one gets 
$$
\ker(\tau)\subseteq\{u\in\D(A_0)\cap\D(A_{\Lambda}):Au=A_{\Lambda}u\}\,.
$$
Since  $\ker(\tau)$ is dense by our hypothesis, the set $\{u\in\D(A_0)\cap\D(A_{\Lambda}):A_{0}u=A_{\Lambda}u\}$ is dense as well and so $A_{\Lambda}$ is a singular perturbation of $A_{0}$ (see \cite{P04}). 
\end{remark}
\begin{remark}
By \cite[Corollary 3.2]{P08} (see also \cite[Theorem 2.1]{JDE}), the representation \eqref{Gamma} shows that any self-adjoint extension of $S$ is of the kind provided in Theorem \ref{teo}. 
\end{remark}
Now, in order to simplify the exposition and since such an hypothesis holds true in the  
applications further considered, we suppose that $A_{0}$ has a spectral gap, i.e. 
$$
\rho(A_{0})\cap \RE\not=\emptyset\,.
$$
Then, we pick $\lambda_{\circ}\in \rho(A_{0})\cap\RE$ and set
\be%\label{G}
G_{\circ} :=G_{\lambda_{\circ}}\,.
\ee
Let $S$ be the symmetric operator defined by $S:=A_{0}|\ker{\tau}$ as in Theorem \ref{teo}. By \cite[Theorem 3.1]{P04} and \cite[Lemma 2.3 and Remark 2.4]{JDE}, one has (compare with \eqref{domLambda} and \eqref{op})
\be\label{dom-adj}
\dom(S^{*})=\{u=u_{\circ}+G_{\circ} \phi\,,\ u_{\circ}\in\D(A_0)\,,\ \phi\in\fh^{*}\}\,,
\ee
\be\label{adj}
(-S^{*}+\lambda_{\circ})u=(-A_{0}+\lambda_{\circ})u_{\circ}\,,
%S^{*}u=A_{0}u_{\circ}+\lambda_{\circ}G \phi
\ee
(one can check that the definition of $S^{*}$ is $\lambda_{\circ}$-independent) 
and, defining the bounded linear map 
\be\label{varrho}
\varrho:\D(S^{*})\to\fh^{*}\,,\quad\varrho  u\equiv \varrho(u_{\circ}+G_{\circ} \phi):=\phi\,,
\ee
the following Green's type identity holds (see \cite[Theorem 3.1]{P04}, \cite[Remark 2.4]{JDE}):
\be\label{green1}
\langle u,S^{*}v\rangle_{\H}-\langle S^{*}u,v\rangle_{\H}=
\langle\tau u_{\circ},\varrho v\rangle_{\fh,\fh^{*}}-
\langle\varrho u,\tau v_{\circ}\rangle_{\fh^{*}\!,\fh}\,,\quad u,v\in\dom(S^{*})\,.
\ee
Thus, in particular, since $A_{0}\subset S^{*}$, $\dom(A_{0})=\ker(\varrho)$ and $A_{\Lambda}\subset S^{*}$,
\be\label{green2}
\langle u,A_{\Lambda}v\rangle_{\H}=\langle A_{0}u,v\rangle_{\H}+
\langle\tau u,\varrho v\rangle_{\fh,\fh^{*}}\,,\quad u\in\dom(A_0)\,,\ v\in\dom(A_{\Lambda})\,.
\ee
The identity \eqref{green2} is our starting point for the following abstract result about scattering for the couple $(A_{0},A_{\Lambda})$:
\begin{theorem}\label{scattering} Let $A_{\Lambda}$ be defined according to Theorem \ref{teo}. Suppose that there exists an open subset $\Sigma\subseteq\RE$ of full measure such that for any open and bounded $I$, $\overline{I}\subset \Sigma$, 
\be\label{in1}
\sup_{(\lambda,\epsilon)\in I\times (0,1)}\,\epsilon^{\frac12}\,{\|}G_{\lambda\pm i\epsilon}{\|}_{\fh^{*},\H}<+\infty\,,
\ee
and
\be\label{in2}
\sup_{(\lambda,\epsilon)\in I\times (0,1)}\,{\|}\Lambda_{\lambda\pm i\epsilon}{\|}_{\fh,\fh^{*}}<+\infty\,.
\ee
Then the strong limits 
$$
W_{\pm}(A_ {\Lambda},A_{0})
:=\text{s-}\lim_{t\to\pm\infty}e^{-itA_{\Lambda}}e^{itA_{0}}P^{0}_{ac}
\,,\qquad
W_{\pm}(A_{0},A_{\Lambda})
:=\text{s-}\lim_{t\to\pm\infty}e^{-itA_{0}}e^{itA_{\Lambda}}P^{\Lambda}_{ac}\,,
$$ 
exist everywhere in $\H$ and are complete, i.e. $$\ran(W_{\pm}(A_ {\Lambda},A_{0}))=\H^{\Lambda}_{ac}\,,\qquad\ran(W_{\pm}(A_{0},A_{\Lambda}))=\H^{0}_{ac}\,,$$
$$W_{\pm}(A_ {\Lambda},A_{0})^{*}=W_{\pm}(A_{0},A_ {\Lambda})\,,$$ where $P^{0}_{ac}$ and $P^{\Lambda}_{ac}$ are the orthogonal projectors onto $\H^{0}_{ac}$ and $\H^{\Lambda}_{ac}$, the absolutely continuous subspaces relative to $A_{0}$ and $A_{\Lambda}$ respectively. 
\end{theorem}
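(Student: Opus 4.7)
The strategy is to recast the hypotheses \eqref{in1}-\eqref{in2} as Kato-smoothness estimates for the two abstract trace maps $\tau$ and $\varrho$, and then to apply the Green identity \eqref{green2} in the spirit of the Cook-Kato-Kuroda method, in the variant of Schechter \cite{Sch},\cite{Sch-book} which avoids trace-class assumptions. From the definition $G_{z}=(\tau R^{0}_{\bar z})^{*}$ one has
\[
\tau R^{0}_{z}=G^{*}_{\bar z},
\]
so that \eqref{in1} is exactly the local Kato-smoothness of $\tau$ with respect to $A_{0}$ on $\Sigma$. Using the parametrization \eqref{domLambda} of $\dom(A_{\Lambda})$, the definition \eqref{varrho} of $\varrho$, and the identity \eqref{op}, a short direct computation gives the analogous formula
\[
\varrho R^{\Lambda}_{z}=\Lambda_{z}\,G^{*}_{\bar z},
\]
which, combined with \eqref{in1}, \eqref{in2} and the continuous embedding $\fh\hookrightarrow\fb$ noted in the remark after \eqref{Lambda2}, yields the local Kato-smoothness of $\varrho$ with respect to $A_{\Lambda}$ on $\Sigma$.

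The core step is then to differentiate, via the Green identity \eqref{green2}, the matrix element $\langle e^{itA_{\Lambda}}g,e^{itA_{0}}f\rangle_{\H}$: applying \eqref{green2} (after one conjugation in the duality) to $u=e^{itA_{0}}f\in\dom(A_{0})$ and $v=e^{itA_{\Lambda}}g\in\dom(A_{\Lambda})$ yields
\[
\frac{d}{dt}\langle e^{itA_{\Lambda}}g,e^{itA_{0}}f\rangle_{\H}=-i\,\langle\varrho\, e^{itA_{\Lambda}}g,\tau\, e^{itA_{0}}f\rangle_{\fh^{*}\!,\fh}.
\]
Integration over $[s,\sigma]$ and Cauchy-Schwarz in the $\fh^{*}$-$\fh$ pairing give
\[
\big|\langle g,(e^{-i\sigma A_{\Lambda}}e^{i\sigma A_{0}}-e^{-isA_{\Lambda}}e^{isA_{0}})f\rangle\big|\le\Big(\!\int_{s}^{\sigma}\!\|\varrho\, e^{iuA_{\Lambda}}g\|^{2}_{\fh^{*}}du\Big)^{\!\frac12}\!\Big(\!\int_{s}^{\sigma}\!\|\tau\, e^{iuA_{0}}f\|^{2}_{\fh}du\Big)^{\!\frac12}.
\]
By Kato's lemma, the two smoothness bounds of the previous paragraph translate into finiteness of $\int_{\RE}\|\tau e^{itA_{0}}f\|^{2}_{\fh}\,dt$ and $\int_{\RE}\|\varrho e^{itA_{\Lambda}}g\|^{2}_{\fh^{*}}\,dt$ whenever $f$, $g$ lie in the spectral subspaces of $A_{0}$, $A_{\Lambda}$ associated with any bounded open $I$ such that $\overline{I}\subset\Sigma$.

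It follows that $\langle g,e^{-itA_{\Lambda}}e^{itA_{0}}f\rangle$ is Cauchy as $t\to\pm\infty$ for $f$ and $g$ in such spectral subspaces. Since $\Sigma$ has full Lebesgue measure, exhausting it by an increasing sequence $I_{n}\uparrow\Sigma$ produces a dense subset of $\H^{0}_{ac}$ on which weak convergence holds; combined with the contractivity $\|e^{-itA_{\Lambda}}e^{itA_{0}}\|\le 1$ this upgrades to strong convergence on all of $\H^{0}_{ac}$ and yields the existence of $W_{\pm}(A_{\Lambda},A_{0})$. The same argument with $A_{0}$ and $A_{\Lambda}$ exchanged, which only amounts to swapping the roles of $\tau$ and $\varrho$ in \eqref{green2}, gives the existence of $W_{\pm}(A_{0},A_{\Lambda})$ on $\H^{\Lambda}_{ac}$; the adjoint relation, the identification of the ranges and the asymptotic completeness then follow by the standard intertwining/isometry scheme. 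The main technical obstacle I expect is the rigorous interpretation of $\tau\,e^{itA_{0}}f$ and $\varrho\,e^{itA_{\Lambda}}g$ for $f$, $g$ outside $\dom(A_{0})$ and $\dom(A_{\Lambda})$, together with the justification of the time-integrated Cauchy-Schwarz bound for such vectors: one views these quantities as the $L^{2}(\RE;\fh)$, respectively $L^{2}(\RE;\fh^{*})$, extensions provided by Kato's smoothness lemma, and passes the bound to the absolutely continuous subspaces by density through vectors of the form $R^{0}_{z}\phi$ and $R^{\Lambda}_{z}\phi$, which do belong to the relevant domains. This is precisely where Schechter's refinement of the Cook-Kato-Kuroda method, based on smoothness of factorized perturbations rather than trace-class conditions on resolvent differences, enters the proof.
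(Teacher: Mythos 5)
Your proposal is correct and follows essentially the same route as the paper: the identities $\tau R^{0}_{z}=G^{*}_{\bar z}$ and $\varrho R^{\Lambda}_{z}=\Lambda_{z}G^{*}_{\bar z}$ turn \eqref{in1}--\eqref{in2} into the local smoothness bounds for $\tau$ and $\varrho$, and the Green identity \eqref{green2} then feeds Schechter's two-space variant of the Cook--Kato--Kuroda method, exactly as in the paper's proof. The only difference is presentational: where you unpack the time-derivative and Cauchy--Schwarz estimate by hand (and rightly flag the delicate weak-to-strong and domain issues), the paper delegates precisely those steps to the cited results of Schechter (Lemmas 9.3.2--9.3.3, Theorems 9.4.1--9.4.2), concluding completeness by the same standard duality argument.
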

\begin{proof} At first let us show that \eqref{in1} and \eqref{in2} imply 
\be\label{in2*}
\sup_{(\lambda,\epsilon)\in I\times (0,1)}\left (\epsilon^{\frac12}\,{\|}\tau R^{0}_{\lambda\pm i\epsilon}{\|}_{\H,\fh}+\epsilon^{\frac12}\,{\|}\varrho R^{\Lambda}_{\lambda\pm i\epsilon}{\|}_{\H,\fh^{*}}\right)<+\infty\,.
\ee
By the definition of $G_{z}$ one has 
$$
{\|}\tau R^{0}_{\bar z}{\|}_{\H,\fh}={\|}G^{*}_{\bar z}{\|}_{\H,\fh}={\|}G_{z}{\|}_{\fh^{*},\H}\,.
$$
By \eqref{resolvent} and \eqref{reg}, one has
\begin{align*}
&\varrho R^{\Lambda}_{z}=\varrho (R^0_{z}+G_{z}\Lambda_{z}G_{\bar z}^{*})=
\varrho (R^0_{z}u+(G_{z}-G_{\circ})\Lambda_{z}G_{\bar z}^{*}u+G_{\circ}\Lambda_{z}G_{\bar z}^{*})=\Lambda_{z}G_{\bar z}^{*}
\end{align*}
and so 
\begin{align}\label{rhoR}
&{\|}\varrho R^{\Lambda}_{z} {\|}_{\H,\fh^{*}}={\|}\Lambda_{z}G_{\bar z}^{*}{\|}_{\H,\fh^{*}}\le
{\|}\Lambda_{z}{\|}_{\fh,\fh^{*}}{\|}G_{\bar z}^{*}{\|}_{\H,\fh}=
{\|}\Lambda_{z}{\|}_{\fh,\fh^{*}}{\|}G_{z}{\|}_{\fh^{*},\H}
\,.
\end{align}
%Thus \eqref{in1} and \eqref{in2} give \eqref{in2*}. 
Now we follow the same reasonings as in the proof of \cite[Theorem 9.4.2]{Sch-book} (see also \cite[Section 3]{Sch}). At first let us notice that in our setting equation \eqref{green2} agree with \cite[equation (9.4.1)]{Sch-book} whenever the operators  there denoted  by $A$ and $B$ correspond to $\tau$ and $\iota^{-1}\rho$ respectively, where $\iota:\fh\to\fh^{*}$ is the duality mapping given by the canonical isomorphism from $\fh$ onto $\fh^{*}$; therefore \eqref{in2*} corresponds to \cite[estimate (9.4.8)]{Sch-book}. Thus (compare with the first lines of the proof of \cite[Theorem 9.4.2]{Sch-book}), \cite[Lemma 9.3.3, Corollary 9.3.1 and Lemma 9.3.2]{Sch-book} give, for any $u^{0}_{c}\in \H^{0}_{c}$ and $u^{\Lambda}_{c}\in \H^{\Lambda}_{c}$,
\be\label{in1*}
\int_{-\infty}^{+\infty}\left(\|\tau e^{-itA_{0}}E_{0}(I)u^{0}_{c}\|^{2}_{\fh}+\|\varrho e^{-itA_{\Lambda}}E_{\Lambda}(I)u^{\Lambda}_{c}\|^{2}_{\fh^{*}}\right)dt<+\infty\,,
\ee
where $E_{0}$, $\H^{0}_{c}$ and $E_{\Lambda}$, $\H^{\Lambda}_{c}$ denote the spectral measures and the continuous subspaces relative to $A_{0}$ and $A_{\Lambda}$ respectively. According to \cite[Theorem 9.4.1]{Sch-book}, \eqref{in2*} and \eqref{in1*} give 
$$
E_{0}(\Sigma)u^{0}_{c}\in M_{\pm}(A_{\Lambda}, A_{0}):=\{u\in\H: \text{ $\lim_{t\to\pm\infty}e^{-itA_{\Lambda}}e^{itA_{0}}u$ exists} \}
$$
and
$$
E_{\Lambda}(\Sigma)u^{\Lambda}_{c}\in M_{\pm}(A_{0},A_{\Lambda}):=\{u\in\H: \text{ $\lim_{t\to\pm\infty}e^{-itA_{0}}e^{itA_{\Lambda}}u$ exists} \}\,.
$$
Thus, by $\H^{0}_{ac}\subseteq \H^{0}_{c}$, $\H^{\Lambda}_{ac}\subseteq \H^{\Lambda}_{c}$, and, since $\Sigma^{c}$ has Lebesgue measure zero, by $E_{0}(\Sigma)P^{0}_{ac}=P^{0}_{ac}$,  $E_{\Lambda}(\Sigma)P^{\Lambda}_{ac}=P^{\Lambda}_{ac}$, both the wave operators 
$W_{\pm}(A_ {\Lambda},A_{0})$ and $W_{\pm}(A_{0},A_ {\Lambda})$ exist; this  also gives    completeness (see e.g. \cite[Proposition 3, Section XI.3]{RS-III}). 
\end{proof}
\end{section}
\begin{section}{The Limiting Absorption Principle and Asymptotic completeness.}
Now we suppose that $\H=L^{2}(M,{\mathcal B},m)\equiv L^{2}(M)$. 
Given a measurable $\varphi:M\to (0,+\infty)$, we define the weighted $L^{2}$-space 
\be\label{Lphi}
L_{\varphi}^{2}(M,{\mathcal B},m)\equiv L_{\varphi}^{2}(M):=\{\text{$u:M\to\CO$ measurable} : \varphi u\in L^{2}(M)\}\,.
\ee
From now on $\langle\cdot,\cdot\rangle $ and  $\|\cdot\|$ denote the scalar product and the corresponding norm on  $L^{2}(M)$; $\langle\cdot,\cdot\rangle_{\varphi} $ and  $\|\cdot\|_{\varphi}$ denote the scalar product and the corresponding norm on  $L_{\varphi}^{2}(M)$. In the following we suppose $\varphi\ge 1$ $m$-a.e.; therefore 
$$
L^{2}_{\varphi}(M)\hookrightarrow L^{2}(M)\hookrightarrow L_{\varphi^{-1}}^{2}(M)\simeq L_{\varphi}^{2}(M)^{*}\,.
$$
Then we introduce the following hypotheses: \vskip5pt\par\noindent
(H1) both $A_{0}$ and $A_{\Lambda}$ are bounded from above and there exists $c_{1}>0$ such that the maps 
$z\mapsto R^0_{z}$ and $z\mapsto R^{\Lambda}_{z}$ are continuous on $\{z\in\CO:\text{Re$(z)>c_{1}$}\}$ to $\B(L_{\varphi}^{2}(M))$;
\vskip5pt\par\noindent
(H2) $A_{0}$ satisfies a Limiting Absorption Principle (LAP for short), i.e. there exists an open set $\Sigma_{0}\subseteq \RE$ of full measure such that for all $\lambda\in \Sigma_{0}$ the limits
\be\label{LAP1}
R^{0,\pm}_{\lambda}:=\lim_{\epsilon\downarrow 0}R^{0}_{\lambda\pm i\epsilon}
\ee
exist in $\B(L_{\varphi}^{2}(M),L_{\varphi^{-1}}^{2}(M))$ and the maps $z\mapsto R^{0,\pm}_{z}$, where $R^{0,\pm}_{z}\equiv R^{0}_{z}$ whenever $z\in\rho(A_{0})$, are continuous on $\Sigma_{0}\cup\CO_{\pm}$ to $\B(L_{\varphi}^{2}(M),L_{\varphi^{-1}}^{2}(M))$;
\vskip5pt\par\noindent
(H3) for any compact set $K\subset \Sigma_{0}$ there exists $c_{K}>0$ such that for any $\lambda\in K$ and for any $u\in L_{\varphi^{2}}^{2}(M)\cap\ker(R^{0,+}_{\lambda}-R^{0,-}_{\lambda})$ one has
\be\label{(H3)}
\|R^{0,\pm}_{\lambda}u\|\le c_{K}\|u\|_{\varphi^{2}}\,;
\ee 
\vskip5pt\par\noindent
(H4) there exist $c_{2}>0$, $\gamma>0$ and $k\in\NA$ such that  for all $\lambda>c_{2}$
\be\label{(H4)}
(R^{\Lambda}_{\lambda})^{k}-(R^{0}_{\lambda})^{k}\in {\mathfrak S}_{\infty}(L^{2}(M),L_{\varphi^{2+\gamma}}^{2}(M))\,.
\ee
\end{section}
Then $A_{\Lambda}$ satisfies a Limiting Absorption Principle as well:
\begin{theorem}\label{LAP} Suppose hypotheses (H1)-(H4) hold. Then $\Sigma_{\Lambda}:=\Sigma_{0}\cap \sigma_{p}(A_{\Lambda})$ is a (possibly empty) discrete set and for all $\lambda\in \Sigma_{0}\backslash \Sigma_{\Lambda}$ the limits
\be\label{LAP2}
R^{\Lambda,\pm}_{\lambda}:=\lim_{\epsilon\downarrow 0}R^{\Lambda}_{\lambda\pm i\epsilon}
\ee
exist in $\B(L_{\varphi}^{2}(M),L_{\varphi^{-1}}^{2}(M))$; the maps $z\mapsto R^{\Lambda,\pm}_{z}$, where $R^{\Lambda,\pm}_{z}\equiv R^{\Lambda}_{z}$ whenever $z\in\rho(A_{\Lambda})$, are continuous on $(\Sigma_{0}\backslash \Sigma_{\Lambda})\cup\CO_{\pm}$ to $\B(L_{\varphi}^{2}(M),L_{\varphi^{-1}}^{2}(M))$
\end{theorem}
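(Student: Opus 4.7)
The plan is an analytic-Fredholm argument in which (H4) supplies the compactness of a key operator-valued function, (H1)--(H2) supply its continuity up to the real axis, and (H3) is then used to identify the discrete exceptional set with $\Sigma_{0}\cap\sigma_{p}(A_{\Lambda})$.

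First I would combine (H1), the analyticity of $R^{\Lambda}_{z}$ on $\CO_{+}$, and the resolvent identity to propagate the compactness statement (H4) from the half-line $\{\lambda>c_{2}\}$ to the whole of $\CO_{+}$: the map $D_{z}:=(R^{\Lambda}_{z})^{k}-(R^{0}_{z})^{k}$ is holomorphic on $\CO_{+}$ with values in $\mathfrak{S}_{\infty}(L^{2}(M),L_{\varphi^{2+\gamma}}^{2}(M))$, and, via the continuous embeddings $L_{\varphi}^{2}\hookrightarrow L^{2}$ and $L_{\varphi^{2+\gamma}}^{2}\hookrightarrow L_{\varphi^{-1}}^{2}$, it can be regarded as a holomorphic compact-operator-valued map $L_{\varphi}^{2}\to L_{\varphi^{-1}}^{2}$.

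Next, the telescoping identity
\[
(R^{\Lambda}_{z})^{k}-(R^{0}_{z})^{k}=\sum_{j=0}^{k-1}(R^{\Lambda}_{z})^{j}\,(R^{\Lambda}_{z}-R^{0}_{z})\,(R^{0}_{z})^{k-1-j},
\]
combined with the Kre\u\i n formula \eqref{resolvent}, would let me recast the problem as a Fredholm equation of the form $(I-K(z))\,R^{\Lambda}_{z}=F(z)$, with $K(z)$ compact-valued and both $K$ and $F$ holomorphic on $\CO_{+}$ and continuous on $\CO_{+}\cup\Sigma_{0}$ by (H2). The analytic Fredholm theorem applied to $I-K(z)$ on the open connected set $\CO_{+}\cup\Sigma_{0}$, with invertibility ensured at points of large imaginary part (where $\|K(z)\|$ can be made small), would then produce a closed discrete exceptional set $\Sigma^{+}\subset\Sigma_{0}$, off which the representation $R^{\Lambda}_{z}=(I-K(z))^{-1}F(z)$ yields the claimed boundary limit $R^{\Lambda,+}_{\lambda}\in\B(L_{\varphi}^{2},L_{\varphi^{-1}}^{2})$ together with its continuity. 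The symmetric argument on $\CO_{-}\cup\Sigma_{0}$ produces $\Sigma^{-}$ and $R^{\Lambda,-}_{\lambda}$.

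The hard part will be to identify $\Sigma^{+}=\Sigma^{-}=\Sigma_{0}\cap\sigma_{p}(A_{\Lambda})=:\Sigma_{\Lambda}$. The inclusion $\Sigma_{0}\cap\sigma_{p}(A_{\Lambda})\subseteq\Sigma^{\pm}$ is essentially immediate, since an $L^{2}$ eigenfunction at $\lambda$ obstructs boundedness of $R^{\Lambda}_{\lambda\pm i\epsilon}$ as $\epsilon\downarrow 0$. The converse inclusion is where (H3) becomes crucial: given $\phi\neq 0$ with $(I-K(\lambda\pm i0))\phi=0$, one reconstructs via the Kre\u\i n formula a candidate eigenvector $u$ that a priori lies only in $L_{\varphi^{-1}}^{2}$; translating the Fredholm-kernel condition on $\phi$ through the algebraic identities relating $\Lambda$ and $G$ puts $u$ in $L_{\varphi^{2}}^{2}\cap\ker(R^{0,+}_{\lambda}-R^{0,-}_{\lambda})$, and the estimate \eqref{(H3)} then forces $u\in L^{2}(M)$. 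Green's identity \eqref{green2} would finally certify $u$ as an honest eigenfunction of $A_{\Lambda}$ at eigenvalue $\lambda$, completing the identification.
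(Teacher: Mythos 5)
Your overall strategy---an analytic-Fredholm/Kato--Kuroda argument in weighted spaces, with (H3) used to identify the exceptional set with eigenvalues---is indeed the philosophy behind the result; the paper itself does not carry it out but reduces the statement in a few lines to Renger's abstract theorems (his hypotheses (OP), (T1), (E1) are exactly (H1) and (H2)--(H4), and his Proposition 4.2 and Theorem 3.5 do the work). Measured against what such a self-contained proof actually requires, however, your sketch has gaps at precisely the technical heart. First, the claim that $D_{z}:=(R^{\Lambda}_{z})^{k}-(R^{0}_{z})^{k}$ is holomorphic on $\CO_{+}$ with values in ${\mathfrak S}_{\infty}(L^{2}(M),L^{2}_{\varphi^{2+\gamma}}(M))$ is unjustified: (H4) gives this only at real $\lambda>c_{2}$, and nothing in (H1)--(H4) even gives boundedness of $D_{z}$ into the weighted space for non-real $z$ (weighted mapping properties of $G_{z}\Lambda_{z}G^{*}_{\bar z}$ appear only later, under (H6), which is not assumed here). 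Holomorphy in $\B(L^{2}(M))$ lets you propagate compactness of $D_{z}$ as an operator on $L^{2}(M)$, but not in the weighted spaces where your Fredholm equation must live. Second, the reduction to a linear equation $(I-K(z))R^{\Lambda}_{z}=F(z)$ is not substantiated: the telescoping identity is nonlinear in the unknown (it contains $(R^{\Lambda}_{z})^{j}$), and building $K(z)$ from the Kre\u\i n formula \eqref{resolvent} would require control of $\Lambda_{z}$ up to the real axis, which at this stage is not available---indeed Lemma \ref{bound} deduces the boundary behaviour of $\Lambda_{z}$ from Theorem \ref{LAP}, so invoking it here would be circular. The working route (Renger's) fixes one real $\mu>c_{2}$, considers the bounded self-adjoint pair $T_{0}=(R^{0}_{\mu})^{k}$, $T=(R^{\Lambda}_{\mu})^{k}$, whose difference is the single fixed compact operator supplied by (H4), derives a LAP for $T_{0}$ from (H2) through identities of the type \eqref{RR}, runs the Fredholm argument on the second resolvent identity for $(T,T_{0})$, and only then transfers back to $A_{\Lambda}$; no $z$-dependent weighted compactness is ever needed.

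Third, even granting your Fredholm equation, the analytic Fredholm theorem applies on the open set $\CO_{+}$ and yields discreteness of the bad set there only; $\CO_{+}\cup\Sigma_{0}$ is not open, and continuity up to the axis gives merely that the real exceptional set is closed. The discreteness of $\Sigma_{\Lambda}$ asserted in the statement (no accumulation of embedded eigenvalues inside $\Sigma_{0}$) needs a separate compactness argument based on (H3)--(H4), which is part of Renger's Theorem 3.5 and is missing from your sketch. Your final step---using \eqref{(H3)} to upgrade a kernel element of the limiting Fredholm operator to a genuine $L^{2}$ eigenfunction---is the right use of (H3), but as it stands it rests on the unproven reduction above.
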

\begin{proof} Hypotheses (H1)-(H4) permit us to use the abstract results contained in \cite{Ren},  following the same argumentations provided in the proof of \cite[Theorem 4.2]{JST}:  hypotheses (T1) and (E1) in \cite[page 175]{Ren} correspond to our (H2), (H3) and (H4); then, by \cite[Proposition 4.2]{Ren}, the latter imply hypotheses (LAP), (E) in \cite[page 166]{Ren} and hypothesis (T) in \cite[page 168]{Ren}. In our setting (LAP), (E) and (T) correspond  respectively to (H2), 
$$
(R^{\Lambda}_{\lambda})^{k}-(R^{0}_{\lambda})^{k}\in {\mathfrak S}_{\infty}(L_{\varphi^{-1}}^{2}(M),L_{\varphi}^{2}(M))\,,
$$ 
and a technical variant of (H3). According to \cite[Theorem 3.5]{Ren}, the three hypotheses (LAP), (E) and (T), together with (H1) (i.e. hypothesis (OP) in \cite[page 165]{Ren}), give the thesis.  
\end{proof}
\begin{remark} In order to get Theorem \ref{LAP} one does not need to require $\Sigma_{0}$ to be a set of full measure. However that hypothesis is needed for next Theorem \ref{AC}.
\end{remark}
Since, as is well known, LAP implies absence of singular continuous spectrum (see e.g. \cite[Theorem 6.1]{Agm}, \cite[Corollary 4.7]{JST}), by (H2) and Theorem \ref{LAP}, one gets
\begin{corollary}\label{sc} Suppose hypotheses (H1)-(H4) hold. Then
\be
\sigma_{sc}(A_{0})=\sigma_{sc}(A_{\Lambda})=\emptyset\,.
\ee
Equivalently
\be
(L^{2}(M)^{0}_{pp})^{\perp}=L^{2}(M)^{0}_{ac}\,,\quad 
(L^{2}(M)^{\Lambda}_{pp})^{\perp}=L^{2}(M)^{\Lambda}_{ac}\,,
\ee
where  $L^{2}(M)^{0}_{pp}$, $(L^{2}(M)^{0}_{ac}$, $L^{2}(M)^{\Lambda}_{pp}$, $(L^{2}(M)^{\Lambda}_{ac}$ denotes the pure point and absolutely continuous subspaces of $L^{2}(M)$ with respect to $A_{0}$ and $A_{\Lambda}$. 
\end{corollary}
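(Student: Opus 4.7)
The plan is to derive both equalities from the classical principle that a Limiting Absorption Principle on a set of full Lebesgue measure forces the singular continuous spectrum (restricted to that set) to vanish. For $A_{0}$, hypothesis (H2) directly provides LAP on the open set $\Sigma_{0}\subseteq\RE$ of full measure, so the standard argument (exactly as in \cite[Theorem 6.1]{Agm} or \cite[Corollary 4.7]{JST}) yields $\sigma_{sc}(A_{0})=\emptyset$.

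For $A_{\Lambda}$, Theorem \ref{LAP} produces the boundary values $R^{\Lambda,\pm}_{\lambda}$ on $\Sigma_{0}\setminus\Sigma_{\Lambda}$. Since by that theorem $\Sigma_{\Lambda}$ is discrete, it is countable and hence of Lebesgue measure zero; combined with $\Sigma_{0}$ being of full measure, the set $\Sigma_{0}\setminus\Sigma_{\Lambda}$ is still of full measure. Applying the same general implication to $A_{\Lambda}$ then gives $\sigma_{sc}(A_{\Lambda})=\emptyset$.

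To obtain the equivalent formulation, one invokes the orthogonal Lebesgue decomposition of $L^{2}(M)$ induced by the spectral measure of the operator under consideration, into pure point, singular continuous, and absolutely continuous subspaces. Once $\sigma_{sc}(A_{0})$ and $\sigma_{sc}(A_{\Lambda})$ are shown to vanish, the corresponding singular continuous subspaces are trivial, so in each case the pure point and absolutely continuous subspaces are mutual orthogonal complements, i.e.\ $(L^{2}(M)^{0}_{pp})^{\perp}=L^{2}(M)^{0}_{ac}$ and $(L^{2}(M)^{\Lambda}_{pp})^{\perp}=L^{2}(M)^{\Lambda}_{ac}$.

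No substantive obstacle is expected in writing this proof: all the analytic content (the existence and continuity of the boundary resolvents in weighted spaces) is already encoded in (H2) and Theorem \ref{LAP}, and the corollary reduces to quoting a classical consequence of LAP together with the standard spectral decomposition of a self-adjoint operator. The only very mild point to check is that removing the discrete set $\Sigma_{\Lambda}$ from $\Sigma_{0}$ does not spoil the full-measure property, which is immediate.
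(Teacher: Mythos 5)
Your proposal is correct and follows essentially the same route as the paper: the paper's proof simply invokes the standard fact that a Limiting Absorption Principle on a full-measure set excludes singular continuous spectrum (citing Agmon and \cite[Corollary 4.7]{JST}), applied to $A_{0}$ via (H2) and to $A_{\Lambda}$ via Theorem \ref{LAP}, exactly as you do, with the equivalent formulation following from the Lebesgue spectral decomposition. Your remark that deleting the discrete (hence null) set $\Sigma_{\Lambda}$ preserves the full-measure property is the same implicit point the paper relies on.
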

Let us now introduce the further hypothesis H5, which, 
for future convenience, we split in two separate assumptions:\vskip8pt\noindent
(H5.1) \text{for any $z\in\rho(A_{0})$, $G_{z}^{*}=\tau R_{\bar z}^{0}:L^{2}_{\varphi}(M)\to\fh$ is surjective;}\vskip8pt\noindent
(H5.2) for any $\lambda\in\Sigma_{0}$, the limits 
\be\label{limG*}
(G_{\lambda}^{\pm})^{*}:=\lim_{\epsilon\downarrow 0}\tau R^{0}_{\lambda\mp i\epsilon}
\ee
exist in $\B(L^{2}_{\varphi}(M), \fh)$ and are surjective; moreover the maps $z\mapsto (G_{z}^{\pm})^{*}$, where $(G_{z}^{\pm})^{*}\equiv G_{z}^{*}$ whenever $z\in\rho(A_{0})$, are continuous on $\Sigma_{0}\cup\CO_{\mp}$ to $\B(L^{2}_{\varphi}(M), \fh)$.
\vskip8pt\noindent
\begin{remark}\label{3.4} By hypothesis (H5) and by duality, for any $\lambda\in\Sigma_{0}$, the limits 
\be\label{limG}
G_{\lambda}^{\pm}:=\lim_{\epsilon\downarrow 0}(\tau R^{0}_{\lambda\mp i\epsilon})^{*}
\ee
exist in $\B(\fh^{*},L^{2}_{\varphi^{-1}}(M))$ and are injective; moreover the maps $z\mapsto G_{z}^{\pm}$, where $G_{z}^{\pm}\equiv G_{z}$ whenever $z\in\rho(A_{0})$, are continuous on $\Sigma_{0}\cup\CO_{\pm}$ to $\B(\fh^{*},L^{2}_{\varphi^{-1}}(M))$.

\end{remark}
\begin{remark}\label{RMM} Here we recall the definition of {\it reduced minimum modulus} $\gamma(T)$ of a linear operator $T\in\B(X,Y)$, $T\not=0$: 
\be\label{min-mod}
\gamma(T):=\inf\{\|Tu\|_{Y}:\text{dist}_{X}(u,\ker(T))=1\}\,.
\ee  
By \cite[Theorem 5.2, page 231]{K}, $T$ has closed range if and only if $\gamma(T)>0$. Moreover, see \cite[Proposition 1.1]{Ap}, 
\be\label{cont}
\ker(T_{1})=\ker(T_{2})\quad\Longrightarrow\quad
|\gamma(T_{1})-\gamma(T_{2})|\le {\|}T_{1}-T_{2}{\|}_{X,Y}\,.
\ee 
\end{remark}

Then, by \eqref{LAP1}, \eqref{LAP2} and by the resolvent formula \eqref{resolvent}, one gets the following
\begin{lemma}\label{bound}
Suppose hypotheses (H1)-(H5) hold. Then, for any open and bounded $I$, $\overline{I}\subset \Sigma_{0}\backslash\Sigma_{\Lambda}$, one has 
\be\label{supLambda}
\sup_{(\lambda,\epsilon)\in I\times (0,1)}{\|}\Lambda_{\lambda\pm i\epsilon}{\|}_{\fh,\fh^{*}}<+\infty\,.
\ee
Moreover, for any $\lambda\in \Sigma_{0}\backslash\Sigma_{\Lambda}$, the limits 
\be\label{limLambda}
\Lambda^{\pm}_{\lambda}:=\lim_{\epsilon\downarrow 0}\Lambda_{\lambda\pm i\epsilon}\,.
\ee
exist in $\B(\fh,\fh^{*})$ and
\be\label{limRes}
R^{\Lambda,\pm}_{\lambda}=R^{0,\pm}_{\lambda}+G^{\pm}_{\lambda}\Lambda^{\pm}_{\lambda}(G^{\mp}_{\lambda})^{*}\,.
\ee
\end{lemma}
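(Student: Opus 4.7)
I would begin from the operator identity
$$
G_z\Lambda_z G^*_{\bar z}=R^\Lambda_z-R^0_z,
$$
which is just \eqref{resolvent} rearranged and may be read in $\B(L^{2}_{\varphi}(M),L^{2}_{\varphi^{-1}}(M))$ thanks to (H5.1) and its dual version in Remark \ref{3.4}. The plan is to invert $G^*_{\bar z}$ on the right and $G_z$ on the left by means of the reduced minimum modulus $\gamma(\cdot)$ of Remark \ref{RMM}. Surjectivity of $G^*_{\bar z}:L^{2}_{\varphi}(M)\to\fh$ (hypothesis (H5.1)) yields, for each $\phi\in\fh$ and each $\varepsilon>0$, a preimage $v_\phi\in L^{2}_{\varphi}(M)$ with $G^*_{\bar z}v_\phi=\phi$ and $\|v_\phi\|_\varphi\le(\gamma(G^*_{\bar z})^{-1}+\varepsilon)\|\phi\|_\fh$. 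By Banach's closed range theorem, $G_z=(G^*_z)^*:\fh^*\to L^{2}_{\varphi^{-1}}(M)$ is injective with closed range and $\gamma(G_z)=\gamma(G^*_z)>0$; solving $G_z x=(R^\Lambda_z-R^0_z)v_\phi$ for $x=\Lambda_z\phi\in\fh^*$ then yields
$$
\|\Lambda_z\|_{\fh,\fh^*}\le\gamma(G_z)^{-1}\gamma(G_{\bar z})^{-1}\,\|R^\Lambda_z-R^0_z\|_{L^{2}_{\varphi},L^{2}_{\varphi^{-1}}}.
$$

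To upgrade this to the uniform bound \eqref{supLambda}, I would combine Remark \ref{3.4} with Remark \ref{RMM}. Since $G_z$ (for $z\in\rho(A_0)$) and its boundary values $G^\pm_\lambda$ are all injective, they share the trivial kernel, so \eqref{cont} gives $|\gamma(G_z)-\gamma(G^\pm_\lambda)|\le\|G_z-G^\pm_\lambda\|_{\fh^*,L^{2}_{\varphi^{-1}}}$. As $\gamma(G^\pm_\lambda)>0$ (because $(G^\pm_\lambda)^*$ is surjective by (H5.2), which yields closed range of $G^\pm_\lambda$ via Banach's theorem) and $z\mapsto G^\pm_z$ is norm-continuous on $\Sigma_0\cup\CO_\pm$, a compactness argument on $\overline{I}\pm i[0,1]$ produces a uniform lower bound $\gamma_0>0$ for both $\gamma(G_z)$ and $\gamma(G_{\bar z})$. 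Coupled with the continuity of $R^{0,\pm}_z$ and $R^{\Lambda,\pm}_z$ afforded by (H2) and Theorem \ref{LAP}, this proves \eqref{supLambda}.

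For the existence of the boundary values, I would fix a bounded right inverse $J_\lambda:\fh\to L^{2}_{\varphi}(M)$ of $(G^-_\lambda)^*$ (available by surjectivity in (H5.2)) and set
$$
\Lambda^+_\lambda\phi:=(G^+_\lambda)^{-1}(R^{\Lambda,+}_\lambda-R^{0,+}_\lambda)J_\lambda\phi,
$$
with the analogous definition for $\Lambda^-_\lambda$. Well-posedness (independence of $J_\lambda$) reduces to $\ker((G^-_\lambda)^*)\subseteq\ker(R^{\Lambda,+}_\lambda-R^{0,+}_\lambda)$, which I would verify by taking $v$ in the former kernel and letting $z=\lambda+i\epsilon\to\lambda$ in $(R^\Lambda_z-R^0_z)v=G_z\Lambda_z G^*_{\bar z}v$: the right-hand side vanishes by \eqref{supLambda} and $\|G^*_{\bar z}v\|_\fh\to 0$, whereas the left-hand side tends to $(R^{\Lambda,+}_\lambda-R^{0,+}_\lambda)v$ by Theorem \ref{LAP}. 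Convergence $\Lambda_z\to\Lambda^+_\lambda$ in $\B(\fh,\fh^*)$ then follows by decomposing $\Lambda_z\phi=\Lambda_z(G^*_{\bar z}J_\lambda\phi)+\Lambda_z(\phi-G^*_{\bar z}J_\lambda\phi)$: the second summand is $o(1)$ uniformly for $\|\phi\|_\fh\le 1$ by \eqref{supLambda} and $\|(G^-_\lambda)^*-G^*_{\bar z}\|\to 0$, while the first equals $(G_z)^{-1}(R^\Lambda_z-R^0_z)J_\lambda\phi$ and is compared with $\Lambda^+_\lambda\phi$ by applying $G^+_\lambda$, exploiting its uniform lower bound $\gamma_0$ together with the norm convergences of $G_z$ and $R^\Lambda_z-R^0_z$. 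Identity \eqref{limRes} is then obtained by letting $\epsilon\downarrow 0$ in each factor of \eqref{resolvent}. The main hurdle is upgrading pointwise to operator-norm convergence in \eqref{limLambda}: this is precisely what dictates the choice of a fixed bounded linear right inverse $J_\lambda$ of $(G^-_\lambda)^*$ and demands uniformity of the lower bound $\gamma_0$ up to the real axis.
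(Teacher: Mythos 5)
Your argument is correct and rests on the same machinery as the paper's proof: surjectivity of $(G^{\pm}_{\lambda})^{*}$ from (H5) plus the closed range theorem gives injectivity and closed range of $G^{\pm}_{\lambda}$, hence a positive reduced minimum modulus, and the continuity property \eqref{cont} of Remark \ref{RMM}, compactness of $\overline{I}\pm i[0,1]$, and the LAP bounds from (H2) and Theorem \ref{LAP} then yield \eqref{supLambda}; the paper obtains the very same estimate $\gamma_{I}^{\pm}\gamma_{I}^{\mp}\,\|\Lambda_{\lambda\pm i\epsilon}\|_{\fh,\fh^{*}}\le\|R^{\Lambda}_{\lambda\mp i\epsilon}-R^{0}_{\lambda\mp i\epsilon}\|_{L^{2}_{\varphi},L^{2}_{\varphi^{-1}}}$ by a duality sandwich (using \eqref{Lambda1} to flip adjoints) rather than by choosing near-optimal preimages, which is only a cosmetic difference. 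Where you genuinely diverge is the final step: the paper proves that $\epsilon\mapsto\Lambda_{\lambda\pm i\epsilon}$ is Cauchy in $\B(\fh,\fh^{*})$ by first replacing $G_{\lambda\pm i\epsilon}$ and $G^{*}_{\lambda\mp i\epsilon}$ by their boundary values (possible thanks to \eqref{supLambda}) and then applying the lower bound $\gamma(G^{+}_{\lambda})\gamma(G^{-}_{\lambda})$ to the differences $\Lambda_{\lambda\pm i\epsilon_{1}}-\Lambda_{\lambda\pm i\epsilon_{2}}$, whereas you construct the limit candidate explicitly as $(G^{+}_{\lambda})^{-1}(R^{\Lambda,+}_{\lambda}-R^{0,+}_{\lambda})J_{\lambda}$ with $J_{\lambda}$ a bounded right inverse of $(G^{-}_{\lambda})^{*}$ (legitimate here, all spaces being Hilbert). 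Both conclusions use the same lower-bound mechanism; the Cauchy route buys you freedom from having to legitimize the candidate, and on that score one point is left implicit in your write-up: to apply $(G^{+}_{\lambda})^{-1}$ you must check that $(R^{\Lambda,+}_{\lambda}-R^{0,+}_{\lambda})J_{\lambda}\phi$ lies in $\ran(G^{+}_{\lambda})$. This does hold — $(R^{\Lambda}_{z}-R^{0}_{z})J_{\lambda}\phi=G_{z}\Lambda_{z}G^{*}_{\bar z}J_{\lambda}\phi$ differs from the element $G^{+}_{\lambda}\Lambda_{z}G^{*}_{\bar z}J_{\lambda}\phi$ of $\ran(G^{+}_{\lambda})$ by $(G_{z}-G^{+}_{\lambda})\Lambda_{z}G^{*}_{\bar z}J_{\lambda}\phi\to 0$, and $\ran(G^{+}_{\lambda})$ is closed — but it should be stated, since otherwise your definition of $\Lambda^{+}_{\lambda}$ is not known to make sense. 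With that detail added, your proof of \eqref{supLambda}, \eqref{limLambda} and \eqref{limRes} is complete and equivalent in strength to the paper's.
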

\begin{proof} 
Since $(G_{z}^{\pm})^{*}$ are surjective by hypotheses (H5), $G_{z}^{\pm}$  are injective and have closed range by the closed range theorem. Hence, by Remark \ref{RMM}, $\gamma(G^{\pm}_{z})>0$, and, since $\ker(G^{\pm}_{z})=\{0\}$ for any $z\in \Sigma_{0}\cup\CO_{\pm}$,  the maps $z\mapsto \gamma(G^{\pm}_{z})$ are continuous on $\Sigma_{0}\cup\CO_{\pm}$ to $(0,+\infty)$ by \eqref{cont}. Hence, by \eqref{min-mod}, for any open and bounded $I$, $\overline{I}\subset \Sigma_{0}$, for any $(\lambda,\epsilon)\in I\times(0,1)$  and for any $\phi\in \fh^{*}$, there exist $\gamma^{\pm}_{I}>0$ such that 
\begin{align*}
&\|G_{\lambda\pm i\epsilon}\phi\|_{\varphi^{-1}}
\ge \gamma_{I}^{\pm}\|\phi\|_{\fh^{*}}\,.
\end{align*}
Therefore, by \eqref{resolvent} and \eqref{Lambda1},
\begin{align*}
{\|}R_{\lambda\mp i\epsilon}^{\Lambda}-R^{0}_{\lambda\mp i\epsilon}{\|}_{L^{2}_{\varphi},L^{2}_{\varphi^{-1}}}=&
{\|}G_{\lambda\mp i\epsilon}\Lambda_{\lambda\mp i\epsilon}G^{*}_{\lambda\pm i\epsilon}{\|}_{L^{2}_{\varphi},L^{2}_{\varphi^{-1}}}\ge 
\gamma_{I}^{\mp}{\|}\Lambda_{\lambda\mp i\epsilon}G^{*}_{\lambda\pm i\epsilon}{\|}_{L^{2}_{\varphi},\fh^{*}}\\
=&
\gamma^{\mp}_{I}{\|}G_{\lambda\pm i\epsilon}\Lambda_{\lambda\pm i\epsilon}{\|}_{\fh,L^{2}_{\varphi^{-1}}}\ge 
\gamma_{I}^{\mp}\gamma_{I}^{\pm}{\|}\Lambda_{\lambda\pm i\epsilon}{\|}_{\fh,\fh^{*}}\,.
\end{align*} 
Then, by hypothesis (H2) and Theorem \ref{LAP}, one gets \eqref{supLambda}. \par
By hypothesis (H2), Theorem \ref{LAP}, hypothesis (H5.2), Remark \ref{3.4} and \eqref{supLambda}, one obtains the existence and equality in $\B(L^{2}_{\varphi}(\RE^{n}),L^{2}_{\varphi^{-1}}(\RE^{n}))$ of the limits
\be\label{LGR}
\lim_{\epsilon\downarrow 0}G_{\lambda\pm i\epsilon}\Lambda_{\lambda\pm i\epsilon}
G^{*}_{\lambda\mp i\epsilon}=
\lim_{\epsilon\downarrow 0}G^{\pm}_{\lambda}\Lambda_{\lambda\pm i\epsilon}
(G^{\mp}_{\lambda})^{*}=R_{\lambda}^{\Lambda,\pm}-R^{0,\pm}_{\lambda}\,.
\ee
Then, proceeding in a similar way as above, one has 
\begin{align*}
&{\|}G^{\pm}_{\lambda}(\Lambda_{\lambda\pm i\epsilon_{1}}-\Lambda_{\lambda\pm i\epsilon_{2}})(G^{\mp}_{\lambda})^{*}{\|}_{L^{2}_{\varphi},L^{2}_{\varphi^{-1}}}\ge 
\gamma(G^{\pm}_{\lambda})\,{\|}(\Lambda_{\lambda\pm i\epsilon_{1}}-\Lambda_{\lambda\pm i\epsilon_{2}})(G^{\mp}_{\lambda})^{*}{\|}_{L^{2}_{\varphi},\fh^{*}}\\
=&\gamma(G^{\pm}_{\lambda})\,{\|}G^{\mp}_{\lambda}(\Lambda_{\lambda\mp i\epsilon_{1}}-\Lambda_{\lambda\mp i\epsilon_{2}}){\|}_{\fh,L^{2}_{\varphi^{-1}}}
\ge \gamma(G^{\pm}_{\lambda})\gamma(G^{\mp}_{\lambda})\,{\|}\Lambda_{\lambda\mp i\epsilon_{1}}-\Lambda_{\lambda\mp i\epsilon_{2}}{\|}_{\fh,\fh^{*}}\,.
\end{align*}
This and \eqref{LGR} give the existence of the limits \eqref{limLambda} and then the limit resolvent formulae \eqref{limRes}.
\end{proof}
Our last hypothesis is the following: \vskip8pt\noindent
(H6) for any $z\in\rho(A_{0})$, $G_{z}\in\B(\fh^{*},L^{2}_{\varphi}(M))$.
\vskip8pt\noindent
\begin{remark} By duality, hypothesis (H6) is equivalent to requiring that $\tau R^{0}_{z}$  has a bounded extension on $L^{2}_{\varphi^{-1}}(M)$ to $\fh$ for any $z\in\rho(A_{0})$. 
\end{remark}
\begin{remark}\label{R(H6)} By \eqref{resolvent} and (H6), if $R^{0}_{z}\in \B(L^{2}_{\varphi}(M))$ for any $z\in\CO$ such that $\text{Re}(z)>c_{1}$, then the same is true for $R^{\Lambda}_{z}$. Thus the maps $z\mapsto R^{0}_{z}$ and $z\mapsto R^{\Lambda}_{z}$ satisfy hypothesis (H1) (they are continuous since pseudo-resolvents in $\B(L^{2}_{\varphi}(M))$).
\end{remark}
Then the previous results lead to 
\begin{theorem}\label{AC} Suppose that the couple $(A_{0},A_{\Lambda})$ satisfies hypotheses (H1)-(H6). Then asymptotic completeness holds, i.e. the strong limits 
$$
W_{\pm}(A_ {\Lambda},A_{0})
:=\text{s-}\lim_{t\to\pm\infty}e^{-itA_{\Lambda}}e^{itA_{0}}P^{0}_{ac}
\,,\qquad
W_{\pm}(A_{0},A_{\Lambda})
:=\text{s-}\lim_{t\to\pm\infty}e^{-itA_{0}}e^{itA_{\Lambda}}P^{\Lambda}_{ac}\,,
$$ 
exist everywhere in $L^{2}(M)$, $$\ran(W_{\pm}(A_ {\Lambda},A_{0}))=(L^{2}(M)^{\Lambda}_{pp})^{\perp}\,,\qquad\ran(W_{\pm}(A_{0},A_{\Lambda}))=(L^{2}(M)^{0}_{pp})^{\perp}\,,
$$
$$ W_{\pm}(A_ {\Lambda},A_{0})^{*}=W_{\pm}(A_{0},A_ {\Lambda})\,,$$ 
where $P^{0}_{ac}$ and $P^{\Lambda}_{ac}$ are the orthogonal projectors onto the absolutely continuous subspaces $L^{2}(M)^{0}_{ac}$ and $L^{2}(M)^{\Lambda}_{ac}$. 
\end{theorem}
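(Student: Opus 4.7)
The plan is to reduce Theorem \ref{AC} to Theorem \ref{scattering} applied on the open full-measure set $\Sigma := \Sigma_0 \setminus \Sigma_\Lambda$, and then to use Corollary \ref{sc} to rewrite the ranges. Under (H1)-(H4), Theorem \ref{LAP} yields that $\Sigma_\Lambda$ is a discrete set, so $\Sigma$ is open and of full Lebesgue measure; the same block of hypotheses, via Corollary \ref{sc}, gives $\sigma_{sc}(A_0) = \sigma_{sc}(A_\Lambda) = \emptyset$, and hence $(L^2(M)^{0/\Lambda}_{pp})^\perp = L^2(M)^{0/\Lambda}_{ac}$. Thus, once the two estimates \eqref{in1}-\eqref{in2} of Theorem \ref{scattering} are verified on every bounded open $I$ with $\overline I \subset \Sigma$, an appeal to Theorem \ref{scattering} will deliver exactly the existence, completeness, range, and adjoint statements claimed here.

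Condition \eqref{in2} is immediate: it is precisely \eqref{supLambda} of Lemma \ref{bound}, already established under (H1)-(H5). The genuinely new step is \eqref{in1}. I would derive it from the identity, obtained by taking $w = \bar z$ in \eqref{Lambda2} and using \eqref{Lambda1},
\begin{equation*}
\Lambda^*_z - \Lambda_z \,=\, 2i\,\mathrm{Im}(z)\,\Lambda^*_z\, G^*_z G_z\,\Lambda_z,\qquad z = \lambda \pm i\epsilon \in Z_\Lambda,
\end{equation*}
which, together with the uniform bound on $\|\Lambda_{\lambda \pm i\epsilon}\|_{\fh,\fh^*}$ from Lemma \ref{bound}, immediately yields $\epsilon^{1/2}\|G_{\lambda\pm i\epsilon}\Lambda_{\lambda\pm i\epsilon}\|_{\fh,\H} \le C_I$. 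To strip the factor $\Lambda_{\lambda \pm i\epsilon}$ and conclude, I would invoke hypothesis (H6): off the real axis $G_z \in \B(\fh^*, L^2_\varphi(M)) \hookrightarrow \B(\fh^*, \H)$, while on the real axis its boundary values live in $\B(\fh^*, L^2_{\varphi^{-1}}(M))$ by Remark \ref{3.4}; interpolating between these two regimes with the positive lower bound $\gamma(G^\pm_\lambda)$ coming from the closed-range argument in the proof of Lemma \ref{bound} should produce the required $O(\epsilon^{-1/2})$ behaviour of $\|G_{\lambda \pm i\epsilon}\|_{\fh^*, \H}$. An equivalent route is to prove a weighted variant of Theorem \ref{scattering} directly on $L^2_\varphi(M)$, where the boundary limits of $G_z$ and $\Lambda_z$ furnished by (H5) and Lemma \ref{bound} give Kato smoothness of $\tau$ with respect to $A_0$ and of $\varrho$ with respect to $A_\Lambda$ on the dense subspace $L^2_\varphi(M) \cap \H^{0/\Lambda}_{ac}$, so that the wave operators extend by isometric continuity to all of $\H^{0/\Lambda}_{ac}$.

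The principal obstacle is precisely this passage from $\|G_z \Lambda_z\|$ to $\|G_z\|$: the naive graph-norm estimate gives only $\|G_{\lambda \pm i\epsilon}\|_{\fh^*, \H} = O(\epsilon^{-1})$, a factor $\epsilon^{-1/2}$ too large for Schechter's Kato-smoothness criterion. Hypothesis (H6)—equivalently, the bounded extension of $\tau R^0_z$ from $L^2_{\varphi^{-1}}(M)$ to $\fh$—is calibrated exactly to close this gap, in tandem with the surjectivity and closed-range content of (H5); with \eqref{in1} in hand, Theorem \ref{scattering} finishes the proof.
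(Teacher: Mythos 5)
Your overall scheme is the right one and matches the paper's: reduce to Theorem \ref{scattering} on $\Sigma=\Sigma_{0}\backslash\Sigma_{\Lambda}$, get \eqref{in2} from \eqref{supLambda} in Lemma \ref{bound}, and convert completeness into asymptotic completeness via Corollary \ref{sc}. The gap is in your verification of \eqref{in1}. The identity you derive from \eqref{Lambda1}--\eqref{Lambda2}, namely $\Lambda_{z}^{*}-\Lambda_{z}=2i\,\mathrm{Im}(z)\,\Lambda_{z}^{*}G_{z}^{*}G_{z}\Lambda_{z}$, is correct, but it only yields $\epsilon^{1/2}\|G_{z}\Lambda_{z}\|_{\fh,\H}\le C_{I}$, i.e.\ a bound on $G_{z}\Lambda_{z}$, whereas \eqref{in1} is a bound on $G_{z}$ alone (equivalently on $\tau R^{0}_{\bar z}$), which is what Schechter's criterion needs on the $A_{0}$ side. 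The factor $\Lambda_{z}$ cannot be ``stripped'': $\Lambda_{z}$ is in general neither injective nor boundedly invertible (e.g.\ the families \eqref{Gamma} with a nontrivial projection $\Pi$ have a kernel), so no division is available. The proposed repair by ``interpolation'' does not work either: $\gamma(G^{\pm}_{\lambda})$ is a \emph{lower} bound, and in the norm $\B(\fh^{*},L^{2}_{\varphi^{-1}})$, so it cannot produce an upper bound in $\B(\fh^{*},\H)$; moreover (H6) gives $G_{z}\in\B(\fh^{*},L^{2}_{\varphi})$ for each fixed $z\in\rho(A_{0})$ with no uniformity as $\epsilon\downarrow 0$, so there are no two uniform ``regimes'' to interpolate between. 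The same objection applies to the alternative route you sketch: Kato smoothness of $\tau$ relative to $A_{0}$ \emph{is} the bound \eqref{in1}, and the existence of the boundary values \eqref{limG*} in $\B(L^{2}_{\varphi},\fh)$ does not by itself imply it in $\B(\H,\fh)$.

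The missing mechanism (the paper's actual argument) stays entirely at the level of $G$ and uses the spectral gap: pick $\mu\in\rho(A_{0})\cap\RE$ and write, for $z=\lambda\pm i\epsilon$,
\begin{equation*}
2\,|\mathrm{Im}(z)|\,\|G_{\bar z}\phi\|^{2}
=\left|\langle\tau(G_{\bar z}-G_{z})\phi,\phi\rangle_{\fh,\fh^{*}}\right|
=\left|(\mu-z)\langle G_{z}^{*}G_{\mu}\phi,\phi\rangle_{\fh,\fh^{*}}-(\mu-\bar z)\langle G_{\bar z}^{*}G_{\mu}\phi,\phi\rangle_{\fh,\fh^{*}}\right|,
\end{equation*}
using \eqref{Gzw} and \eqref{Gwz} to re-express both differences through the fixed reference point $\mu$. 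Then $\|G_{\lambda\pm i\epsilon}^{*}\|_{L^{2}_{\varphi},\fh}=\|\tau R^{0}_{\lambda\mp i\epsilon}\|_{L^{2}_{\varphi},\fh}$ is bounded uniformly on $\overline I\subset\Sigma_{0}$ by (H5.2) (the limits exist and depend continuously on $\lambda$), while (H6) supplies the single finite constant $\|G_{\mu}\|_{\fh^{*},L^{2}_{\varphi}}$; together these give $\epsilon\,\|G_{\lambda\pm i\epsilon}\|^{2}_{\fh^{*},\H}\le C_{I}$, which is \eqref{in1} (this is \eqref{eps} in the paper). So the ingredients you identified, (H5.2) and (H6), are indeed the right ones, but the way you combine them does not close the estimate; the fixed real reference point $\mu$ in the spectral gap, not an interpolation in $z$, is what makes (H6) usable.
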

\begin{proof} By Theorem \ref{scattering}, to get completeness we need to show that \eqref{in1} and \eqref{in2} hold true. Then, asymptotic completeness is consequence of Corollary \ref{sc}. The bound \eqref{in2} is given in Lemma \ref{bound} and so we just need to prove  the bound \eqref{in1}. Let $\mu\in \rho(A_{0})\cap\RE$; by \eqref{Gzw}, \eqref{Gwz} and \eqref{reg}, one has
\begin{align*}
&|\text{Im}(z)|\,\|G_{\bar z}\phi\|^{2}
=\left|\text{Im}(z)\langle G_{\bar z}^{*}G_{\bar z}\phi,\phi \rangle_{\fh,\fh^{*}}\right|=
\frac{1}2\,\left|\langle\tau(G_{\bar z}-G_{z})\phi,\phi\rangle_{\fh,\fh^{*}}\right|\\
=&\frac{1}2\,\left|\big(\langle\tau(G_{\bar z}-G_{\mu})\phi,\phi\rangle_{\fh,\fh^{*}}-
\langle\tau(G_{z}-G_{\mu})\phi,\phi\rangle_{\fh^{*}\!,\fh}\big)\right|\\
=&\frac{1}2\,\left|\big((\mu- z)\langle G_{z}^{*}G_{\mu}\phi,\phi\rangle_{\fh,\fh^{*}}-(\mu-\bar z)\langle G_{\bar z}^{*}G_{\mu}\phi,\phi\rangle_{\fh,\fh^{*}}\big)\right|\\
\le &\frac{1}2\,|\mu- z|\,\big({\|} G_{z}^{*}{\|}_{L^{2}_{\varphi},\fh}+
{\|} G_{\bar z}^{*}{\|}_{L^{2}_{\varphi},\fh}\big){\|} G_{\mu}{\|}_{\fh^{*}\!\!,L^{2}_{\varphi}}\|\phi\|^{2}_{\fh^{*}}\,.
\end{align*}
Therefore
\begin{align}\label{eps}
\epsilon\,{\|}G_{\lambda\pm i\epsilon}{\|}_{\fh^{*}\!\!,L^{2}}^{2}
\le \frac12\left(|\mu-\lambda|+\epsilon\right){\|} G_{\mu}{\|}_{\fh^{*}\!\!,L^{2}_{\varphi}}\,\big({\|} \tau R^{0}_{\lambda+i\epsilon}{\|}_{L^{2}_{\varphi},\fh}+
{\|} \tau R^{0}_{\lambda-i\epsilon}{\|}_{L^{2}_{\varphi},\fh}\big)
\end{align}
and the bound \eqref{in1} is consequence of hypotheses (H5.2) and (H6).
\end{proof}
\begin{section}{The Scattering Matrix.} 
According to Theorem \ref{AC}, under hypotheses (H1)-(H6), the scattering operator 
$$
S_{\Lambda}:=W_{+}(A_{\Lambda},A_{0})^{*}W_{-}(A_{\Lambda},A_{0})
$$
is a well defined unitary map. Given a direct integral representation of $L^{2}(M)^{0}_{ac}$ with respect to the spectral measure of the absolutely continuous component of $A_{0}$  (see e.g. \cite[Section 4.5.1]{BW} ), i.e. a unitary map 
\be\label{F0}
F_{0}:L^{2}(M)^{0}_{ac}\to\int^{\oplus}_{\sigma_{ac}(A_{0})}(L^{2}(M)^{0}_{ac})_{\lambda}\,d\eta(\lambda)
\ee
which diagonalizes the absolutely continuous component of $A_{0}$, 
%i.e. such that $F_{0}A_{0}F_{0}^{*}u_{\lambda}=\lambda u_{\lambda}$,  
we define the scattering matrix  $$S^{\Lambda}_{\lambda}:(L^{2}(M)^{0}_{ac})_{\lambda}\to (L^{2}(M)^{0}_{ac})_{\lambda}$$ by the relation (see e.g. \cite[Section 9.6.2]{BW})
$$
F_{0}S_{\Lambda}F_{0}^{*}u_{\lambda}=S^{\Lambda}_{\lambda}u_{\lambda}
%\,,\qquad u_{\lambda}\in (L^{2}(M)^{0}_{ac})_{\lambda}
\,.
$$
Now, following the same scheme as in \cite[Remark 5.7]{JST}, which uses the Birman-Kato invariance principle and the Birman-Yafaev general scheme in stationary scattering theory (see e.g. \cite{BY}, \cite{Y}, \cite{Y1}), we provide an explicit relation between  $S^{\Lambda}_{\lambda}$ and $\Lambda^{+}_{\lambda}$. \par 
Given $\mu\in \rho(A_{0})\cap\rho(A_{\Lambda})$, we consider the scattering couple $(R^{\Lambda}_{\mu}, R^{0}_{\mu})$ and the strong limits 
$$
W_{\pm}(R^{\Lambda}_{\mu},R^{0}_{\mu})
:=\text{s-}\lim_{t\to\pm\infty}e^{-itR^{\Lambda}_{\mu}}e^{itR^{0}_{\mu}}P^{\mu}_{ac}\,,
$$ 
where $P^{\mu}_{ac}$ is the orthogonal projector onto the absolutely continuous subspace of $R^{0}_{\mu}$; we prove below that such limits exist everywhere in $L^{2}(M)$. Let $S^{\mu}_{\Lambda}$ the corresponding scattering operator  
$$
S_{\Lambda}^{\mu}:=W_{+}(R^{\Lambda}_{\mu},R^{0}_{\mu})^{*}W_{-}(R^{\Lambda}_{\mu},R^{0}_{\mu})\,.
$$
Using the unitary operator $F_{0}^{\mu}$ which diagonalizes the absolutely continuous component of $R^{0}_{\mu}$, i.e. $(F^{\mu}_{0}u)_{\lambda}:=\frac1\lambda(F_{0}u)_{\mu-\frac1\lambda}$, $\lambda\not=0$ such that $\mu-\frac1\lambda\in \sigma_{ac}(A_{0})$, one defines the scattering  matrix  $$S^{\Lambda,\mu}_{\lambda}:(L^{2}(M)^{0}_{ac})_{\mu-\frac1\lambda}\to (L^{2}(M)^{0}_{ac})_{\mu-\frac1\lambda}$$  corresponding to the scattering operator $S^{\mu}_{\Lambda}$ by the relation
$$
F^{\mu}_{0}S^{\mu}_{\Lambda}(F^{\mu}_{0})^{*}u^{\mu}_{\lambda}=S^{\Lambda,\mu}_{\lambda}u^{\mu}_{\lambda}
%\,,\qquad u^{\mu}_{\lambda}\in (L^{2}(M)^{0}_{ac})_{\mu-\frac1\lambda}
\,.
$$ 
Before stating the next results, let us notice the relations 
\be\label{RR}
\left(-R_{\mu}^{0}+z\right)^{-1}=\frac1z\,\left(\uno+\frac1{z}\,R^{0}_{\mu-\frac1z}\right)\,,
\quad \left(-R_{\mu}^{\Lambda}+z\right)^{-1}=\frac1z\,\left(\uno+\frac1{z}\,R^{\Lambda}_{\mu-\frac1z}\right)\,,
\ee
Therefore, by (H2) and Theorem \ref{LAP}, the limits 
\be\label{RRlim1}
\left(-R_{\mu}^{0}+(\lambda\pm i0)\right)^{-1}:=\lim_{\epsilon\downarrow 0}\left(-R_{\mu}^{0}+(\lambda\pm i\epsilon)\right)^{-1}\,,\quad \lambda\not=0\,,\ \mu-\frac1\lambda\in\Sigma_{0}\,,
\ee 
\be\label{RRlim2}
\left(-R_{\mu}^{\Lambda}+(\lambda\pm i0)\right)^{-1}:=
\lim_{\epsilon\downarrow 0}\left(-R_{\mu}^{\Lambda}+(\lambda\pm i\epsilon)\right)^{-1}\,,
\quad \lambda\not=0\,,\ \mu-\frac1\lambda\in\Sigma_{0}\backslash\Sigma_{\Lambda}\,,
\ee
exist in $\B(L^{2}_{\varphi}(M),L^{2}_{\varphi^{-1}}(M))$.
\begin{theorem}\label{BK}
Suppose that the couple $(A_{0},A_{\Lambda})$ satisfies hypotheses (H1)-(H6). Then the strong limits 
\be\label{WR}
W_{\pm}(R^{\Lambda}_{\mu},R^{0}_{\mu})
:=\text{s-}\lim_{t\to\pm\infty}e^{-itR^{\Lambda}_{\mu}}e^{itR^{0}_{\mu}}P^{\mu}_{ac}
\ee
exist everywhere in $L^{2}(M)$. %, where $P^{\mu}_{ac}$ is the orthogonal projector onto the absolutely continuous subspace of $R^{0}_{\mu}$. 
Moreover, for any  $\lambda\not=0$ such that $\mu-\frac1\lambda\in \sigma_{ac}(A_{0})\cap(\Sigma_{0}\backslash\Sigma_{\Lambda})$, one has
\be\label{S1}
S^{\Lambda,\mu}_{\lambda}=\uno-2\pi i\,L^{\mu}_{\lambda}
\Lambda_{\mu}\big(\uno+G^{*}_{\mu}\left(-R_{\mu}^{\Lambda}+(\lambda+ i0)\right)^{-1}G_{\mu}\Lambda_{\mu}\big)
%\Lambda^{+}_{\mu-\frac1\lambda}
(L^{\mu}_{\lambda})^{*}\,,
\ee
where
\be\label{SR}
L^{\mu}_\lambda: \fh^{*}\to(L^{2}(M)^{0}_{ac})_{\mu-\frac1\lambda}\,,\quad 
L^{\mu}_{\lambda}\phi:=\frac1\lambda(F_{0}G_{\mu}\phi)_{\mu-\frac1\lambda}\,.
\ee
\end{theorem}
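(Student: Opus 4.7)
The plan is to deduce Theorem \ref{BK} from Theorem \ref{AC} via two classical devices: Birman's invariance principle (for existence of the resolvent wave operators) and the Birman--Yafaev stationary representation of the scattering matrix (for the explicit formula).

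\textbf{Existence of \eqref{WR}.} By (H1) both $A_0$ and $A_\Lambda$ are bounded from above, so one may choose $\mu\in\RE$ strictly larger than both spectral bounds. Then $\varphi(t):=(\mu-t)^{-1}$ is smooth and strictly increasing on $(-\infty,\mu)$, which contains both $\sigma(A_0)$ and $\sigma(A_\Lambda)$; moreover $\varphi(A_0)=R^0_\mu$ and $\varphi(A_\Lambda)=R^\Lambda_\mu$. The invariance principle (see e.g. \cite{BY,Y}) then transfers the existence and completeness of $W_\pm(A_\Lambda,A_0)$ supplied by Theorem \ref{AC} to
\begin{equation*}
W_\pm(R^\Lambda_\mu,R^0_\mu)=W_\pm(A_\Lambda,A_0),
\end{equation*}
with the same $\pm$ sign (because $\varphi'>0$), and consequently $S^\mu_\Lambda=S_\Lambda$.

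\textbf{The stationary formula \eqref{S1}.} Setting $H_0:=R^0_\mu$, $H:=R^\Lambda_\mu$, the resolvent formula \eqref{resolvent} at $z=\mu$ factorises the perturbation as
\begin{equation*}
V:=H-H_0=G_\mu\Lambda_\mu G^*_\mu.
\end{equation*}
In the Birman--Yafaev stationary representation (\cite{BY,Y,Y1}) the on-shell scattering matrix at a spectral value $\lambda$ of $H_0$ has the form
\begin{equation*}
S^{\Lambda,\mu}_\lambda=\uno-2\pi i\,\Gamma^\mu_0(\lambda)\,T(\lambda+i0)\,\Gamma^\mu_0(\lambda)^*,\qquad T(z)=V-V(H-z)^{-1}V,
\end{equation*}
with $\Gamma^\mu_0(\lambda)$ the fibre-restriction map implicit in the diagonalisation $F^\mu_0$. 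Using $(H-z)^{-1}=-(-H+z)^{-1}$ and the factorisation of $V$, an algebraic computation gives
\begin{equation*}
T(z)=G_\mu\Lambda_\mu\bigl[\uno+G^*_\mu(-R^\Lambda_\mu+z)^{-1}G_\mu\Lambda_\mu\bigr]G^*_\mu,
\end{equation*}
and from $(F^\mu_0u)_\lambda=\lambda^{-1}(F_0u)_{\mu-1/\lambda}$ combined with \eqref{SR} one recognises $\Gamma^\mu_0(\lambda)G_\mu=L^\mu_\lambda$; substitution produces exactly \eqref{S1}.

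\textbf{Main obstacle.} The delicate point is justifying the boundary value $T(\lambda+i0)$ and the applicability of the stationary scheme in the weighted $L^2$ scale prescribed by (H1)-(H6). Hypothesis (H6) gives $G_\mu\in\B(\fh^*,L^2_\varphi(M))$ and dually $G^*_\mu\in\B(L^2_{\varphi^{-1}}(M),\fh)$; combined with the boundary value $(-R^\Lambda_\mu+(\lambda+i0))^{-1}\in\B(L^2_\varphi(M),L^2_{\varphi^{-1}}(M))$ obtained from \eqref{RR}--\eqref{RRlim2} together with Theorem \ref{LAP}, the composition $G^*_\mu(-R^\Lambda_\mu+(\lambda+i0))^{-1}G_\mu$ is well defined in $\B(\fh,\fh^*)$, while Lemma \ref{bound} supplies the outer factors $\Lambda^\pm_\mu$ on the set $\Sigma_0\backslash\Sigma_\Lambda$. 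These estimates are precisely what the Birman--Yafaev scheme requires, so it applies to the pair $(R^0_\mu,R^\Lambda_\mu)$ and yields \eqref{S1}.
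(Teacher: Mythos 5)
Your algebra for the stationary formula is sound and is in fact the paper's route: the paper also factorizes $R^{\Lambda}_{\mu}-R^{0}_{\mu}=G_{\mu}\Lambda_{\mu}G^{*}_{\mu}$ and applies Yafaev's stationary theorem \cite[Theorem 4', page 178]{Y}, whose representation $\uno-2\pi i\,\Gamma_{0}(\lambda)\bigl(V-VR(\lambda+i0)V\bigr)\Gamma_{0}(\lambda)^{*}$ reduces, exactly as in your computation, to \eqref{S1} with $\Gamma_{0}(\lambda)G_{\mu}=L^{\mu}_{\lambda}$. But your proposal has two genuine gaps. First, the existence part: you obtain $W_{\pm}(R^{\Lambda}_{\mu},R^{0}_{\mu})$ from Theorem \ref{AC} by invoking the invariance principle for $\varphi(t)=(\mu-t)^{-1}$ as a black box. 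The invariance principle is not a soft general theorem: mere existence and completeness of $W_{\pm}(A_{\Lambda},A_{0})$ does not by itself yield existence of $W_{\pm}(\varphi(A_{\Lambda}),\varphi(A_{0}))$; in the references you cite it is established either under trace-class hypotheses (excluded by design in this paper) or inside the stationary/smooth scheme whose hypotheses must then be checked for the pair at hand. The paper proceeds in the opposite order: existence of \eqref{WR} is itself a conclusion of \cite[Theorem 4', page 178]{Y} applied to the resolvent pair, and only afterwards, in Corollary \ref{S-matrix}, once both families of wave operators are known to exist, is the invariance principle used to identify them and transfer the formula to $S^{\Lambda}_{\lambda}$.

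Second, and more substantively, the claim that the weighted mapping properties you list are ``precisely what the Birman--Yafaev scheme requires'' skips the decisive hypothesis of Yafaev's theorem: the weak $R^{0}_{\mu}$-smoothness of $G^{*}_{\mu}$, i.e. the bound \eqref{in1.1}, $\sup_{0<\epsilon<1}\epsilon\,{\|}G^{*}_{\mu}(-R^{0}_{\mu}+(\lambda\pm i\epsilon))^{-1}{\|}^{2}_{L^{2},\fh}\le c_{\lambda}$ for a.e. $\lambda$. This is an estimate in the \emph{unweighted} $L^{2}$ norm, and it does not follow from $G^{*}_{\mu}\in\B(L^{2}_{\varphi^{-1}}(M),\fh)$ together with boundary values of the resolvents in $\B(L^{2}_{\varphi}(M),L^{2}_{\varphi^{-1}}(M))$, since those would require the input vector to lie in $L^{2}_{\varphi}(M)$. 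The paper's proof devotes its main effort exactly to this point: via \eqref{RR} the bound is reduced to \eqref{in2.2}, and then the commutation $G^{*}_{\mu}R^{0}_{z}=\tau R^{0}_{z}R^{0}_{\mu}$ together with the estimate \eqref{eps} (which rests on (H5.2) and (H6)) delivers it. A minor further confusion: no limit $\Lambda^{\pm}_{\mu}$ from Lemma \ref{bound} enters here, since $\mu\in\rho(A_{0})\cap\rho(A_{\Lambda})$ is fixed and $\Lambda_{\mu}$ is a single bounded operator; the limits that must exist are those of $G^{*}_{\mu}(-R^{\Lambda}_{\mu}+(\lambda\pm i\epsilon))^{-1}G_{\mu}$ and of the related compositions, which the paper checks from \eqref{RR}, Theorem \ref{LAP} and (H6).
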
 
\begin{proof}
By \eqref{resolvent}, one has $R_{\mu}^{\Lambda}-R_{\mu}^{0}=G_{\mu}\Lambda_{\mu}G^{*}_{\mu}$ and we can use \cite[Theorem 4', page 178]{Y} (notice that the maps there denoted by $G$ and $V$ corresponds to our $G_{\mu}^{*}$ and $\Lambda_{\mu}$ respectively). Let us check that the hypotheses there required are satisfied. Since $G^{*}_{\mu}\in \B(L^{2}(M),\fh)$, the operator $G^{*}_{\mu}$ is $|R^{0}_{\mu}|^{1/2}$-bounded. By \eqref{RR}, (H2), Theorem \ref{LAP} and (H6), the limits 
$$
\lim_{\epsilon\downarrow 0}\, G^{*}_{\mu}(-R^{0}_{\mu}+(\lambda\pm i\epsilon))^{-1}\,,
$$
$$
\lim_{\epsilon\downarrow 0}\, G^{*}_{\mu}(-R^{\Lambda}_{\mu}+(\lambda\pm i\epsilon))^{-1}\,,
$$
$$
\lim_{\epsilon\downarrow 0}\, G^{*}_{\mu}(-R^{\Lambda}_{\mu}+(\lambda\pm i\epsilon))^{-1}G_{\mu}
$$
exist. Therefore, to get the thesis we need to check the validity of the remaining hypothesis in 
\cite[Theorem 4', page 178]{Y}: $G^{*}_{\mu}$ is weakly-$R^{0}_{\mu}$ smooth, i.e., by \cite[Lemma 2, page 154]{Y}, 
\be\label{in1.1}
\sup_{0<\epsilon<1}\epsilon\,{\|}G^{*}_{\mu} (-R^{0}_{\mu}+(\lambda\pm i\epsilon))^{-1}{\|}_{L^{2}\!,\fh}^{2}\le c_{\lambda}<+\infty\,,\quad\text{a.e. $\lambda$}\,.
\ee
By \eqref{RR}, this is consequence of
\be\label{in2.2}
\sup_{0<\delta<1}\delta\,{\|}G^{*}_{\mu}R^{0}_{\mu-\frac1\lambda\pm i\delta}{\|}_{L^{2}\!,\fh}^{2}\le C_{\lambda}<+\infty\,,\quad\text{a.e. $\lambda$}\,.
\ee
By 
\begin{align*}
{\|}G^{*}_{\mu} R^{0}_{z}{\|}_{L^{2}\!,\fh}={\|}\tau R^{0}_{\mu}R^{0}_{z}{\|}_{L^{2}\!,\fh}= 
{\|}\tau R^{0}_{z}R^{0}_{\mu}{\|}_{L^{2}\!,\fh}=
{\|}R^{0}_{\mu}(\tau R^{0}_{z})^{*}{\|}_{\fh^{*}\!\!,L^{2}}\le
{\|}R^{0}_{\mu}{\|}_{L^{2}\!,L^{2}} {\|}G_{\bar z}{\|}_{\fh^{*}\!\!,L^{2}}\,,
\end{align*}
\eqref{in2.2} follows by \eqref{eps}, hypotheses (H5.2) and (H6). Thus, by \cite[Theorem 4', page 178]{Y}, 
the limits \eqref{WR} exist everywhere in $L^{2}(M)$ and the corresponding scattering matrix is given by \eqref{S1}, where $L^{\mu}_{\lambda}\phi:=(F^{\mu}_{0}G_{\mu}\phi)_{\lambda}=\frac1\lambda(F_{0}G_{\mu}\phi)_{\mu-\frac1\lambda}$. 
\end{proof}
\begin{lemma}\label{LG} For any $z\not=0$ such that $\mu-\frac1z\in \rho(A_{0})$ one has 
$$
\Lambda_{\mu}\big(\uno+G^{*}_{\mu}\left(-R_{\mu}^{\Lambda}+z\right)^{-1}G_{\mu}\Lambda_{\mu}\big)=\Lambda_{\mu-\frac1z}\,.
$$
\end{lemma}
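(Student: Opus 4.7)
The plan is to reduce the identity to two successive applications of the resolvent relation \eqref{Lambda2}, after first unwinding the resolvent $(-R^\Lambda_\mu+z)^{-1}$ and the Krein formula \eqref{resolvent}. Set $w:=\mu-\frac1z$ (so $\frac1z=\mu-w$); since $\mu\in\rho(A_0)\cap\rho(A_\Lambda)$ is the parameter used in the scattering couple $(R^\Lambda_\mu,R^0_\mu)$, I will work in the regime $\mu,w\in\RE$, in which $\bar\mu=\mu$ and $\bar w=w$. By the elementary identity \eqref{RR} (applied to $A_\Lambda$ instead of $A_0$),
$$
\bigl(-R^\Lambda_\mu+z\bigr)^{-1}=\frac1z\,\uno+\frac1{z^2}\,R^\Lambda_{w},
$$
so expanding the bracket in the statement produces three terms:
$$
\Lambda_\mu+\frac1z\,\Lambda_\mu G^*_\mu G_\mu\Lambda_\mu+\frac1{z^2}\,\Lambda_\mu G^*_\mu R^\Lambda_w G_\mu\Lambda_\mu.
$$

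Next I would substitute the Krein resolvent formula \eqref{resolvent} for $R^\Lambda_w$, namely $R^\Lambda_w=R^0_w+G_w\Lambda_w G^*_w$, and simplify the first resolvent piece via \eqref{Gzw}: from $G_\mu-G_w=(w-\mu)R^0_w G_\mu$ and $w-\mu=-\frac1z$ one gets $R^0_w G_\mu=z(G_w-G_\mu)$. Inserting this, the term $\frac{1}{z^2}\Lambda_\mu G^*_\mu\cdot z(G_w-G_\mu)\cdot\Lambda_\mu$ cancels exactly the $\frac1z\Lambda_\mu G^*_\mu G_\mu\Lambda_\mu$ contribution, and what is left factors as
$$
\Lambda_\mu+\frac{1}{z}\Lambda_\mu G^*_\mu G_w\Bigl(\Lambda_\mu+\frac{1}{z}\Lambda_w G^*_w G_\mu\Lambda_\mu\Bigr).
$$

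The final step is to recognize both parenthesized expressions via \eqref{Lambda2}. Writing \eqref{Lambda2} with $(w,z)\mapsto(w,\mu)$ gives $\Lambda_w-\Lambda_\mu=(\mu-w)\Lambda_w G^*_w G_\mu\Lambda_\mu=\frac1z\Lambda_w G^*_w G_\mu\Lambda_\mu$, so the inner bracket collapses to $\Lambda_w$. Then writing \eqref{Lambda2} with $(w,z)\mapsto(\mu,w)$ gives $\Lambda_\mu-\Lambda_w=(w-\mu)\Lambda_\mu G^*_\mu G_w\Lambda_w=-\frac1z\Lambda_\mu G^*_\mu G_w\Lambda_w$, whence the whole expression telescopes to $\Lambda_\mu+(\Lambda_w-\Lambda_\mu)=\Lambda_w=\Lambda_{\mu-\frac1z}$.

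The computation is essentially pure bookkeeping; the only subtle point is that \eqref{Lambda2} naturally involves $G^*_{\bar w}$ rather than $G^*_w$, which is why the identity as stated (with $G^*_\mu$ on both sides) is consistent precisely in the self-adjoint regime $\mu,w\in\RE$ relevant to Theorem \ref{BK}, where $R^0_\mu,R^\Lambda_\mu$ must be self-adjoint for the wave operators \eqref{WR} to make sense.
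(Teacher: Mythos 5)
Your algebra follows the same route as the paper's proof (rewrite $(-R^\Lambda_\mu+z)^{-1}$ via \eqref{RR}, insert the Krein formula \eqref{resolvent} for $R^\Lambda_{\mu-\frac1z}$, use \eqref{Gzw} and then \eqref{Lambda2} twice to telescope), and for real $z$ the computation is correct. The genuine problem is the restriction you impose at the outset, $w=\mu-\frac1z\in\RE$, together with the closing claim that the identity is ``consistent precisely'' in that regime. The lemma is stated for arbitrary $z\neq 0$ with $\mu-\frac1z\in\rho(A_0)$, and the only case in which it is actually used --- in the proof of Corollary \ref{S-matrix} --- is $z=\lambda\pm i\epsilon$ with $\epsilon>0$, so that $w$ is non-real and the limit $\epsilon\downarrow 0$ is taken afterwards to produce $\Lambda^{\pm}_{\mu-\frac1\lambda}$. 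As written, your proof does not cover the case for which the lemma is needed.

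Moreover the restriction is unnecessary: nothing in the computation requires $\bar w=w$. Keeping the correct adjoints, the Krein formula reads $R^\Lambda_w=R^0_w+G_w\Lambda_wG^*_{\bar w}$ and \eqref{Lambda2} applied to the pair $(w,\mu)$ reads $\Lambda_w-\Lambda_\mu=\frac1z\,\Lambda_wG^*_{\bar w}G_\mu\Lambda_\mu$; the factor $G^*_{\bar w}$ occurs only in these two intermediate expressions and they match, so your inner bracket still collapses to $\Lambda_w$, while the final application of \eqref{Lambda2} to the pair $(\mu,w)$ only involves $G^*_{\bar\mu}=G^*_\mu$, which is legitimate because $\mu$ is real (as it must be for $R^0_\mu$ and $R^\Lambda_\mu$ to be self-adjoint in Theorem \ref{BK}). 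This is exactly how the paper handles it, writing $G^*_{\mu-\frac1{\bar z}}$ in the corresponding step. So the fix is a one-line bookkeeping correction, but the ``subtle point'' you flag is resolved the wrong way: the identity, with $G^*_\mu$ on the outside, holds for all admissible complex $z$, not only for real ones.
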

\begin{proof} 
By \eqref{Gzw}, one has
\begin{align}\label{GG}
G_{\mu}+\frac1{z}\,R^{0}_{\mu-\frac1z}  G_{\mu}=
G_{\mu-\frac1z}\,.
\end{align}
By \eqref{RR}, \eqref{GG} and \eqref{Lambda2}, one obtains
\begin{align*}
&\Lambda_{\mu}+\Lambda_{\mu}G^{*}_{\mu}\left(-R_{\mu}^{\Lambda}+z\right)^{-1}G_{\mu}\Lambda_{\mu}\\
=&
\Lambda_{\mu}+\frac1z\,\Lambda_{\mu}G^{*}_{\mu}\left(\left(G_{\mu}+\frac1{z}\,R^{0}_{\mu-\frac1z}G_{\mu}\right)\Lambda_{\mu}
+G_{\mu-\frac1z}\left(\frac1{z}\,\Lambda_{\mu-\frac1z}G^{*}_{\mu-\frac1{\bar z}}G_{\mu}\Lambda_{\mu}\right)\right)\\
=&
\Lambda_{\mu}+\frac1z\,\Lambda_{\mu}G^{*}_{\mu}G_{\mu-\frac1z}\Lambda_{\mu}
+\frac1z\,\Lambda_{\mu}G^{*}_{\mu}G_{\mu-\frac1z}\big(\Lambda_{\mu-\frac1z}-\Lambda_{\mu}\big)\\
=&
\Lambda_{\mu}+\frac1z\,\Lambda_{\mu}G^{*}_{\mu}G_{\mu-\frac1z}\Lambda_{\mu-\frac1z}
=\Lambda_{\mu-\frac1z}\,.
\end{align*}

\end{proof}
\begin{corollary}\label{S-matrix} Suppose that the couple $(A_{0},A_{\Lambda})$ satisfies hypotheses (H1)-(H6). Then 
\begin{equation*}
S^{\Lambda}_{\lambda}=\uno-2\pi iL_{\lambda}\Lambda^{+}_{\lambda}L_{\lambda}^{*}\,,\quad \lambda\in\sigma_{ac}(A_{0})\cap(\Sigma_{0}\backslash\Sigma_{\Lambda})\,,
\end{equation*}
where  $L_\lambda: \fh^{*}\to(L^{2}(M)^{0}_{ac})_{\lambda}$ is the $\mu$-independent linear operator defined by  
$$
L_{\lambda}\phi:=(\mu-\lambda)(F_{0}G_{\mu}\phi)_{\lambda}\,.
$$ 
\end{corollary}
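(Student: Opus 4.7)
The plan is to combine three ingredients: the Birman--Kato invariance principle (to reduce to the bounded scattering couple $(R^{0}_{\mu}, R^{\Lambda}_{\mu})$), the explicit formula for $S^{\Lambda,\mu}_{\lambda}$ from Theorem \ref{BK}, and Lemma \ref{LG} extended to the spectral boundary.

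First I would pick $\mu \in \rho(A_{0}) \cap \rho(A_{\Lambda}) \cap \RE$, which exists by the upper semi-boundedness in (H1) (take $\mu$ larger than the suprema of $\sigma(A_{0})$ and $\sigma(A_{\Lambda})$). Then $\phi(x) := (\mu - x)^{-1}$ is strictly increasing on the spectra of both $A_{0}$ and $A_{\Lambda}$, so the Birman--Kato invariance principle (see e.g.\ \cite{Y}) together with the existence of the wave operators $W_{\pm}(R^{\Lambda}_{\mu}, R^{0}_{\mu})$ from Theorem \ref{BK} gives $W_{\pm}(A_{\Lambda}, A_{0}) = W_{\pm}(R^{\Lambda}_{\mu}, R^{0}_{\mu})$, whence $S_{\Lambda} = S^{\mu}_{\Lambda}$. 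Using the relation $(F^{\mu}_{0} u)_{\lambda_{R}} = \lambda_{R}^{-1}(F_{0} u)_{\mu - 1/\lambda_{R}}$ between the two diagonalizations, this identity descends at the fiber level to $S^{\Lambda}_{\lambda} = S^{\Lambda,\mu}_{(\mu-\lambda)^{-1}}$ for every $\lambda \in \sigma_{ac}(A_{0}) \cap (\Sigma_{0} \setminus \Sigma_{\Lambda})$.

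Next I would apply Theorem \ref{BK} at the inner spectral value $\lambda_{R} = (\mu - \lambda)^{-1}$. A direct substitution in \eqref{SR} produces $L^{\mu}_{\lambda_{R}}\phi = (\mu-\lambda)(F_{0} G_{\mu} \phi)_{\lambda} = L_{\lambda} \phi$, and as $\epsilon \downarrow 0$ one has $\mu - 1/(\lambda_{R} + i\epsilon) \to \lambda + i0$. Lemma \ref{LG} gives
\begin{equation*}
\Lambda_{\mu}\bigl(\uno + G^{*}_{\mu}(-R^{\Lambda}_{\mu} + z)^{-1} G_{\mu}\Lambda_{\mu}\bigr) = \Lambda_{\mu - 1/z}
\end{equation*}
whenever $\mu - 1/z \in \rho(A_{0})$. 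By Lemma \ref{bound} the right-hand side extends continuously to the boundary with limit $\Lambda^{+}_{\lambda} \in \B(\fh, \fh^{*})$ for $\lambda \in \Sigma_{0} \setminus \Sigma_{\Lambda}$; the left-hand side extends as well by (H2), Theorem \ref{LAP}, (H6) and Remark \ref{3.4}. Inserting the boundary values into \eqref{S1} yields $S^{\Lambda}_{\lambda} = \uno - 2\pi i\, L_{\lambda}\, \Lambda^{+}_{\lambda}\, L_{\lambda}^{*}$.

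The $\mu$-independence of $L_{\lambda}$ follows from \eqref{Gzw}: for any $\mu' \in \rho(A_{0})$ the identity $G_{\mu} = G_{\mu'} + (\mu' - \mu) R^{0}_{\mu'} G_{\mu}$, together with the fact that $F_{0}$ diagonalizes $R^{0}_{\mu'}$ with multiplier $(\mu' - \lambda)^{-1}$ on the fiber at $\lambda$, gives after multiplication by $(\mu - \lambda)$ the relation $(\mu - \lambda)(F_{0} G_{\mu} \phi)_{\lambda} = (\mu' - \lambda)(F_{0} G_{\mu'} \phi)_{\lambda}$.

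The main obstacle I expect is the boundary-limit step in Lemma \ref{LG}: its identity holds a priori only in the open resolvent set, so one must justify passage to $\mu - 1/z = \lambda + i0$ for $\lambda$ in the absolutely continuous spectrum of $A_{0}$. The key is that every ingredient on the left-hand side has a well-defined boundary value in the appropriate topology (by Theorem \ref{LAP}, Remark \ref{3.4}, and Lemma \ref{bound}), with the uniform bound on $\Lambda_{\lambda \pm i\epsilon}$ from \eqref{supLambda} providing the equicontinuity needed to pass to the limit. A secondary subtlety is to verify the admissibility of $\phi$ in the Birman--Kato invariance principle within the present not-necessarily-trace-class setting, which is accommodated by the semi-boundedness and existence of the $W_{\pm}(R^{\Lambda}_{\mu}, R^{0}_{\mu})$ ensured by Theorem \ref{BK}.
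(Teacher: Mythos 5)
Your proposal follows essentially the same route as the paper: reduction via the Birman--Kato invariance principle to the couple $(R^{0}_{\mu},R^{\Lambda}_{\mu})$ with $S_{\Lambda}=S^{\mu}_{\Lambda}$ and the fiber relation $S^{\Lambda}_{\lambda}=S^{\Lambda,\mu}_{(\mu-\lambda)^{-1}}$, then inserting Lemma \ref{LG} into the formula of Theorem \ref{BK} and passing to the boundary value $\Lambda^{+}_{\lambda}$ via Lemma \ref{bound}, with the $\mu$-independence of $L_{\lambda}$ obtained from \eqref{Gzw} and the action of $F_{0}$ on $R^{0}_{\mu'}$, exactly as in the paper's direct computation. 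Your added remarks (choosing $\mu$ real above both spectra to justify admissibility of $(\mu-x)^{-1}$, and spelling out the boundary-limit justification) only make explicit what the paper leaves implicit.
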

\begin{proof} By Theorem \ref {AC}, Theorem \ref{BK} and by Birman-Kato invariance principle (see e.g. \cite[Section II.3.3]{BW}), one has
$$
W_{\pm}(A_{\Lambda},A_{0})=W_{\pm}(R^{\Lambda}_{\mu},R^{0}_{\mu})
$$
and so
$$
S_{\Lambda}=S_{\Lambda}^{\mu}\,.
$$
Thus, since $(F^{\mu}_{0}u)_{\lambda}=\frac1\lambda(F_{0}u)_{\mu-\frac1\lambda}$,  one obtains (see also \cite[Equation (14), Section 6, Chapter 2]{Y})
\be\label{SS}
S^\Lambda_{\lambda}=S^{\Lambda,\mu}_{(-\lambda+\mu)^{-1}}\,.
\ee
By Lemma \ref{LG}, whenever $z=\lambda\pm i\epsilon$ and $\mu-\frac1{\lambda}\in\Sigma_{0}\backslash\Sigma_{\Lambda}$, one gets, as $\epsilon\downarrow 0$, 
$$
\Lambda_{\mu}\big(\uno+G^{*}_{\mu}\left(-R_{\mu}^{\Lambda}+(\lambda\pm i0)\right)^{-1}G_{\mu}\Lambda_{\mu}\big)=\Lambda^{\pm}_{\mu-\frac1{\lambda}}\,.
$$
The proof is then concluded by Theorem \ref{BK}, by \eqref{SS} and by setting  
$L_{\lambda}:=L^{\mu}_{{(-\lambda+\mu)^{-1}}}$. The operator $L_{\lambda}$ is $\mu$-independent by invariance principle, let us provide a direct proof: given $\mu_{1}\not=\mu_{2}$, 
by \eqref{Gzw} and by $(F_{0}R^{0}_{\mu}u)_{\lambda}=(-\lambda+\mu)^{-1}(F_{0}u)_{\lambda}$, one gets the identity
\begin{align*}
&\big(L^{\mu_{1}}_{_{(-\lambda+\mu_{1})^{-1}}}-L^{\mu_{2}}_{_{(-\lambda+\mu_{2})^{-1}}}\big)\phi\\
=&(\mu_{1}-\lambda)(F_{0}G_{\mu_{1}}\phi)_{\lambda}-(\mu_{2}-\lambda)(F_{0}G_{\mu_{2}}\phi)_{\lambda}\\
=&(F_{0}((\mu_{1}-\mu_{2})G_{\mu_{1}}-(\lambda-\mu_{2})(G_{\mu_{1}}-G_{\mu_{2}}))\phi)_{\lambda}\\
=&(\mu_{1}-\mu_{2})(F_{0}G_{\mu_{1}}\phi)_{\lambda}-(\lambda-\mu_{2})(\mu_{2}-\mu_{1})(F_{0}R^{0}_{\mu_{2}}G_{\mu_{1}}\phi)_{\lambda}\\
=&0\,.
\end{align*}
\end{proof}
\end{section}
\begin{section}{Applications.} 
Here we take $$A_{0}=\Delta:H^{2}(\RE^{n})\subset L^{2}(\RE^{n})\to L^{2}(\RE^{n})\,,$$ where $H^{s}(\RE^{n})$, $s\in\RE$, denotes the usual scale of Sobolev spaces and where $\Delta $ denotes the distributional Laplacian, and $$\tau:H^{2}(\RE^{n})\to\fh$$  bounded and surjective onto the Hilbert space $\fh$ and such that $\ker(\tau)$ is $L^{2}(\RE^{n})$-dense. \par 
In the following we use the scale of weighted Sobolev spaces $H_{w}^{s}(\RE^{n})$, $s\in\RE$, $w\in\RE$. Here $H_{w}^{0}(\RE^{n})\equiv L_{w}^{2}(\RE^{n})$ denotes the weighted $L^{2}$-space which corresponds, according to the notation in the Section 3, to the choice $\varphi(x)=(1+|x|^{2})^{w/2}$. Then the weighted Sobolev space  $H_{w}^{m}(\RE^{n})$, $m\ge 1$ integer,  consists of functions in $L_{w}^{2}(\RE^{n})$ having $k$-order distributional derivatives, $1\le k\le m$, belonging to $L_{w}^{2}(\RE^{n})$; $H_{w}^{s}(\RE^{n})$, $s>0$ not integer, is defined by interpolation as in the unweighted case and finally $H_{w}^{s}(\RE^{n})$, $s\in(-\infty,0)$, is defined as the dual of  $H_{-w}^{-s}(\RE^{n})$ (see e.g. \cite[Section 4.2]{ET}). 
\begin{theorem}\label{LH} Let $\Delta_{\Lambda}$ denote the self-adjoint extension of the symmetric operator $S:=\Delta|\ker(\tau)$ given in Theorem \ref{teo}, corresponding to the family $\Lambda=\{\Lambda_{z}\}_{z\in Z_{\Lambda}}$, $\Lambda_{z}\in\B(\fb,\fb^{*})$, $\fh\hookrightarrow\fb$, and to  the choice $A_{0}=\Delta$. Suppose that: 
\vskip8pt\noindent 
i) $\Delta_{\Lambda}$ is bounded from above; \par\noindent 
ii) there exists $c_{\Lambda}>0$ such that  the embedding $\ran(\Lambda_{\lambda})\hookrightarrow\fh^{*}$ is compact for any $\lambda>c_{\Lambda}$;
\par\noindent 
iii) there exists $\chi\in C^{\infty}_{comp}(\RE^{n})$ such that, for any $u\in H^{2}(\RE^{n})$,
\be\label{HT}
\tau u=\tau(\chi u)
\ee 
\vskip8pt\noindent
Then  asymptotic completeness holds for the scattering couple  $(\Delta,\Delta_{\Lambda})$, 
$$
\sigma_{ac}(\Delta_{\Lambda})=\sigma_{ess}(\Delta_{\Lambda})=(-\infty,0]\,,\quad 
\sigma_{sc}(\Delta_{\Lambda})=\emptyset 
$$
and the scattering matrix $S_{\lambda}^{\Lambda}$ is given by 
\be\label{S-tau}
S_{\lambda}^{\Lambda}=\uno-2\pi iL_{\lambda}\Lambda_{\lambda}^{+}L^{*}_{\lambda}\,,\quad \lambda\in(-\infty,0)\backslash\sigma_{p}^{-}(\Delta_{\Lambda})\,,
\ee 
where $\sigma_{p}^{-}(\Delta_{\Lambda}):=(-\infty,0)\cap\sigma_{p}(\Delta_{\Lambda})$ is a possibly empty discrete set,  
$$
\Lambda_{\lambda}^{+}:=\lim_{\epsilon\downarrow 0}\Lambda_{\lambda+i\epsilon}\,,\quad\text{the limit existing in $\B(\fb,\fb^{*})$,}
$$
\be\label{Ll}
L_{\lambda}:\fb^{*}\to L^{2}({\mathbb S}^{n-1})\,,\qquad L_{\lambda}\phi(\xi):=\frac1{2^{\frac12}}\,\frac{|\lambda|^{\frac{n-2}4}}{(2\pi)^{\frac{n}2}}\langle \tau(\chi u^{\xi}_{\lambda}),\phi\rangle_{\fb,\fb^{*}}\,.
\ee
Here ${\mathbb S}^{n-1}$ denotes the (n-1)-dimensional unitary sphere in $\RE^{n}$ and $u^{\xi}_{\lambda}$ is the plane wave with direction $\xi\in{\mathbb S}^{n-1}$ and wavenumber $|\lambda|^{\frac12}$, i.e. $u^{\xi}_{\lambda}(x)=e^{i\,|\lambda|^{\frac12}\xi\cdot x}$.
\end{theorem}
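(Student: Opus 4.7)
The plan is to verify the abstract hypotheses (H1)--(H6) of Section 3 for $A_0=\Delta$ with the Agmon-type weight $\varphi(x)=(1+|x|^2)^{w/2}$ (for some $w>1/2$) and $\Sigma_0=(-\infty,0)$, and then to invoke Theorem \ref{AC} and Corollary \ref{S-matrix}. Hypotheses (H2) and (H3) are the classical Agmon--Kuroda Limiting Absorption Principle for $-\Delta$ on $\RE^n$; (H1) for $\Delta$ itself is immediate from the spectral theorem, while for $\Delta_\Lambda$ it follows from the resolvent formula \eqref{resolvent} together with (H6) (cf.\ Remark \ref{R(H6)}). Hypotheses (H5) and (H6) are consequences of the localization assumption (iii): by duality $\ran(\tau^*)\subseteq H^{-2}_{\comp}(\RE^n)$ with support contained in $\supp(\chi)$, so $G_z\phi=R^0_z\tau^*\phi$ is the application of the free resolvent to a compactly supported $H^{-2}$-distribution and therefore belongs to $L^2_\varphi(\RE^n)$ for every polynomial weight; the surjectivity of the boundary-value trace $(G^\pm_\lambda)^*$ is inherited from that of $\tau$ via the factorization $\tau R^0_{\bar\lambda}=\tau(\chi R^0_{\bar\lambda}\,\cdot\,)$ (choosing a compactly supported preimage in $H^2$), with continuity in $\lambda$ supplied by the LAP.

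The key technical input is hypothesis (H4). From \eqref{resolvent} one has $R^\Lambda_\lambda-R^0_\lambda=G_\lambda\Lambda_\lambda G^*_{\bar\lambda}$; the compactness assumption (ii) makes $\Lambda_\lambda:\fh\to\fh^*$ compact, since it factors through the compact embedding $\ran(\Lambda_\lambda)\hookrightarrow\fh^*$; meanwhile $G^*_{\bar\lambda}=\tau R^0_\lambda:L^2(\RE^n)\to\fh$ is bounded, and (H6) applied with the weight $\varphi^{2+\gamma}$ gives $G_\lambda:\fh^*\to L^2_{\varphi^{2+\gamma}}(\RE^n)$ bounded. The composition is therefore compact in $\B(L^2(\RE^n),L^2_{\varphi^{2+\gamma}}(\RE^n))$, so (H4) holds with $k=1$. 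Once (H1)--(H6) are in place, Theorem \ref{AC} yields asymptotic completeness, the same compactness gives $\sigma_{ess}(\Delta_\Lambda)=\sigma_{ess}(\Delta)=(-\infty,0]$ by Weyl-type stability under relatively compact perturbations, Corollary \ref{sc} gives $\sigma_{sc}(\Delta_\Lambda)=\emptyset$, and Theorem \ref{LAP} makes $\Sigma_\Lambda=\sigma_p^-(\Delta_\Lambda)$ a discrete subset of $(-\infty,0)$; hence $\sigma_{ac}(\Delta_\Lambda)=(-\infty,0]$.

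Corollary \ref{S-matrix} then delivers $S^\Lambda_\lambda=\uno-2\pi i\,L_\lambda\Lambda^+_\lambda L^*_\lambda$ with $\Lambda^+_\lambda$ \emph{a priori} only in $\B(\fh,\fh^*)$. To upgrade the limit to $\B(\fb,\fb^*)$ we exploit (ii) once more: each $\Lambda_{\lambda\pm i\epsilon}$ factors through the compact embedding $\ran(\Lambda_{\lambda\pm i\epsilon})\hookrightarrow\fh^*\hookrightarrow\fb^*$, and, combined with the uniform bound of Lemma \ref{bound} and the $\B(\fh,\fh^*)$-convergence established there, a standard compactness/equicontinuity argument converts this into norm convergence in $\B(\fb,\fb^*)$. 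Formula \eqref{Ll} follows from a direct Fourier computation. The unitary $(F_0 u)_\lambda(\xi):=2^{-\frac12}|\lambda|^{\frac{n-2}4}\hat u(|\lambda|^{\frac12}\xi)$ (polar-coordinate Fourier transform in the symmetric convention, which diagonalizes $\Delta$ over $(-\infty,0)$ with fiber $L^2({\mathbb S}^{n-1})$) makes $R^0_\mu$ act fibrewise as multiplication by $(\mu-\lambda)^{-1}$; for $\phi\in\fb^*$, the compactly supported distribution $\tau^*\phi$ has entire Fourier transform whose value at $k=|\lambda|^{1/2}\xi$ is, by (iii), the pairing against the cutoff plane wave $\tau(\chi u^\xi_\lambda)$. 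Substituting into $L_\lambda\phi=(\mu-\lambda)(F_0 G_\mu\phi)_\lambda$ cancels the $(\mu-\lambda)$ factor and, after matching the conjugate-linear pairing convention of Section 1.1, reproduces \eqref{Ll}. The main obstacle is the verification of (H4) together with the $\fh\to\fb$ upgrade of $\Lambda^+_\lambda$; everything else is either classical LAP for the free Laplacian or a direct Fourier calculation.
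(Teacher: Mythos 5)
Most of your outline tracks the paper's argument: you verify (H1)--(H6) exactly as the paper does (Agmon LAP for (H2)--(H3), the localization \eqref{HT} giving $\ran(\tau^{*})\subseteq H^{-2}_{comp}(\RE^{n})$ and hence (H5)--(H6) with $G_{z}\in\B(\fh^{*},L^{2}_{w}(\RE^{n}))$ for every $w$, (H4) with $k=1$ from hypothesis ii) and the factorization $G_{\lambda}\Lambda_{\lambda}G^{*}_{\lambda}$), then you invoke Theorem \ref{AC}, Corollary \ref{sc}, Weyl stability for $\sigma_{ess}$, Corollary \ref{S-matrix}, and the same Fourier computation for $L_{\lambda}$. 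One small inaccuracy along the way: (H1) asks for boundedness and continuity of $z\mapsto R^{0}_{z}$ in the \emph{weighted} space $\B(L^{2}_{\varphi})$, which is not ``immediate from the spectral theorem''; it is the weighted resolvent estimate the paper cites from Reed--Simon, and (H1) for $R^{\Lambda}_{z}$ then follows via (H6) and Remark \ref{R(H6)}, as you say.

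The genuine gap is in the last step, the upgrade of $\Lambda^{+}_{\lambda}=\lim_{\epsilon\downarrow0}\Lambda_{\lambda+i\epsilon}$ from $\B(\fh,\fh^{*})$ to $\B(\fb,\fb^{*})$. Your ``standard compactness/equicontinuity argument'' does not go through. First, hypothesis ii) asserts compactness of the embedding $\ran(\Lambda_{\lambda})\hookrightarrow\fh^{*}$ only for \emph{real} $\lambda>c_{\Lambda}$, i.e.\ in the resolvent set, not for the operators $\Lambda_{\lambda\pm i\epsilon}$ with $\lambda<0$ that you are trying to control (one could argue via \eqref{Lambda2} that all $\Lambda_{z}$ share the same range, but you do not, and it would not save the argument anyway). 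Second, the only uniform bound supplied by Lemma \ref{bound} is \eqref{supLambda} in $\B(\fh,\fh^{*})$; there is no a priori uniform bound on ${\|}\Lambda_{\lambda\pm i\epsilon}{\|}_{\fb,\fb^{*}}$, $\fh$ is not assumed dense in $\fb$, and knowledge of $\Lambda_{\lambda\pm i\epsilon}u$ for $u\in\fh$ says nothing about the action on $\fb$-vectors outside $\fh$; even granting density, uniform $\fb$-bounds and range-side compactness, such an argument would at best yield strong convergence, not convergence in the operator norm of $\B(\fb,\fb^{*})$, which is strictly stronger than the $\B(\fh,\fh^{*})$ convergence you start from (smaller unit ball in the domain, stronger norm in the target). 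The paper closes this step by a bootstrap through the identity \eqref{Lambda-epsilon}, $\Lambda_{\lambda\pm i\epsilon}=\Lambda_{z}+(z-(\lambda\pm i\epsilon))\Lambda_{\lambda\pm i\epsilon}\,\tau R^{0}_{\lambda\pm i\epsilon}\,G_{z}\Lambda_{z}$ with $z$ fixed in $\CO\backslash(-\infty,0]$: since $\Lambda_{z}\in\B(\fb,\fb^{*})$, $G_{z}\in\B(\fh^{*},L^{2}_{w})$, $\tau R^{0}_{\lambda\pm i\epsilon}$ converges in $\B(L^{2}_{w},\fh)$ by \eqref{HT} and \eqref{LAP-Sob}, and $\Lambda_{\lambda\pm i\epsilon}$ converges in $\B(\fh,\fh^{*})$, the right-hand side converges in $\B(\fb,\fh^{*})$; duality via \eqref{Lambda1} then gives convergence in $\B(\fh,\fb^{*})$, and a second application of the same identity yields convergence in $\B(\fb,\fb^{*})$. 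Some argument of this kind (using \eqref{Lambda2} and the localization iii), not hypothesis ii)) is what your proof is missing.
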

\begin{proof}
According to \cite[Lemma 1, page 170]{RS-IV}, one has $R_{z}^{0}\in\B(L_{w}^{2}(\RE^{n}))$ 
; this entails (see \cite[relation (4.8)]{JST}) 
\be\label{(H1)alpha}
(-\Delta+z)^{-1}\equiv R_{z}^{0}\in\B(L_{w}^{2}(\RE^{n}),H_{w}^{2}(\RE^{n}))\,.
\ee 
Thus  $\Delta$ satisfies hypothesis (H1). It is a well-known fact that LAP holds for the free Laplacian, i.e. $\Delta$ satisfies hypothesis (H2) (see e.g. \cite[Theorem 4.1]{Agm}, \cite[Theorem 18.3]{KK}): for any $\lambda<0$ and $w>\frac12$,
\be\label{LAP-Sob}
\text{  $R^{0,\pm}_{\lambda}=\lim_{\epsilon\downarrow 0}\,(-\Delta+\lambda\pm i\epsilon)^{-1}$ exist in $\B(L^{2}_{w}(\RE^{n}),H^{2}_{-w}(\RE^{n}))$}
\ee 
and the maps 
$$
z\mapsto R^{0,\pm}_{z}:=\begin{cases}R^{0}_{z}\,,& z\in \CO\backslash(-\infty,0]\\
R^{0,\pm}_{\lambda}\,,& z=\lambda\in(-\infty,0)\end{cases}
$$
are continuous on $\overline{\CO_{\pm}}\backslash\{0\}$ to $\B(L^{2}_{w}(\RE^{n}),H^{2}_{-w}(\RE^{n}))$. 
Hypothesis (H3) holds true by \cite[Corollary 5.7(b)]{BAD}. Hypothesis (H5) holds true  by \eqref{HT} and \eqref{LAP-Sob}. By \eqref{LAP-Sob}, $\supp(\tau^{*}\phi)\subseteq\supp(\chi)$. Since $G_{z}\phi$ is the convolution of the kernel of $R^{0}_{z}$ with the distribution $\tau^{*}\phi\in H_{comp}^{-2}(\RE^{n})$, one obtains $G_{z}\in\B(\fh^{*}\!\!,L^{2}_{w}(\RE^{n}))$ for any $w$ and hypothesis (H6) holds true. By Remark \ref{R(H6)}, if \eqref{HT} holds then the map $z\mapsto R^{\Lambda}_{z}$ satisfies (H1). If the embedding $\ran(\Lambda_{\lambda})\hookrightarrow\fh^{*}$ is compact, then hypothesis (H4) holds true with $k=1$ by \eqref{resolvent}, by $G_{\lambda}^{*}\in\B(L^{2}(\RE^{n},\fh))$ and  by $G_{\lambda}\in\B(\fh^{*}\!\!,L^{2}_{w}(\RE^{n}))$ for any $w$. Therefore, by Theorem \ref{AC}, asymptotic completeness holds for the scattering couple  $(\Delta,\Delta_{\Lambda})$ and $\sigma_{ac}(\Delta_{\Lambda})=\sigma_{ac}(\Delta)=(-\infty,0]$. 
By Corollary \ref{sc}, $\sigma_{sc}(\Delta_{\Lambda})=\emptyset$. Moreover, since $R^{\Lambda}_{z}-R^{0}_{z}$ is compact by $ii)$ and \eqref{resolvent}, $\sigma_{ess}(\Delta_{\Lambda})=\sigma_{ess}(\Delta)=(-\infty,0]$. \par The scattering matrix $S_{\lambda}^{\Lambda}$ is provided by Corollary \ref{S-matrix}. By \eqref{HT}, the distribution $\tau^{*}\phi\in H^{-2}(\RE^{n})$, $\phi\in\fh^{*}$, is compactly supported, $\supp(\tau^{*}\phi)\subseteq\supp(\chi)$. Setting  $v_{\xi}(x):=\frac{e^{i\,\xi\cdot x}}{(2\pi)^{\frac{n}2}}$, its Fourier transform is given by 
$$
\widehat{\tau^{*}\phi}(\xi)= \langle v_{\xi},{\tau^{*}\phi}\rangle_{H^{2}_{loc}(\RE^{n}),H^{-2}_{comp}(\RE^{n})}= \langle \chi v_{\xi},{\tau^{*}\phi}\rangle_{H^{2}(\RE^{n}),H^{-2}(\RE^{n})}=\langle \tau(\chi v_{\xi}),\phi\rangle_{\fh,\fh^{*}}\,.
$$
The unitary 
map $F_{0}:L^{2}(\RE^{n})\to \int^{\oplus}_{(-\infty,0)}L^{2}({\mathbb S}^{n-1})\,d\lambda\equiv L^{2}((-\infty,0);L^{2}({\mathbb S}^{n-1}))$ given by 
\be\label{fourier}
(F_{0}u)_\lambda(\xi):=-\frac1{2^{1/2}}\,{|\lambda|^{\frac{n-2}4}}\, \widehat u(|\lambda|^{1/2}\xi)\,
\ee 
diagonalizes $A_{0}=\Delta$. Therefore, by  $R^{0}_{\mu}\in \B(H^{-2}(\RE^{n}),L^{2}(\RE^{n}))$ and \eqref{fourier}, one gets 
\begin{align*}
(\mu-\lambda)(F_{0}R_{\mu}\tau^{*}\phi)_{\lambda}(\xi)=-\frac1{2^{1/2}}\,|\lambda|^{\frac{n-2}4}\widehat{\tau^{*}\phi}(|\lambda|^{1/2}\xi)
=
-\frac1{2^{1/2}}\,|\lambda|^{\frac{n-2}4}\langle \tau(\chi v_{|\lambda|^{1/2}\xi}),\phi\rangle_{\fh,\fh^{*}}\,.
\end{align*}
This gives the operator $L_{\lambda}$ provided in Corollary \ref{S-matrix} (notice that for notational convenience in \eqref{Ll} we use $-L_{\lambda}$).\par 
In order to conclude the proof we need to show that the limits $\Lambda^{\pm}_{\lambda}$, which exist in $\B(\fh,\fh^{*})$ by Lemma \ref{bound}, in fact exist in $\B(\fb,\fb^{*})$. By \eqref{Lambda2}, for any $z\in\CO\backslash(-\infty,0]$ one has
\be\label{Lambda-epsilon}
\Lambda_{\lambda\pm i\epsilon}=\Lambda_{z}+(z-(\lambda\pm i\epsilon))\Lambda_{\lambda\pm i\epsilon}\tau R^{0}_{\lambda\pm i\epsilon}G_{z}\Lambda_{z}\,.
\ee
Thus, since $\Lambda_{z}\in \B(\fb,\fb^{*})\subseteq\B(\fb,\fh^{*})$ for any $z\in\CO\backslash(-\infty,0]$, since $\tau R^{0}_{\lambda\pm i\epsilon}$ converges in $\B(L_{w}^{2}(\RE^{n}),\fh)$ by \eqref{HT} and \eqref{LAP-Sob} and since $G_{z}\in\B(\fh^{*},L^{2}_{w}(\RE^{n}))$, the existence in $\B(\fh,\fh^{*})$ of the limits $\Lambda^{\pm}_{\lambda}$ entails 
the existence of such limits in $\B(\fb,\fh^{*})$. Then, by duality and \eqref{Lambda1}, the limits exist in $\B(\fh,\fb^{*})$ as well. Thus, using again \eqref{Lambda-epsilon} and repeating the same reasonings, at the end one gets the existence of the limits $\Lambda^{\pm}_{\lambda}$ in $\B(\fb,\fb^{*})$.
\end{proof}
%\vfill\eject
\begin{subsection}{Traces, layer operators and Dirichlet-to-Neumann maps.}
\begin{subsubsection}{Trace maps and single-layer operators on $d$-sets.} Here we begin recalling some results about $d$-sets which are needed below (see \cite{JW} and \cite{Trib} for more details).\par 
A Borel set $\Gamma\subset\RE^n$ is called a $d$-set, $0<d<n$, if  there exists a Borel
measure $\mu$ in $\RE^n$ such that $\supp(\mu)=\Gamma$ and
\begin{equation}\label{dset}
\exists\, c_{\pm}>0\,:\, \forall x\in \Gamma,\ \forall r\in(0,1),\ 
c_{-}r^d\le\mu(B_{r}^{x}\cap \Gamma)\le c_{+}r^d,
\end{equation}
where $B_{r}^{x}$ is the ball in $\RE^{n}$ of radius $r$ centered at the point
$x$ (see e.g. \cite[Definition 3.1]{Trib}). By \cite[Theorem 3.4]{Trib}, once $\Gamma$ is a $d$-set, $\mu^{d}_\Gamma$, the
$d$-dimensional Hausdorff measure restricted to $\Gamma$, always satisfies \eqref{dset} and so $\Gamma$ has
Hausdorff dimension $d$ in the neighborhood    of any of its
points. \par
Examples of $d$-sets are, whenever $d$ is an integer number, finite unions of $d$-dimensional Lipschitz manifolds 
which intersect on a set of zero $d$-dimensional Hausdorff
measure and, whenever $d$ is not an integer, self-similar fractals of Hausdorff
dimension $d$.   
\par
Let $\gamma^{0}_{\Gamma}$ be  the map defined by the restriction of $u\in C^{\infty}_{comp}(\RE^{n})$ along the set  $\Gamma$: $\gamma^{0}_{\Gamma}u:=u|\Gamma$.  Then, by \cite[Theorem 1, Chapter VII]{JW},  such a map has a bounded and surjective extension to $H^{s+\frac{n-d}2}(\RE^{n})$ for any $s>0$:
\be\label{0trace}
\gamma_{\Gamma}^{0}:H^{s+\frac{n-d}2}(\RE^{n})\to B_{2,2}^{s}(\Gamma)\,.
%\gamma^{0}_{\Gamma}: H^{2}(\RE^{n})\to B^{s}_{2,2}(\Gamma)\,,\quad s=2-\frac{n-d}2\,.
\ee
Here the Hilbert space $B^{s}_{2,2}(\Gamma)$ is a Besov-like space (see \cite[Section 2, Chapter V]{JW} for the precise definitions). Notice that if $\Gamma$ is a manifold of class $C^{\kappa,1}$, $\kappa\ge 0$, then $B^{s}_{2,2}(\Gamma)=H^{s}(\Gamma)$ for any $s\le\kappa+1$, where $H^{s}(\Gamma)$ denotes the usual fractional Sobolev space on $\Gamma$ (see e.g. \cite[Chapter 3]{McL}). Moreover, in the
case 
$0<s<1$, $B^{s}_{2,2}(\Gamma)$ can
be defined (see \cite[Section 1.1, chap. V]{JW}) as the set of $\phi\in L^2(\Gamma,\mu^{d}_{\Gamma})$ having finite norm
$$\|\phi\|^2_{B^{s}_{2,2}(\Gamma)}:=
\int_{\Gamma}|\phi(x)|^{2}\,d\mu^{d}_{\Gamma}(x)+
\int_{\{(x,y)\in\Gamma\times\Gamma:
|x-y|<1\}}
\frac{|\phi(x)-\phi(y)|^2}{|x-y|^{d+2s}}\,d(\mu^{d}_{\Gamma}\times\mu^{d}_{\Gamma})(x,y)\, .
$$
Since such a norm coincides with the usual norm in $H^{s}(\Gamma)$ whenever $\Gamma$ is a Lipschitz hypersurface, for successive convenience we use the notation $B^{s}_{2,2}(\Gamma)\equiv H^{s}(\Gamma)$ whenever $0<s<1$. We also use the following notations for the dual (with respect to the $L^{2}(\Gamma)$-pairing) spaces: $(B^{s}_{2,2}(\Gamma))^{*}\equiv B^{-s}_{2,2}(\Gamma)$ and, whenever $0<s<1$, $(H^{s}(\Gamma))^{*}\equiv H^{-s}(\Gamma)$.\par
By \cite[Proposition 20.5]{Trib}, one has, similarly to the regular case,
\be\label{emb}
\text{$\Gamma$ compact $d$-set}\quad\Longrightarrow\quad\text{the embedding $B^{s_{2}}_{2,2}(\Gamma)\hookrightarrow B^{s_{1}}_{2,2}(\Gamma)$, $s_{2}>s_{1}$, is compact.}
\ee 
\be\label{embLp}
\text{$\Gamma$ compact $d$-set}\quad\Longrightarrow\quad\text{the embedding $B^{s}_{2,2}(\Gamma)\hookrightarrow L^{\frac{2d}{d-2s}}(\Gamma)$,  $0<2s<d$, is compact.}
\ee 
In the following the resolvent $R^{0}_{z}\equiv(-\Delta+z)^{-1}$, $z\in\CO\backslash(-\infty,0]$, is viewed as a map in $\B(H^{s}(\RE^{n}), H^{s+2}(\RE^{n}))$, $s\in\RE$. Given $s>0$, by the mapping properties  \eqref{0trace} one gets, for the adjoint of the trace map, 
$$(\gamma_{\Gamma}^{0})^{*}:B_{2,2}^{-s}(\Gamma)\to H^{-s-\frac{n-d}2}(\RE^{n})$$ and so we can define the bounded operator (the single-layer potential)
\be\label{SL}
\SL_{z}:=R^{0}_{z}(\gamma_{\Gamma}^{0})^{*}: B_{2,2}^{-s}(\Gamma)\to H^{2-s-\frac{n-d}2}(\RE^{n})
\,.
\ee
Notice that $\SL_{z}=G_{z}$ whenever $\tau=\gamma_{\Gamma}^{0}$ and $s=2-\frac{n-d}2$. By resolvent identity one has (compare with \eqref{Gwz})
\be\label{Swz}
\SL_{z}=\SL_{w}+(w-z)R^{0}_{z}\SL_{w}\,.
\ee
If $n-2<d<n$, by \eqref{0trace} and \eqref{SL}, one obtains  the bounded operator
$$
\gamma_{\Gamma}^{0}\SL_{z}: B_{2,2}^{-s}(\Gamma)\to B_{2,2}^{2-s-(n-d)}(\Gamma)\,.
$$
Since the map $z\mapsto R^{0}_{z}$ is analytic on $\CO\backslash(-\infty,0]$ to $\B(H^{s}(\RE^{n}),H^{s+2}(\RE^{n}))$ for any $s\in\RE$, the maps $z\mapsto \SL_{z}$ and $z\mapsto \gamma_{\Gamma}^{0}\SL_{z}$ are analytic as well, with values in $\B(B_{2,2}^{-s}(\Gamma),H^{2-s-\frac{n-d}2}(\RE^{n}))$ and $\B(B_{2,2}^{-s}(\Gamma), B_{2,2}^{2-s-(n-d)}(\Gamma))$ respectively .
\par
By \eqref{LAP-Sob}, duality and interpolation one gets 
\be\label{interp}
R^{0,\pm}_{\lambda}\in \B(H^{-s}_{w}(\RE^{n}),H^{-s+2}_{-w}(\RE^{n}))\,,\qquad 0\le s\le 2\,.
\ee 
Thus, since $\Gamma$ is bounded, 
the limits $$
\SL^{\pm}_{\lambda}:=R^{0,\pm}_{\lambda}(\gamma_{\Gamma}^{0})^{*}=\lim_{\epsilon\downarrow 0}(\gamma_{\Gamma}^{0}R^{0}_{\lambda\mp i\epsilon})^{*}
$$
exist in $\B(B_{2,2}^{-s}(\Gamma), H^{2-s-\frac{n-d}2}_{-w}(\RE^{n}))$, $0<s\le 2-\frac{n-d}2$. 
\end{subsubsection}
\begin{subsubsection}{Single- and double-layer operators on Lipschitz boundaries} Let $\Gamma$ be the boundary of a bounded Lipschitz domain $\Omega$; we set $\Omega_{\-}\equiv\Omega$ and $\Omega_{\+}:=\RE^{n}\backslash\overline\Omega$.  In the following $\Delta_{\Omega_{\-/\+}}$ denote the distributional Laplacians on $\Omega_{\-/\+}$.%; analogously $\Delta_{\RE^{n}\backslash\Gamma}$ denotes the distributional Laplacian on $\RE^{n}\backslash\Gamma$. 
\par 
The one-sided, zero and first order, trace operators $\gamma_{\Gamma}^{0,\-/\+}$ and $\gamma_{\Gamma}^{1,\-/\+}=\nu_{\Gamma} \cdot\gamma_{\Gamma}^{0,\-/\+}\nabla$ ($\nu_{\Gamma} $ denoting the outward normal vector at the boundary) defined on
smooth functions in $\mathcal{C}_{comp}^{\infty}(  \overline{\Omega}_{\-/\+
})  $ extend to bounded and surjective linear operators (see e.g. \cite[Theorem
3.38]{McL})
\begin{equation}
\gamma_{\Gamma}^{0,\-/\+}\in{\B}(H^{s+1/2}(  \Omega_{\-/\+})  ,H^{s}(  \Gamma)  )\,,\qquad 0<s< 1\,.
\label{Trace_Gamma_plusmin_est}%
\end{equation}
and
\begin{equation}
\gamma_{\Gamma}^{1,\-/\+}\in{\B}(H^{s+3/2}(  \Omega_{\-/\+})  ,H^{s}(  \Gamma)  )\,,\qquad 0<s<1
\label{Trace_Gamma_plusmin_est_1}%
\end{equation}
(we refer to \cite[Chapter 3]{McL} for the definition of the Sobolev spaces $H^{s}(\Omega_{\-/\+})$ and $H^{s}(\Gamma)$). Using these maps and setting $H^{s}(  \RE^n\backslash\Gamma ):=H^{s}(  \Omega_{\-})  \oplus H^{s}(  \Omega
_{\+})$, the two-sided bounded and surjective trace operators are defined according
to%
\begin{equation}
\gamma_{\Gamma}^{0}:H^{s+1/2}(  \RE^n\backslash\Gamma )\rightarrow H^{s}(\Gamma)\,,\quad\gamma_{\Gamma}^{0}%
(u_{\-}\oplus u_{\+}):=\frac{1}{2}(\gamma_{\Gamma}^{0,\-}u_{\-}+\gamma_{\Gamma}^{0,\+}u_{\+})\,,
\label{trace_ext_0}%
\end{equation}%
\begin{equation}
\gamma_{\Gamma}^{1}:H^{s+3/2}(  \RE^n\backslash\Gamma )\rightarrow H^{s}(\Gamma)\,,\quad\gamma_{\Gamma}^{1}%
(u_{\-}\oplus u_{\+}):=\frac{1}{2}(\gamma_{\Gamma}^{1,\-}u_{\-}+\gamma_{\Gamma}^{1,\+}u_{\+})\,,
\label{trace_ext_1}%
\end{equation}
while the corresponding jumps are%
\begin{equation}
[\gamma_{\Gamma}^{0}]:H^{s+1/2}(  \RE^n\backslash\Gamma )  \rightarrow H^{s}(\Gamma)\,,\quad[
\gamma_{\Gamma}^{0}](u_{\-}\oplus u_{\+}):=\gamma_{\Gamma}^{0,\-}u_{\-}-\gamma_{\Gamma}^{0,\+}u_{\+}\,,
\end{equation}%
\begin{equation}
[\gamma_{\Gamma}^{1}]:H^{s+3/2}(  \RE^n\backslash\Gamma )\rightarrow H^{s}(\Gamma)\,,\quad[
\gamma_{\Gamma}^{1}](u_{\-}\oplus u_{\+}):=\gamma_{\Gamma}^{1,\-}u_{\-}-\gamma_{\Gamma}^{1,\+}u_{\+}\,.
\end{equation}
Let us notice that in the case $u=u_{\-}\oplus u_{\+}\in H^{s+1/2}(\RE^{n})$, $0<s< 1$, $\gamma^{0}_{\Gamma}$ in \eqref{trace_ext_0} coincides with the map defined in \eqref{0trace} and so there is no ambiguity in our notations; this also entails that $\gamma^{0}_{\Gamma}$ remains  surjective even if restricted to $H^{2}(\RE^{n})$. Similarly the map $\gamma^{1}_{\Gamma}$ is surjective onto $H^{s}(\Gamma)$ even if restricted to $H^{s+3/2}(\RE^{n})$.\par
By \cite[Lemma 4.3]{McL}, the trace maps $\gamma_{\Gamma}^{1,\-/\+}$ can be extended to the spaces $$H^{1}_{\Delta}(\Omega_{\-/\+}):=\{u_{\-/\+}\in H^{1}(\Omega_{\-/\+}):\Delta_{\Omega_{\-/\+}}u_{\-/\+}\in L^{2}(\Omega_{\-/\+})\}\,:$$ 
$$
\gamma_{\Gamma}^{1,\-/\+}: H^{1}_{\Delta}(\Omega_{\-/\+})\to H^{-1/2}(\Gamma)\,.
$$
This gives the analogous extensions of the maps $\gamma_{\Gamma}^{1}$ and $[\gamma_{\Gamma}^{1}]$ defined on $H^{1}_{\Delta}(\RE^{n}\backslash\Gamma):=H^{1}_{\Delta}(\Omega_{\-})\oplus H^{1}_{\Delta}(\Omega_{\+})$ with values in $H^{-1/2}(\Gamma)$. \par
By using a cut-off function $\chi\in \mathcal{C}_{comp}^{\infty}( \RE^{n})$ such that $\chi=1$ in a neighborhood    of $\Omega_{\-}$, all the maps defined above can be extended (and we use the same notation) to functions $u$ such that $\chi u$ is in the right function space.\par
The single-layer operator $\SL_{z}$ has been already introduced in the previous subsection, see \eqref{SL}; here we recall the definition of double-layer operator $\DL_{z}$, $z\in\CO\backslash(-\infty,0]$: by the dual map
$$
(\gamma_{\Gamma}^{1})^{*}:H^{-s}(\Gamma)\to H^{-s-3/2}(\RE^{n})
$$
and by the resolvent $R^{0}_{z}\in\B(H^{s}(\RE^{n}),H^{s+2}(\RE^{n}))$, one defines the bounded operator
\be\label{DL}
\DL_{z}:H^{-s}(\Gamma)\to H^{-s+1/2}(\RE^{n})\,,\quad \DL_{z}:=R^{0}_{z}(\gamma_{\Gamma}^{1})^{*}\,,\quad 0<s< 1\,.
\ee
Let us notice that $\DL_{z}=G_{z}$ whenever $\tau=\gamma_{\Gamma}^{1}$ and $s=\frac12$.
By resolvent identity one has (compare with \eqref{Gwz})
\be\label{Dwz}
\DL_{z}=\DL_{w}+(z-w)R^{0}_{z}\DL_{w}\,.
\ee
By the mapping properties of the layer operators, one gets (see \cite[Theorem 6.11]{McL})
\be\label{map}
\chi\SL_{z}\in \B(H^{-1/2}(\Gamma),H^{1}(\RE^{n}))\,,\qquad\chi\DL_{z}\in \B(H^{1/2}(\Gamma),H^{1}(\RE^{n}\backslash\Gamma))\,,
\ee
for any $\chi\in C^{\infty}_{comp}(\RE^{n})$; by $(-(\Delta_{\Omega_{\-}}\oplus\Delta_{\Omega_{\+}})+z)\SL_{z}\phi=(-(\Delta_{\Omega_{\-}}\oplus\Delta_{\Omega_{\+}})+z)\DL_{z}\varphi=0$, one gets $\chi\SL_{z}\phi\in H^{1}_{\Delta}(\RE^{n}\backslash\Gamma)$, $\phi\in H^{1/2}(\Gamma)$, and $\chi\DL_{z}\varphi\in H^{1}_{\Delta}(\RE^{n}\backslash\Gamma)$, $\varphi\in H^{-1/2}(\Gamma)$. Thus
$$
\gamma_{\Gamma}^{0}\SL_{z}\in \B(H^{-1/2}(\Gamma),H^{1/2}(\Gamma))\,,\qquad 
\gamma_{\Gamma}^{1}\DL_{z}\in \B(H^{1/2}(\Gamma),H^{-1/2}(\Gamma))\,.
$$
These mapping properties can be extended to a larger range of Sobolev spaces (see \cite[Theorem 6.12 and successive remarks]{McL}): 
$$
\chi\SL_{z}\in \B(H^{s-1/2}(\Gamma),H^{s+1}(\RE^{n}))\,,\quad 
%\chi\in C^{\infty}_{comp}(\RE^{n})\,,\quad -\frac12\le s\le \frac12\,, 
%$$
%$$
\chi\DL_{z}\in \B(H^{s+1/2}(\Gamma),H^{s+1}(\RE^{n}\backslash\Gamma))\,,\quad
%\chi\in C^{\infty}_{comp}(\RE^{n})\,,\quad 
-1/2\le s\le 1/2\,, 
$$
$$
\gamma_{\Gamma}^{0}\SL_{z}\in \B(H^{s-1/2}(\Gamma),H^{s+1/2}(\Gamma))\,,\quad 
%-\frac12\le s\le \frac12\,, 
%$$
%$$
\gamma_{\Gamma}^{1}\DL_{z}\in \B(H^{s+1/2}(\Gamma),H^{s-1/2}(\Gamma))\,,\quad 
-1/2\le s\le 1/2\,
$$
and the following jump relations holds (see e.g. \cite[Theorem 6.11]{McL})
\be\label{jump}
[\gamma_{\Gamma}^{0}]\SL_{z}\phi=0\,,\quad [\gamma_{\Gamma}^{1}]\SL_{z}\phi=-\phi\,,
\quad 
[\gamma_{\Gamma}^{0}]\DL_{z}\varphi=\varphi\,,\quad [\gamma_{\Gamma}^{1}]\DL_{z}\varphi=0\,.
\ee
Since the map $z\mapsto R^{0}_{z}$ is analytic on $\CO\backslash(-\infty,0]$ to $\B(H^{s}(\RE^{n}),H^{s+2}(\RE^{n}))$, the maps $z\mapsto \gamma_{\Gamma}^{0}\SL_{z}$, $z\mapsto \gamma_{\Gamma}^{1}\DL_{z}$, are analytic as well.
\par
By \eqref{interp}, since $\Gamma$ is bounded, 
the limits 
$$
\SL^{\pm}_{\lambda}:=R^{0,\pm}_{\lambda}(\gamma_{\Gamma}^{0})^{*}=\lim_{\epsilon\downarrow 0}\SL_{\lambda\pm i\epsilon}
\,,\qquad
\DL^{\pm}_{\lambda}:=R^{0,\pm}_{\lambda}(\gamma_{\Gamma}^{1})^{*}=\lim_{\epsilon\downarrow 0}\DL_{\lambda\pm i\epsilon}
$$
exist in $\B(B_{2,2}^{-s}(\Gamma), H^{3/2-s}_{-w}(\RE^{n}))$, $0<s\le 3/2$, and
$\B(H^{-s}(\Gamma), H^{1/2-s}_{-w}(\RE^{n}))$, $0<s\le 1/2$, respectively. Moreover, by the identities \eqref{Swz},\eqref{Dwz} and by $\SL_{z}\in \B(B_{2,2}^{-3/2}(\Gamma), L^{2}_{w}(\RE^{n}))$, $\DL_{z}\in \B(H^{-1/2}(\Gamma), L^{2}_{w}(\RE^{n}))$ (see \cite[relation (4.10)]{JST}) one has
\be\label{FR}
\SL^{\pm}_{\lambda}=\SL_{z}+(z-\lambda)R^{0,\pm}_{\lambda}\SL_{z}\,,\quad 
\DL^{\pm}_{\lambda}=\DL_{z}+(z-\lambda)R^{0,\pm}_{\lambda}\DL_{z}\,.
\ee
This entails, for any $ -1/2\le s\le 1/2$,
\be\label{map-LAP}
\chi\SL^{\pm}_{\lambda}\in \B(H^{s-1/2}(\Gamma),H^{s+1}(\RE^{n}))\,,\quad 
\chi\DL^{\pm}_{\lambda}\in \B(H^{s+1/2}(\Gamma),H^{s+1}(\RE^{n}\backslash\Gamma))\,,
\ee
\be\label{0-1-LAP}
\gamma_{\Gamma}^{0}\SL^{\pm}_{\lambda}\in \B(H^{s-1/2}(\Gamma),H^{s+1/2}(\Gamma))\,,\quad 
\gamma_{\Gamma}^{1}\DL^{\pm}_{\lambda}\in \B(H^{s+1/2}(\Gamma),H^{s-1/2}(\Gamma))\,,
\ee
and, by \eqref{jump} and \eqref{FR},
%since $(-\Delta_{\Omega_{\-/\+}}+\lambda)(\SL^{\pm}_{\lambda}\phi|\Omega_{\-/\+})=
%(-\Delta_{\Omega_{\-/\+}}+\lambda)(\DL^{\pm}_{\lambda}\varphi|\Omega_{\-/\+})=0$,
\be\label{jump-LAP}
[\gamma_{\Gamma}^{0}]\SL^{\pm}_{\lambda}\phi=0\,,\quad [\gamma_{\Gamma}^{1}]\SL^{\pm}_{\lambda}\phi=-\phi\,,
\quad 
[\gamma_{\Gamma}^{0}]\DL^{\pm}_{\lambda}\varphi=\varphi\,,\quad [\gamma_{\Gamma}^{1}]\DL^{\pm}_{\lambda}\varphi=0\,.
\ee
Since the maps $z\mapsto R^{0,\pm}_{z}$ are continuous on $\overline{\CO_{\pm}}\backslash\{0\}$ to $\B(L^{2}_{w}(\RE^{n}), H^{2}_{-w}(\RE^{n}))$, the maps 
$$
z\mapsto \gamma_{\Gamma}^{0}\SL_{z}^{\pm}:=\begin{cases}\gamma_{\Gamma}^{0}\SL_{z}\,,& z\in\CO\backslash(-\infty,0]\\
\gamma_{\Gamma}^{0}\SL_{\lambda}^{\pm}\,,&z=\lambda\in(-\infty,0)\,,\end{cases}
$$
$$
z\mapsto \gamma_{\Gamma}^{1}\DL_{z}^{\pm}:=\begin{cases}\gamma_{\Gamma}^{1}\DL_{z}\,,& z\in\CO\backslash(-\infty,0]\\
\gamma_{\Gamma}^{1}\DL_{\lambda}^{\pm}\,,&z=\lambda\in(-\infty,0)\,,\end{cases}
$$
are 
continuous as well. 
\end{subsubsection}
\begin{subsubsection}{The Dirichlet-to-Neumann and Neumann-to-Dirichlet operators.} Let $\Omega\subset\RE^{n}$ be a bounded Lipschitz domain and let us consider the boundary value problems (here $\Omega_{\-}\equiv\Omega$ and $\Omega_{\+}:=\RE^{n}\backslash\overline\Omega$ as in the previous subsection)  
\begin{equation}
\left\{
\begin{array}
[c]{l}%
(-\Delta_{\Omega_{\-}}+z)  u^{z,\-}_{\phi}=0\,,\quad z\in\rho(\Delta^{D}_{\Omega_{\-}})
\\
\gamma_{\Gamma}^{0,\-}u^{z,\-}_{\phi}=\phi\in H^{1/2}(\Gamma)
\end{array}
\right.\quad \left\{
\begin{array}
[c]{l}%
(-\Delta_{\Omega_{\-}}+z)  v^{z,\-}_{\varphi}=0\,,\quad z\in\rho(\Delta^{N}_{\Omega_{\-}})
\\
\gamma_{\Gamma}^{1,\-}v^{z,\-}_{\varphi}=\varphi\in H^{-1/2}(\Gamma)
\end{array}
\right.    \label{in}%
\end{equation}
and
\begin{equation}
\left\{
\begin{array}
[c]{l}%
(-\Delta_{\Omega_{\+}}+z)  u^{z,\+}_{\phi}=0\,,\quad z\in\rho(\Delta^{D}_{\Omega_{\+}})
\\
\gamma_{\Gamma}^{0,\+}u^{z,\+}_{\phi}=\phi\in H^{1/2}(\Gamma)
\\
\text{$u^{z,\+}_{\phi}$ radiating}
\end{array}
\right.  
\quad \left\{
\begin{array}
[c]{l}%
(-\Delta_{\Omega_{\+}}+z)  v^{z,\+}_{\varphi}=0\,,\quad z\in\rho(\Delta^{N}_{\Omega_{\+}})
\\
\gamma_{\Gamma}^{1,\+}v^{z,\+}_{\varphi}=\varphi\in H^{-1/2}(\Gamma)
\\
\text{$v^{z,\+}_{\varphi}$ radiating}
\end{array}
\right. 
\label{ex}%
\end{equation}
where $\Delta^{D}_{\Omega_{\-/\+}}$ and $\Delta^{N}_{\Omega_{\-/\+}}$ denote the Dirichlet and Neumann Laplacian respectively;  we refer to \cite[Definition 9.5]{McL} for the definition of radiating solutions in the exterior problem. 
By \cite[Theorem 4.10(i)]{McL},  the solutions $u^{z,\-}_{\phi}$ and $v_{\varphi}^{z,\-}$ of \eqref{in} exist and are unique in $H_{\Delta}^{1}(\Omega_{\-})$; by \cite[Theorem 9.11 and Exercise 9.5]{McL} the solutions $u^{z,\+}_{\phi}$ and $v^{z,\+}_{\varphi}$ of \eqref{ex} exist and are unique in $H_{\Delta,loc}^{1}(\Omega_{\+}):=\{u: u|\Omega_{\+}\cap B\in H_{\Delta}^{1}(\Omega_{\+}\cap B)\ \text{for any open ball $B\supset\overline\Omega$} \}$. Therefore the Dirichlet-to-Neumann and Neumann-to-Dirichlet operators
$$
{P}_{z}^{\-/\+}:H^{1/2}(\Gamma)\to H^{-1/2}(\Gamma)\,,\quad z\in\rho(\Delta^{D}_{\Omega_{\-}})\,,\qquad {P}_{z}^{\-/\+}\phi:=\gamma_{\Gamma}^{1,\-/\+}u^{z,\-/\+}_{\phi}\,,$$
$$
{Q}_{z}^{\-/\+}:H^{-1/2}(\Gamma)\to H^{1/2}(\Gamma)\,,\quad z\in\rho(\Delta^{N}_{\Omega_{\-}})\,,\qquad {Q}_{z}^{\-/\+}\varphi:=\gamma_{\Gamma}^{0,\-/\+}v^{z,\-/\+}_{\varphi}
$$
are well-defined.\par
Let $\tilde\phi\in H^{-1/2}(\Gamma)$ and  $\tilde\varphi\in H^{1/2}(\Gamma)$; the functions $\SL^{+}_{z}\tilde\phi_{z}|\Omega_{\-/\+}$ and $\DL^{+}_{z}\tilde\varphi_{z}|\Omega_{\-/\+}$ solve \eqref{in} and \eqref{ex} with $\phi=\gamma_{\Gamma}^{0}\SL^{+}_{z}\tilde\phi$ and $\varphi=\gamma_{\Gamma}^{1}\DL^{+}_{z}\tilde\varphi$ (they are radiating according to \cite[Lemma 5.3]{JST}). By \eqref{jump} and \eqref{jump-LAP},
$$
({P}_{z}^{\+}-{P}_{z}^{\-})\gamma_{\Gamma}^{0}\SL^{+}_{z}\tilde\phi=
\gamma_{\Gamma}^{1,\+}(\SL^{+}_{z}\tilde\phi|\Omega_{\+}) -
\gamma_{\Gamma}^{1,\-}(\SL^{+}_{z}\tilde\phi|\Omega_{\-})=[\gamma_{\Gamma}^{1}]\SL^{+}_{z}\tilde\phi=-\tilde\phi
\,,$$
$$
({Q}_{z}^{\+}-{Q}_{z}^{\-})\gamma_{\Gamma}^{1}\DL^{+}_{z}\tilde\varphi=
\gamma_{\Gamma}^{0,\+}(\DL^{+}_{z}\tilde\varphi|\Omega_{\+}) -
\gamma_{\Gamma}^{0,\-}(\DL^{+}_{z}\tilde\varphi|\Omega_{\-})=[\gamma_{\Gamma}^{0}]\DL^{+}_{z}\tilde\varphi=\tilde\varphi\,.
$$
This shows that 
\be\label{ker}
\ker(\gamma_{\Gamma}^{0}\SL^{+}_{z})=\ker(\gamma_{\Gamma}^{1}\DL^{+}_{z})=\{0\}\,.
\ee
By %\eqref{Swz} and \eqref{Dwz}, 
\eqref{FR},
one has  
$$
\gamma_{\Gamma}^{0}\SL^{\pm}_{\lambda}=\gamma_{\Gamma}^{0}\SL_{z}+(z-\lambda)\gamma_{\Gamma}^{0}R^{0,\pm}_{\lambda}\SL_{z}\,,\quad 
\gamma_{\Gamma}^{1}\DL^{\pm}_{\lambda}=\gamma_{\Gamma}^{1}\DL_{z}+(z-\lambda)\gamma_{\Gamma}^{1}R^{0,\pm}_{\lambda}\DL_{z}\,.
$$
Since $\ran(\gamma_{\Gamma}^{0}R^{0,\pm}_{\lambda})\subseteq B^{3/2}_{2,2}(\Gamma)$ and $\ran(\gamma_{\Gamma}^{1}R^{0,\pm}_{\lambda})\subseteq H^{1/2}(\Gamma)$, by the compact embeddings \eqref{emb}, one gets  
$$\gamma_{\Gamma}^{0}\SL^{\pm}_{\lambda}-\gamma_{\Gamma}^{0}\SL_{z}\in {\mathfrak S}_{\infty}(H^{-1/2}(\Gamma),H^{1/2}(\Gamma))$$ 
and  
$$\gamma_{\Gamma}^{1}\DL^{\pm}_{\lambda}-\gamma_{\Gamma}^{1}\DL_{z}\in {\mathfrak S}_{\infty}(H^{1/2}(\Gamma),H^{-1/2}(\Gamma))\,.
$$ 
By \cite[Theorems 7.6 and 7.8]{McL}), both $\gamma_{\Gamma}^{0}\SL_{z}$ and $\gamma_{\Gamma}^{1}\DL_{z}$ are Fredholm with zero index; therefore both $\gamma_{\Gamma}^{0}\SL^{\pm}_{z}$ and $\gamma_{\Gamma}^{1}\DL^{\pm}_{z}$ are Fredholm with zero index as well. Thus, by \eqref{ker}, both $\gamma_{\Gamma}^{0}\SL^{+}_{z}$ and $\gamma_{\Gamma}^{1}\DL^{+}_{z}$ have bounded inverses and 
\begin{align}
(\gamma_{\Gamma}^{0}\SL^{+}_{z})^{-1}=&{P}_{z}^{\-}-{P}_{z}^{\+}\,,\quad z\in\overline{\CO_{+}}\backslash\big(\sigma_{disc}(\Delta^{D}_{\Omega_{\-}})\cup\sigma_{disc}(\Delta^{D}_{\Omega_{\+}})\big)\cup\big(\CO\backslash(-\infty,0]\big)\,,\label{P-Q-1}
\\
(\gamma_{\Gamma}^{1}\DL^{+}_{z})^{-1}=&{Q}_{z}^{\+}-{Q}_{z}^{\-}\,,
\quad z\in \overline{\CO_{+}}\backslash\big(\sigma_{disc}(\Delta^{N}_{\Omega_{\-}})\cup\sigma_{disc}(\Delta^{N}_{\Omega_{\+}})\big)\cup\big(\CO\backslash(-\infty,0]\big)\,.\label{P-Q-2}
\end{align}
By the mapping properties of the layer operators, for all $s\in
[0,1/2]  $ the maps $z\mapsto {P}_{z}^{\+}-{P}_{z}^{\-}$ and $z\mapsto {Q}_{z}^{\+}-{Q}_{z}^{\-}$
are analytic on $\CO\backslash(-\infty,0]$ to ${\B}(  H^{s+1/2}(  \Gamma)  ,H^{s-1/2}(
\Gamma))$ and to ${\B}(  H^{s-1/2}(  \Gamma)  ,H^{s+1/2}(
\Gamma)) $ respectively.
\end{subsubsection}
\begin{subsubsection}{The minimal and maximal Laplacian on Lipschitz domains.}
Let $\Delta^{\circ}_{\Omega_{\-/\+}}$ denote the Laplacian in $L^{2}(\Omega_{\-/\+})$ with 
domain $\dom(\Delta^{\circ}_{\Omega_{\-/\+}})= C^{\infty}_{comp}(\Omega_{\-/\+})$. It is immediate to check (see e.g. \cite[Section 2.3]{Leis}) that $(\Delta^{\circ}_{\Omega_{\-/\+}})^{*}=\Delta^{\max}_{\Omega_{\-/\+}}$, where $\Delta^{\max}_{\Omega_{\-/\+}}$ denotes the distributional Laplacian  with domain $$\dom(\Delta^{\max}_{\Omega_{\-/\+}})=H^{0}_{\Delta}(\Omega_{\-/\+}):=\{u_{\-/\+}\in L^{2}(\Omega_{\-/\+}):\Delta_{\Omega_{\-/\+}}u_{\-/\+}\in L^{2}(\Omega_{\-/\+})\}\,.
$$ 
Moreover $\Delta^{\circ}_{\Omega_{\-/\+}}$ is closable with closure given by $\overline{ \Delta^{\circ}_{\Omega_{\-/\+}}}=\Delta^{\min}_{\Omega_{\-/\+}}$ (see \cite[Section 2.3]{Leis}), where $\Delta^{\min}_{\Omega_{\-/\+}}$ denotes the distributional Laplacian with domain $\dom(\Delta^{\min}_{\Omega_{\-/\+}})=H^{2}_{0}(\Omega_{\-/\+})$ and $H^{2}_{0}(\Omega_{\-/\+})$ denotes as usual the completion of $C^{\infty}_{comp}(\Omega_{\-/\+})$ with respect to the $H^{2}$-norm. Therefore 
\be\label{min-max}
(\Delta^{\min}_{\Omega_{\-/\+}})^{*}=\Delta^{\max}_{\Omega_{\-/\+}}\,.
\ee
Since  $$H^{2}_{0}(\Omega_{\-/\+})=\{u_{\-/\+}\in H^{2}(\Omega_{\-/\+}):\gamma_{\Gamma}^{0,\+/\-}u_{\-/\+}=\gamma_{\Gamma}^{1,\+/\-}u_{\-/\+}=0\}$$ (see \cite[Theorem 1]{Mar}) and $H^{2}(\RE^{n})=(H^{2}(\Omega_{\-})\oplus H^{2}(\Omega_{\+}))\cap\ker([\gamma_{\Gamma}^{0}])\cap \ker([\gamma_{\Gamma}^{1}])$ (see e.g. \cite[Theorem 3.5.1]{Agra}), one has 
\be\label{minimal}
\Delta^{\min}_{\Omega_{\-}}\oplus\Delta^{\min}_{\Omega_{\+}}=\Delta|\ker(\tau)\,,\qquad\tau=\gamma_{\Gamma}^{0}\oplus\gamma_{\Gamma}^{1}\,.
\ee
Notice that for a generic Lipschitz boundary $\ran(\tau)$ is strictly contained in $B^{3/2}_{2,2}(\Gamma)\oplus H^{1/2}(\Gamma)$ (see \cite[Corollary 7.11]{MMS}), while $\ran(\tau)=H^{3/2}(\Gamma)\oplus H^{1/2}(\Gamma)$ whenever 
$\Gamma$ is of class $C^{\kappa,1}$, $\kappa>\frac12$ (see  \cite[Theorem 2]{Mar}).\par  
By Green's formula (see e.g. \cite[Theorem 4.4]{McL}), for any couple  $u_{\-/\+}$, $v_{\-/\+}$ in $H^{1}_{\Delta}(\Omega_{\-/\+})$  one has
\begin{align*}
&\langle\Delta_{\Omega_{\-/\+}}u_{\-/\+},v_{\-/\+}\rangle_{L^{2}(\Omega_{\-/\+})}-
\langle u_{\-/\+},\Delta_{\Omega_{\-/\+}}v_{\-/\+}\rangle_{L^{2}(\Omega_{\-/\+})}\\
=&j_{\-/\+}\langle \gamma_{\Gamma}^{0,\-/\+}u_{\-/\+},\gamma_{\Gamma}^{1,\-/\+}v_{\-/\+}\rangle_{H^{1/2}(\Gamma),H^{-1/2}(\Gamma)}\\
-&j_{\-/\+}\langle \gamma_{\Gamma}^{1,\-/\+}u_{\-/\+},\gamma_{\Gamma}^{0,\-/\+}v_{\-/\+}\rangle_{H^{-1/2}(\Gamma),H^{1/2}(\Gamma)}\,,
\end{align*}
where $j_{\-}=-1$ and $j_{\+}=+1$. Therefore, for any couple $u=u_{\-}\oplus u_{\+}$, $v=v_{\-}\oplus v_{\+}$ in $H^{1}_{\Delta}(\RE^{n}\backslash\Gamma)\cap \ker([\gamma_{\Gamma}^{0}])= H^{1}(\RE^{n})\cap H^{0}_{\Delta}(\RE^{n}\backslash\Gamma)$ one has  
\be\begin{split}
&\langle(\Delta_{\Omega_{\-}}\oplus \Delta_{\Omega_{\+}})u,v\rangle_{L^{2}(\RE^{n})}-
\langle u ,(\Delta_{\Omega_{\-}}\oplus \Delta_{\Omega_{\+}})v\rangle_{L^{2}(\RE^{n})}
\\
=&\langle \gamma_{\Gamma}^{0}u,[\gamma_{\Gamma}^{1}]v\rangle_{H^{1/2}(\Gamma),H^{-1/2}(\Gamma)}
-\langle [\gamma_{\Gamma}^{1}]u,\gamma_{\Gamma}^{0}v\rangle_{H^{-1/2}(\Gamma),H^{1/2}(\Gamma)}
\end{split}\label{Green0}\ee
and, for any couple $u=u_{\-}\oplus u_{\+}$, $v=v_{\-}\oplus v_{\+}$ in $H^{1}_{\Delta}(\RE^{n}\backslash\Gamma)\cap \ker([\gamma_{\Gamma}^{1}])$ one has  
\be\begin{split}\label{Green1}
&\langle(\Delta_{\Omega_{\-}}\oplus \Delta_{\Omega_{\+}})u,v\rangle_{L^{2}(\RE^{n})}-
\langle u ,(\Delta_{\Omega_{\-}}\oplus \Delta_{\Omega_{\+}})v\rangle_{L^{2}(\RE^{n})}
\\
=&\langle [\gamma_{\Gamma}^{0}]u,\gamma_{\Gamma}^{1}v\rangle_{H^{1/2}(\Gamma),H^{-1/2}(\Gamma)}
-\langle \gamma_{\Gamma}^{1}u,[\gamma_{\Gamma}^{0}]v\rangle_{H^{-1/2}(\Gamma),H^{1/2}(\Gamma)}\,.
\end{split}\ee
\end{subsubsection}
\end{subsection}
\begin{subsection}{The Laplace operator with Dirichlet boundary conditions on Lipschitz domains.}\label{dirichlet} 
Here we apply Theorem \ref{LH} to a case in which $\tau=\gamma^{0}_{\Gamma}$, $\fh=B^{3/2}_{2,2}(\Gamma)$, $\fb=H^{1/2}(\Gamma)$ and $\Gamma$ is the Lipschitz boundary of a bounded open set $\Omega\subset\RE^{n}$.\par 
Let $$\Delta^{D}_{\Omega_{\-/\+}}:\dom(\Delta^{D}_{\Omega_{\-/\+}})\subset L^{2}(\Omega_{\-/\+})\to L^{2}(\Omega_{\-/\+})\,,\qquad \Delta^{D}_{\Omega_{\-/\+}}u:=\Delta_{\Omega_{\-/\+}} u\,,
$$ 
$$\dom(\Delta^{D}_{\Omega_{\-/\+}}):=\{u_{\-/\+}\in H^{1}_{\Delta}(\Omega_{\-/\+}):\gamma^{0,{\-/\+}}_{\Gamma}u_{\-/\+}=0\}\,,$$
be the Dirichlet Laplacian in $L^{2}(\Omega_{\-/\+})$. By \eqref{Green0}, $\Delta^{D}_{\Omega_{\-/\+}}$ is symmetric; in fact it is self-adjoint and the self-adjoint operator $\Delta^{D}_{\Omega_{\-}}\oplus \Delta^{D}_{\Omega_{\+}}$ has an alternative representation:  
 \begin{lemma}\label{Dir} The family of linear bounded maps $\Lambda^{D}$ 
$$ 
\Lambda^{D}_{z}:=P_{z}^{\+}-P_{z}^{\-}:H^{1/2}(\Gamma)\to H^{-1/2}(\Gamma)\,,\qquad z\in \CO\backslash(-\infty,0]\,,
$$ 
satisfies \eqref{Lambda1}-\eqref{Lambda2} and 
$$
\Delta_{\Lambda^{D}}=\Delta^{D}_{\Omega_{\-}}\oplus \Delta^{D}_{\Omega_{\+}}\,.
$$
\end{lemma}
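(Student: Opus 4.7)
The plan is to verify the two algebraic conditions \eqref{Lambda1}--\eqref{Lambda2} for the family $\Lambda^{D}_{z}=P_{z}^{\+}-P_{z}^{\-}$ and then to identify the resolvent $R^{\Lambda^{D}}_{z}$ with the resolvent of $\Delta^{D}_{\Omega_{\-}}\oplus \Delta^{D}_{\Omega_{\+}}$.

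For the first step I would invoke Remark \ref{RemL}: since by \eqref{P-Q-1} one has $(\gamma_{\Gamma}^{0}\SL_{z})^{-1}=P_{z}^{\-}-P_{z}^{\+}$ on $\CO\backslash(-\infty,0]$, the family $\Lambda^{D}_{z}$ is invertible there with $M_{z}:=(\Lambda^{D}_{z})^{-1}=-\gamma_{\Gamma}^{0}\SL_{z}=-\gamma_{\Gamma}^{0}G_{z}$. In this reformulation $M_{z}^{*}=M_{\bar z}$ is the identity $(\gamma_{\Gamma}^{0}R^{0}_{z}(\gamma_{\Gamma}^{0})^{*})^{*}=\gamma_{\Gamma}^{0}R^{0}_{\bar z}(\gamma_{\Gamma}^{0})^{*}$, which is immediate from $(R^{0}_{z})^{*}=R^{0}_{\bar z}$. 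The relation $M_{z}-M_{w}=(z-w)G^{*}_{\bar w}G_{z}$ reduces via $M_{z}=-\tau G_{z}$ and $G^{*}_{\bar w}=\tau R^{0}_{w}$ to $\tau(G_{w}-G_{z})=(z-w)\tau R^{0}_{w}G_{z}$, which is the resolvent identity \eqref{Gzw} composed with $\tau$.

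For the second step I would use the description of $\dom(\Delta_{\Lambda^{D}})$ given by \eqref{domLambda}: any $u\in\dom(\Delta_{\Lambda^{D}})$ writes $u=u_{z}+\SL_{z}(P_{z}^{\+}-P_{z}^{\-})\gamma_{\Gamma}^{0}u_{z}$ with $u_{z}\in H^{2}(\RE^{n})$. Using the zero-jump relation $[\gamma_{\Gamma}^{0}]\SL_{z}=0$ from \eqref{jump}, the two-sided trace $\gamma_{\Gamma}^{0,\pm}$ of $u$ coincides on both sides with $\gamma_{\Gamma}^{0}u_{z}+(\gamma_{\Gamma}^{0}\SL_{z})(P_{z}^{\+}-P_{z}^{\-})\gamma_{\Gamma}^{0}u_{z}$, which equals $\gamma_{\Gamma}^{0}u_{z}-\gamma_{\Gamma}^{0}u_{z}=0$ by \eqref{P-Q-1}. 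Moreover $(-\Delta_{\Lambda^{D}}+z)u=(-\Delta+z)u_{z}\in L^{2}(\RE^{n})$ by \eqref{op}, so $u|\Omega_{\pm}\in H^{1}_{\Delta}(\Omega_{\pm})\cap\ker(\gamma_{\Gamma}^{0,\pm})=\dom(\Delta^{D}_{\Omega_{\pm}})$. This shows the inclusion $\Delta_{\Lambda^{D}}\subseteq \Delta^{D}_{\Omega_{\-}}\oplus \Delta^{D}_{\Omega_{\+}}$; since both operators are self-adjoint they coincide.

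Alternatively, and perhaps more transparently, one can check the resolvent identity $R^{D}_{z}=R^{0}_{z}+G_{z}\Lambda^{D}_{z}G^{*}_{\bar z}$ directly: given $f\in L^{2}(\RE^{n})$, let $w=R^{0}_{z}f$ and set $\phi=\gamma_{\Gamma}^{0}w=G^{*}_{\bar z}f$. Then $h_{\pm}:=\SL_{z}(P_{z}^{\+}-P_{z}^{\-})\phi|\Omega_{\pm}$ solves $(-\Delta+z)h_{\pm}=0$ in $\Omega_{\pm}$ with two-sided trace $-\phi$, by the computation above and \eqref{jump}; by uniqueness of the interior/exterior Dirichlet problem \eqref{in}--\eqref{ex} one has $w|\Omega_{\pm}+h_{\pm}=(-\Delta^{D}_{\Omega_{\pm}}+z)^{-1}f_{\pm}$, hence $R^{0}_{z}f+G_{z}\Lambda^{D}_{z}G^{*}_{\bar z}f=R^{D}_{z}f$. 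The main delicate point in carrying this out is the invertibility ingredient \eqref{P-Q-1}: one must verify that it is available on the whole spectral gap $\CO\backslash(-\infty,0]$ (which follows from $\sigma(\Delta^{D}_{\Omega_{\pm}})\subset(-\infty,0]$ and the Fredholm argument leading to \eqref{P-Q-1}); once this is granted, the rest reduces to bookkeeping with the trace jump relations \eqref{jump} and the single-layer identity $G_{z}=\SL_{z}$.
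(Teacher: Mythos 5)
Your proposal is correct and follows essentially the same route as the paper: it verifies \eqref{Lambda1}--\eqref{Lambda2} via Remark \ref{RemL} with $M_{z}=-\gamma_{\Gamma}^{0}\SL_{z}$ and \eqref{P-Q-1}, then obtains the domain inclusion from \eqref{domLambda} together with the trace/jump relations, and concludes by maximality of self-adjoint operators. The only point you gloss over is why $\Delta_{\Lambda^{D}}$ acts as the distributional Laplacian on each of $\Omega_{\-}$, $\Omega_{\+}$ (needed to pass from the domain inclusion to the operator inclusion $\Delta_{\Lambda^{D}}\subseteq \Delta^{D}_{\Omega_{\-}}\oplus\Delta^{D}_{\Omega_{\+}}$): the paper gets this from $\Delta_{\Lambda^{D}}\subset(\Delta^{\min}_{\Omega_{\-}}\oplus\Delta^{\min}_{\Omega_{\+}})^{*}=\Delta^{\max}_{\Omega_{\-}}\oplus\Delta^{\max}_{\Omega_{\+}}$, whereas in your argument it follows in one line from $(-\Delta_{\Omega_{\-/\+}}+z)\SL_{z}=0$, a fact you already invoke in your alternative resolvent computation.
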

\begin{proof} At first notice that, by the definition \eqref{SL} and by resolvent identity, the operator family $M^{D}_{z}=-\gamma_{\Gamma}^{0}\SL_{z}$, $z\in \CO\backslash(-\infty,0]$, satisfies \eqref{RL}. Then, by \eqref{P-Q-1}, $\Lambda^{D}_{z}=(M_{z}^{D})^{-1}$ and so it satisfies \eqref{Lambda1} and \eqref{Lambda2} by Remark \ref{RemL}.\par 
Let $u\in \dom(\Delta_{\Lambda^{D}})$, so that, by \eqref{domLambda}, $u=u_{z}+G_{z}\Lambda^{D}_{z}\tau=u_{z}+\SL_{z}(P_{z}^{\+}-P_{z}^{\-})\gamma^{0}_{\Gamma}u_{z}$, $u_{z}\in H^{2}(\RE^{n})$. 
By \eqref{SL}, $\SL_{z}\in\B(H^{-1/2}(\Gamma),H^{1}(\RE^{n}))$ and so, since $(-\Delta_{\Omega_{\-/\+}}+z)\SL_{z}=0$, one has $u\in H^{1}(\RE^{n})\cap H^{0}_{\Delta}(\RE^{n}\backslash\Gamma)\subset H^{1}_{\Delta}(\RE^{n}\backslash\Gamma)$. Then, by $H^{1}(\RE^{n})\subset\ker([\gamma^{0}_{\Gamma}])$ and \eqref{P-Q-1}, one gets $\gamma^{0,\-/\+}_{\Gamma}u=0$; therefore $u\in \dom(\Delta^{D}_{\Omega_{\-}})\oplus \dom(\Delta^{D}_{\Omega_{\+}})$ and so $\dom(\Delta_{\Lambda^{D}})\subseteq \dom(\Delta^{D}_{\Omega_{\-}}\oplus\Delta^{D}_{\Omega_{\+}})$. \par
By Theorem \ref{teo}, $\Delta_{\Lambda^{D}}$ is a self-adjoint extension of $(\Delta|\ker(\gamma_{\Gamma}^{0}))\supset \Delta^{\min}_{\Omega_{\-}}\oplus \Delta^{\min}_{\Omega_{\+}}$. Thus $\Delta_{\Lambda^{D}}\subset (\Delta|\ker(\gamma_{\Gamma}^{0}))^{*}\subset (\Delta^{\min}_{\Omega_{\-}}\oplus \Delta^{\min}_{\Omega_{\+}})^{*}=\Delta^{\max}_{\Omega_{\-}}\oplus \Delta^{\max}_{\Omega_{\+}}$. Since $\Delta^{\max}_{\Omega_{\-/\+}}|\dom(\Delta^{D}_{\Omega_{\-/\+}})=\Delta^{D}_{\Omega_{\-/\+}}$, one gets 
$\Delta_{\Lambda^{D}}\subseteq \Delta^{D}_{\Omega_{\-}}\oplus \Delta^{D}_{\Omega_{\+}}$. Since $\Delta_{\Lambda^{D}}$ is self-adjoint and $\Delta^{D}_{\Omega_{\-}}\oplus \Delta^{D}_{\Omega_{\+}}$ is symmetric by \eqref{Green0}, one obtains $\Delta_{\Lambda^{D}}= \Delta^{D}_{\Omega_{\-}}\oplus \Delta^{D}_{\Omega_{\+}}$.
\end{proof}
By Lemma \ref{Dir} and by the compact embedding $H^{-1/2}(\Gamma)\hookrightarrow B^{-3/2}_{2,2}(\Gamma)$, we can apply Theorem \ref{LH}:
\begin{theorem} Let $\Omega$ be a bounded open domain with Lipschitz boundary $\Gamma$. Then asymptotic completeness holds for the scattering couple $(\Delta,\Delta^{D}_{\Omega_{\-}}\oplus \Delta^{D}_{\Omega_{\+}})$ and the corresponding scattering matrix  
$S^{D}_{\lambda}$ is given by 
\begin{equation}\label{scatt-dir}
S^{D}_{\lambda}=\uno-2\pi iL^{D}_{\lambda}(P_{\lambda}^{\+}-P_{\lambda}^{\-})(L^{D}_{\lambda})^{*}\,,\quad\text{$\lambda\in (-\infty,0)\backslash(\sigma_{disc}(\Delta^{D}_{\Omega_{\-}})\cup \sigma_{disc}(\Delta^{D}_{\Omega_{\+}}))$}\,,
\end{equation}
$\sigma_{disc}(\Delta^{D}_{\Omega_{\+}})=\emptyset$ whenever $\Omega_{\+}$ is connected, where 
$$
L^{D}_\lambda: H^{-1/2}(\Gamma)\to L^{2}({\mathbb S}^{n-1})\,,\quad 
L^{D}_{\lambda}\phi(\xi ):=\frac1{2^{\frac12}}\,\frac{|\lambda|^{\frac{n-2}4}}{(2\pi)^{\frac{n}2}}\,\langle u_{\lambda}^{\xi}|\Gamma,\phi \rangle_{H^{1/2}(\Gamma),H^{-1/2}(\Gamma)}\,.
$$
\end{theorem}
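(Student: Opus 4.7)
The plan is to apply Theorem~\ref{LH} with the specific choices $\tau=\gamma_{\Gamma}^{0}$, $\fh=B^{3/2}_{2,2}(\Gamma)$, $\fb=H^{1/2}(\Gamma)$, and with the family $\Lambda=\Lambda^{D}$ from Lemma~\ref{Dir}. By Lemma~\ref{Dir} the abstractly defined operator $\Delta_{\Lambda^{D}}$ coincides with the orthogonal sum $\Delta^{D}_{\Omega_{\-}}\oplus\Delta^{D}_{\Omega_{\+}}$, so all conclusions of Theorem~\ref{LH} transfer directly to the scattering couple under consideration.

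The first task is to verify the three hypotheses i)--iii) of Theorem~\ref{LH}. Hypothesis i) is immediate: by Green's formula \eqref{Green0}, each Dirichlet Laplacian $\Delta^{D}_{\Omega_{\-/\+}}$ is non-positive, hence $\Delta_{\Lambda^{D}}$ is bounded from above. Hypothesis ii) follows from the fact that $\Lambda^{D}_{\lambda}=P^{\+}_{\lambda}-P^{\-}_{\lambda}$ takes values in $H^{-1/2}(\Gamma)$ combined with the compact embedding $H^{-1/2}(\Gamma)\hookrightarrow B^{-3/2}_{2,2}(\Gamma)=\fh^{*}$ guaranteed by \eqref{emb} (applied to the compact $(n-1)$-set $\Gamma$). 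Hypothesis iii) is verified by choosing any $\chi\in C^{\infty}_{comp}(\RE^{n})$ with $\chi\equiv 1$ in a neighborhood of $\Gamma$, since then $\gamma^{0}_{\Gamma}u=\gamma^{0}_{\Gamma}(\chi u)$ for every $u\in H^{2}(\RE^{n})$.

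Once the hypotheses are in place, Theorem~\ref{LH} yields asymptotic completeness, the identifications of the absolutely continuous, essential, and singular continuous spectra, and the representation $S^{\Lambda}_{\lambda}=\uno-2\pi i L_{\lambda}\Lambda^{+}_{\lambda}L^{*}_{\lambda}$. To match this with \eqref{scatt-dir} I have to identify the limit $\Lambda^{+}_{\lambda}$ and simplify $L_{\lambda}$. The continuity of the map $z\mapsto P^{\+}_{z}-P^{\-}_{z}$ on $\overline{\CO_{+}}\setminus\bigl(\sigma_{disc}(\Delta^{D}_{\Omega_{\-}})\cup\sigma_{disc}(\Delta^{D}_{\Omega_{\+}})\bigr)$, established following \eqref{P-Q-2}, gives
\[
\Lambda^{+}_{\lambda}=P^{\+}_{\lambda}-P^{\-}_{\lambda}\quad\text{in }\B(H^{1/2}(\Gamma),H^{-1/2}(\Gamma))
\]
for every admissible $\lambda<0$. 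Moreover, for $\chi\equiv 1$ near $\Gamma$ one has $\tau(\chi u^{\xi}_{\lambda})=u^{\xi}_{\lambda}|\Gamma$, so substituting into \eqref{Ll} produces exactly the stated $L^{D}_{\lambda}$. The admissible spectral set $\Sigma_{0}\setminus\Sigma_{\Lambda}$ in Theorem~\ref{LH} then corresponds to $(-\infty,0)\setminus\bigl(\sigma_{disc}(\Delta^{D}_{\Omega_{\-}})\cup\sigma_{disc}(\Delta^{D}_{\Omega_{\+}})\bigr)$, since the point spectrum of $\Delta^{D}_{\Omega_{\-}}\oplus\Delta^{D}_{\Omega_{\+}}$ inside $(-\infty,0)$ is precisely the union of the two discrete spectra.

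Finally, the assertion $\sigma_{disc}(\Delta^{D}_{\Omega_{\+}})=\emptyset$ when $\Omega_{\+}$ is connected can be invoked from Rellich's uniqueness theorem for the exterior Helmholtz problem: any $L^{2}$ eigenfunction with negative eigenvalue and vanishing Dirichlet trace must be identically zero on the connected unbounded component. The main obstacle in the argument is really just bookkeeping around the exceptional set: one must check that the LAP limits, the inversion of $\gamma^{0}_{\Gamma}\SL^{+}_{z}$ in \eqref{P-Q-1}, and the continuity of the DtN difference are all simultaneously valid on the complement of $\sigma_{disc}(\Delta^{D}_{\Omega_{\-}})\cup\sigma_{disc}(\Delta^{D}_{\Omega_{\+}})$, but this is already guaranteed by the framework built in subsections 5.1.2--5.1.3.
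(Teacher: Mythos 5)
Your proposal is correct and follows essentially the same route as the paper: apply Theorem \ref{LH} with $\tau=\gamma^{0}_{\Gamma}$, $\fh=B^{3/2}_{2,2}(\Gamma)$, $\fb=H^{1/2}(\Gamma)$ via Lemma \ref{Dir} and the compact embedding $H^{-1/2}(\Gamma)\hookrightarrow B^{-3/2}_{2,2}(\Gamma)$, identify $\Lambda^{D,+}_{\lambda}=P^{\+}_{\lambda}-P^{\-}_{\lambda}$ through the boundary limits of $\gamma^{0}_{\Gamma}\SL_{z}$ and \eqref{P-Q-1}, match the exceptional set with $\sigma_{disc}(\Delta^{D}_{\Omega_{\-}})\cup\sigma_{disc}(\Delta^{D}_{\Omega_{\+}})$, and invoke Rellich/unique continuation for connected $\Omega_{\+}$. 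The only cosmetic difference is that the paper obtains the limit $\Lambda^{D,+}_{\lambda}$ by passing to the limit in the identity $-\Lambda^{D}_{\lambda+i\epsilon}\gamma_{\Gamma}^{0}\SL_{\lambda+i\epsilon}=\uno$, whereas you phrase it as continuity of the Dirichlet-to-Neumann difference up to the cut (which really rests on the continuity of $z\mapsto\gamma^{0}_{\Gamma}\SL^{+}_{z}$ from Subsection 5.1.2 rather than on the remarks after \eqref{P-Q-2}); the substance is the same.
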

\begin{proof} By taking the limit $\epsilon\downarrow 0$ in the identity $-\Lambda^{D}_{\lambda+i\epsilon}\gamma_{\Gamma}^{0}\SL_{\lambda+i\epsilon}=\uno=-
\gamma_{\Gamma}^{0}\SL_{\lambda+i\epsilon}\Lambda^{D}_{\lambda+i\epsilon}
$ and by \eqref{P-Q-1}, one gets $\Lambda^{D,+}_{\lambda}=-(\gamma_{\Gamma}^{0}\SL^{+}_{\lambda})^{-1}=P_{\lambda}^{\+}-P_{\lambda}^{\-}$.\par
Moreover $\sigma_{p}^{-}(\Delta^{D}_{\Omega_{\-}}\oplus \Delta^{D}_{\Omega_{\+}})=
\sigma_{p}(\Delta^{D}_{\Omega_{\-}})\cup \sigma_{p}(\Delta^{D}_{\Omega_{\+}})=
\sigma_{disc}(\Delta^{D}_{\Omega_{\-}})\cup \sigma_{disc}(\Delta^{D}_{\Omega_{\+}})$. Finally, $\sigma_{disc}(\Delta^{D}_{\Omega_{\+}})=\emptyset$ whenever $\Omega_{\+}$ is connected by the unique continuation principle.  
\end{proof}
\begin{remark}
Formula \eqref{scatt-dir} extends to $n$-dimensional bounded Lipschitz domains 
the one which has been obtained, in the case of $2$-dimensional bounded piecewise $C^{2}$ domains, in \cite[Theorems 5.3 and 5.6]{EP1}; similar formulae are also given, in a smooth $2$-dimensional setting in \cite[Subsection 5.2]{BMN} and in a smooth $n$-dimensional setting in \cite[Subsection 6.1]{JST}. Let us mention that as regards the alone asymptotic completeness result, the Lipschitz regularity condition on the boundary is not necessary, see  \cite{Gries}. 
\end{remark}
\end{subsection}
\begin{subsection}{The Laplace operator with Neumann boundary conditions on Lipschitz domains.}\label{neumann} 
Here we apply Theorem \ref{LH} to a case in which $\tau=\gamma^{1}_{\Gamma}$, $\fh=H^{1/2}(\Gamma)$, $\fb=H^{-1/2}(\Gamma)$ and $\Gamma$ is the Lipschitz boundary of a bounded open set $\Omega\subset\RE^{n}$.\par 
Let $$\Delta^{N}_{\Omega_{\-/\+}}:\dom(\Delta^{N}_{\Omega_{\-/\+}})\subset L^{2}(\Omega_{\-/\+})\to L^{2}(\Omega_{\-/\+})\,,\qquad \Delta^{N}_{\Omega_{\-/\+}}u:=\Delta u\,,
$$ 
$$\dom(\Delta^{N}_{\Omega_{\-/\+}}):=\{u_{\-/\+}\in H^{1}_{\Delta}(\Omega_{\-/\+}):\gamma^{1,{\-/\+}}_{\Gamma}u_{\-/\+}=0\}\,,$$
be the Neumann Laplacian in $L^{2}(\Omega_{\-/\+})$. By \eqref{Green1}, $\Delta^{N}_{\Omega_{\-/\+}}$ is symmetric; in fact it is self-adjoint and the self-adjoint operator $\Delta^{N}_{\Omega_{\-}}\oplus \Delta^{N}_{\Omega_{\+}}$ has an alternative representation:
%(see e.g. \cite[Theorem 3.1]{Wilcox}, \cite[Section 2.3]{Leis}). 
 \begin{lemma}\label{Neu} The family of linear bounded maps $\Lambda^{N}$  
$$ 
\Lambda^{N}_{z}:=Q_{z}^{\-}-Q_{z}^{\+}:H^{-1/2}(\Gamma)\to H^{1/2}(\Gamma)\,,\qquad z\in \CO\backslash(-\infty,0]\,,
$$ 
satisfies \eqref{Lambda1}-\eqref{Lambda2} and 
$$
\Delta_{\Lambda^{N}}=\Delta^{N}_{\Omega_{\-}}\oplus \Delta^{N}_{\Omega_{\+}}\,.
$$
\end{lemma}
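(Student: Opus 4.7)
The plan is to follow the argument given for Lemma \ref{Dir}, making the obvious replacements $\SL_z \leadsto \DL_z$, $\gamma_\Gamma^0 \leadsto \gamma_\Gamma^1$, $P_z^{\pm} \leadsto -Q_z^{\mp}$, and exploiting Remark \ref{RemL} to check the abstract identities \eqref{Lambda1}--\eqref{Lambda2} at the level of inverses.

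First I would set $M_z^N := -\gamma_\Gamma^1 \DL_z$ and verify that it satisfies \eqref{RL}. The adjoint relation $(M_z^N)^* = M_{\bar z}^N$ follows immediately from $\DL_z = R_z^0 (\gamma_\Gamma^1)^*$ together with $(R_z^0)^* = R_{\bar z}^0$. The second identity reduces, thanks to the commutativity of resolvents $R_z^0 R_w^0 = R_w^0 R_z^0$, to the computation
\begin{equation*}
M_z^N - M_w^N = -\gamma_\Gamma^1 (R_z^0 - R_w^0)(\gamma_\Gamma^1)^* = (z-w)\,\gamma_\Gamma^1 R_w^0 R_z^0 (\gamma_\Gamma^1)^* = (z-w)\, G_{\bar w}^* G_z,
\end{equation*}
since $G_z = \DL_z$ and $G_{\bar w}^* = \tau R_w^0 = \gamma_\Gamma^1 R_w^0$ when $\tau = \gamma_\Gamma^1$. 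By \eqref{P-Q-2}, $(M_z^N)^{-1} = -(\gamma_\Gamma^1 \DL_z)^{-1} = -(Q_z^{\+} - Q_z^{\-}) = Q_z^{\-} - Q_z^{\+} = \Lambda_z^N$, so Remark \ref{RemL} gives \eqref{Lambda1}--\eqref{Lambda2}.

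Next I would identify the operator. Given $u \in \dom(\Delta_{\Lambda^N})$ I use \eqref{domLambda} to write $u = u_z + \DL_z (Q_z^{\-} - Q_z^{\+}) \gamma_\Gamma^1 u_z$ for some $u_z \in H^2(\RE^n)$. By the mapping property \eqref{map}, $\DL_z(\,\cdot\,) \in H^1(\RE^n \setminus \Gamma)$, so $u \in H^1_\Delta(\RE^n \setminus \Gamma)$; applying $\gamma_\Gamma^1$ and using again \eqref{P-Q-2} to obtain $(\gamma_\Gamma^1 \DL_z)(Q_z^{\-} - Q_z^{\+}) = -\uno$, I find
\begin{equation*}
\gamma_\Gamma^1 u = \gamma_\Gamma^1 u_z - \gamma_\Gamma^1 u_z = 0,
\end{equation*}
while the jump relation $[\gamma_\Gamma^1]\DL_z = 0$ from \eqref{jump} ensures $[\gamma_\Gamma^1] u = 0$, so that $\gamma_\Gamma^{1,\-}u_{\-} = \gamma_\Gamma^{1,\+}u_{\+} = 0$ and $u \in \dom(\Delta^N_{\Omega_{\-}}) \oplus \dom(\Delta^N_{\Omega_{\+}})$. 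Hence $\dom(\Delta_{\Lambda^N}) \subseteq \dom(\Delta^N_{\Omega_{\-}} \oplus \Delta^N_{\Omega_{\+}})$.

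Finally I close the argument exactly as in the proof of Lemma \ref{Dir}. Since $H_0^2(\Omega_{\-/\+})$ is contained in $\ker(\gamma_\Gamma^1) \cap H^2(\RE^n)$, we have $\Delta^{\min}_{\Omega_{\-}} \oplus \Delta^{\min}_{\Omega_{\+}} \subset \Delta|\ker(\gamma_\Gamma^1)$, so by Theorem \ref{teo} the self-adjoint extension $\Delta_{\Lambda^N}$ is contained in $(\Delta^{\min}_{\Omega_{\-}} \oplus \Delta^{\min}_{\Omega_{\+}})^* = \Delta^{\max}_{\Omega_{\-}} \oplus \Delta^{\max}_{\Omega_{\+}}$; combined with the previous domain inclusion, this yields $\Delta_{\Lambda^N} \subseteq \Delta^N_{\Omega_{\-}} \oplus \Delta^N_{\Omega_{\+}}$, and since the right-hand side is symmetric by \eqref{Green1} and $\Delta_{\Lambda^N}$ is self-adjoint, the two operators coincide. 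The only delicate point compared with the Dirichlet case is bookkeeping with the two different signs of $[\gamma_\Gamma^1]$ versus $[\gamma_\Gamma^0]$ entering \eqref{P-Q-2} versus \eqref{P-Q-1}, which is exactly what forces the definition $\Lambda_z^N = Q_z^{\-} - Q_z^{\+}$ rather than $Q_z^{\+} - Q_z^{\-}$.
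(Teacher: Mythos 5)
Your proposal is correct and follows essentially the same route as the paper: you verify \eqref{RL} for $M^{N}_{z}=-\gamma_{\Gamma}^{1}\DL_{z}$, identify $\Lambda^{N}_{z}=(M^{N}_{z})^{-1}=Q^{\-}_{z}-Q^{\+}_{z}$ via \eqref{P-Q-2} and invoke Remark \ref{RemL}, then prove $\dom(\Delta_{\Lambda^{N}})\subseteq\dom(\Delta^{N}_{\Omega_{\-}}\oplus\Delta^{N}_{\Omega_{\+}})$ using the jump relations, and conclude exactly as in Lemma \ref{Dir} through $(\Delta^{\min}_{\Omega_{\-}}\oplus\Delta^{\min}_{\Omega_{\+}})^{*}=\Delta^{\max}_{\Omega_{\-}}\oplus\Delta^{\max}_{\Omega_{\+}}$ and the symmetry coming from \eqref{Green1}. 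The only (minor) imprecision is that \eqref{map} gives $\chi\DL_{z}\in\B(H^{1/2}(\Gamma),H^{1}(\RE^{n}\backslash\Gamma))$ with a cut-off, whereas the inclusion $u\in H^{1}_{\Delta}(\RE^{n}\backslash\Gamma)$ (in particular global $H^{1}$ on $\Omega_{\+}$) requires the global mapping property $\DL_{z}\in\B(H^{1/2}(\Gamma),H^{1}(\Omega_{\-/\+}))$, which the paper gets from \cite[Lemma 3.1]{JDE}.
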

\begin{proof} At first notice that, by the definition \eqref{DL} and by resolvent identity, the operator family $M^{N}_{z}=-\gamma_{\Gamma}^{1}\DL_{z}$, $z\in \CO\backslash(-\infty,0]$, satisfies \eqref{RL}. Then $\Lambda^{N}_{z}=(M_{z}^{N})^{-1}$ satisfies \eqref{Lambda1} and \eqref{Lambda2} by Remark \ref{RemL}.\par 
Let $u\in \dom(\Delta_{\Lambda^{N}})$, so that, by \eqref{domLambda}, $u=u_{z}+G_{z}\Lambda^{N}_{z}\tau u_{z}=u_{z}-\DL_{z}(Q_{z}^{\+}-Q_{z}^{\-})\gamma^{1}_{\Gamma}u_{z}$, $u_{z}\in H^{2}(\RE^{n})$. By \cite[Lemma 3.1]{JDE}, $\DL_{z}\in\B(H^{1/2}(\Gamma),H^{1}(\Omega_{\-/\+}))$ and so, since $(-\Delta_{\Omega_{\-/\+}}+z)\DL_{z}=0$, one has $u\in H^{1}_{\Delta}(\RE^{n}\backslash\Gamma)$. Then, by $H^{2}(\RE^{n})\subset(\ker([\gamma^{0}_{\Gamma}])\cap \ker([\gamma^{1}_{\Gamma}]))$, \eqref{jump} and by \eqref{P-Q-2}, one gets $\gamma^{1,\-/\+}_{\Gamma}u=0$; therefore $u\in \dom(\Delta^{N}_{\Omega_{\-}})\oplus \dom(\Delta^{N}_{\Omega_{\+}})$ and so $\dom(\Delta_{\Lambda^{N}})\subseteq \dom(\Delta^{N}_{\Omega_{\-}}\oplus\Delta^{N}_{\Omega_{\+}})$. \par
By Theorem \ref{teo}, $\Delta_{\Lambda^{N}}$ is a self-adjoint extension of $(\Delta|\ker{\gamma_{\Gamma}^{1}})\supset \Delta^{\min}_{\Omega_{\-}}\oplus \Delta^{\min}_{\Omega_{\+}}$. Thus $\Delta_{\Lambda^{N}}\subset (\Delta|\ker{\gamma_{\Gamma}^{1}})^{*}\subset (\Delta^{\min}_{\Omega_{\-}}\oplus \Delta^{\min}_{\Omega_{\+}})^{*}=\Delta^{\max}_{\Omega_{\-}}\oplus \Delta^{\max}_{\Omega_{\+}}$. Since $\Delta^{\max}_{\Omega_{\-/\+}}|\dom(\Delta^{N}_{\Omega_{\-/\+}})=\Delta^{N}_{\Omega_{\-/\+}}$, one gets 
$\Delta_{\Lambda^{N}}\subseteq \Delta^{N}_{\Omega_{\-}}\oplus \Delta^{N}_{\Omega_{\+}}$.  Since $\Delta_{\Lambda^{N}}$ is self-adjoint and $\Delta^{N}_{\Omega_{\-}}\oplus \Delta^{N}_{\Omega_{\+}}$ is symmetric by \eqref{Green1}, one obtains $\Delta_{\Lambda^{N}}= \Delta^{N}_{\Omega_{\-}}\oplus \Delta^{N}_{\Omega_{\+}}$.
\end{proof}
By Lemma \ref{Neu} and by the compact embedding $H^{1/2}(\Gamma)\hookrightarrow H^{-1/2}(\Gamma)$, we can apply Theorem \ref{LH}:
\begin{theorem} Let $\Omega$ be a bounded open domain with Lipschitz boundary $\Gamma$. Then asymptotic completeness holds for the scattering couple $(\Delta,\Delta^{N}_{\Omega_{\-}}\oplus \Delta^{N}_{\Omega_{\+}})$ and the corresponding scattering matrix  
$S^{N}_{\lambda}$ is given by 
\begin{equation}\label{scatt-neu}
S^{N}_{\lambda}=\uno-2\pi iL^{N}_{\lambda}(Q_{\lambda}^{\-}-Q_{\lambda}^{\+})(L^{N}_{\lambda})^{*}\,,\quad\text{$\lambda\in(-\infty,0]\backslash(\sigma_{disc}(\Delta^{N}_{\Omega_{\-}})\cup \sigma_{disc}(\Delta^{N}_{\Omega_{\+}}))$}\,,
\end{equation}
$\sigma_{disc}(\Delta^{N}_{\Omega_{\+}})=\emptyset$ whenever $\Omega_{\+}$ is connected, where 
$$
L^{N}_\lambda: H^{-1/2}(\Gamma)\to L^{2}({\mathbb S}^{n-1})\,,\quad 
L^{N}_{\lambda}\varphi(\xi ):=\frac1{2^{\frac12}}\,\frac{|\lambda|^{\frac{n-2}4}}{(2\pi)^{\frac{n}2}}\,\langle \nu_{\Gamma} \!\cdot\!\nabla u_{\lambda}^{\xi }|\Gamma,\varphi \rangle_{H^{1/2}(\Gamma),H^{-1/2}(\Gamma)}\,.
$$
\end{theorem}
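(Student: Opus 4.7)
The plan is to mirror the strategy used for the Dirichlet case, substituting the Neumann layer theory developed above. First I would invoke Lemma \ref{Neu}, which supplies the family $\Lambda^{N}$ satisfying \eqref{Lambda1}--\eqref{Lambda2} and identifies $\Delta_{\Lambda^{N}}=\Delta^{N}_{\Omega_{\-}}\oplus \Delta^{N}_{\Omega_{\+}}$. I then have to check the three hypotheses of Theorem \ref{LH}. Hypothesis (i) is immediate since both Neumann Laplacians are non-positive self-adjoint. For (iii), any cut-off $\chi\in C^{\infty}_{comp}(\RE^{n})$ with $\chi\equiv 1$ in a neighborhood of $\overline{\Omega}$ verifies $\gamma^{1}_{\Gamma}u=\gamma^{1}_{\Gamma}(\chi u)$, because $\gamma^{1}_{\Gamma}$ only sees the values of $u$ and $\nabla u$ near $\Gamma$. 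For (ii), the range of $\Lambda^{N}_{\lambda}=Q^{\-}_{\lambda}-Q^{\+}_{\lambda}$ lies in $H^{1/2}(\Gamma)$, which embeds compactly into $\fh^{*}=H^{-1/2}(\Gamma)$ by Rellich on the Lipschitz boundary; this gives (ii).

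Next I would identify the limit $\Lambda^{N,+}_{\lambda}$ appearing in \eqref{S-tau}. By Lemma \ref{Neu}, $\Lambda^{N}_{\lambda+i\epsilon}=-(\gamma^{1}_{\Gamma}\DL_{\lambda+i\epsilon})^{-1}$. Passing to the limit $\epsilon\downarrow 0$, and using the continuity of $z\mapsto \gamma^{1}_{\Gamma}\DL^{+}_{z}$ together with \eqref{P-Q-2}, one obtains
\[
\Lambda^{N,+}_{\lambda}=-(\gamma^{1}_{\Gamma}\DL^{+}_{\lambda})^{-1}=Q^{\-}_{\lambda}-Q^{\+}_{\lambda}\,,
\qquad \lambda\in(-\infty,0)\setminus\bigl(\sigma_{disc}(\Delta^{N}_{\Omega_{\-}})\cup\sigma_{disc}(\Delta^{N}_{\Omega_{\+}})\bigr)\,.
\]
Here the excluded $\lambda$'s are exactly where $\gamma^{1}_{\Gamma}\DL^{+}_{\lambda}$ fails to be invertible, and they coincide with the discrete eigenvalues of the interior/exterior Neumann Laplacian below zero. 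The set $\Sigma_{\Lambda}$ appearing in Theorem \ref{LH} matches this discrete set since $\sigma_{p}^{-}(\Delta^{N}_{\Omega_{\-}}\oplus\Delta^{N}_{\Omega_{\+}})=\sigma_{disc}(\Delta^{N}_{\Omega_{\-}})\cup\sigma_{disc}(\Delta^{N}_{\Omega_{\+}})$.

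Third, I would specialize the formula \eqref{Ll} for $L_{\lambda}$. With $\tau=\gamma^{1}_{\Gamma}$, the duality bracket becomes
\[
L_{\lambda}\varphi(\xi)=\frac{1}{2^{1/2}}\,\frac{|\lambda|^{(n-2)/4}}{(2\pi)^{n/2}}\,\langle \gamma^{1}_{\Gamma}(\chi u^{\xi}_{\lambda}),\varphi\rangle_{H^{1/2}(\Gamma),H^{-1/2}(\Gamma)}\,,
\]
and since $\chi\equiv 1$ near $\Gamma$, $\gamma^{1}_{\Gamma}(\chi u^{\xi}_{\lambda})=\nu_{\Gamma}\!\cdot\!\nabla u^{\xi}_{\lambda}|\Gamma$, yielding precisely the stated $L^{N}_{\lambda}$. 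Substituting into \eqref{S-tau} gives \eqref{scatt-neu}.

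Finally, the claim $\sigma_{disc}(\Delta^{N}_{\Omega_{\+}})=\emptyset$ when $\Omega_{\+}$ is connected follows from the unique continuation principle: an $L^{2}(\Omega_{\+})$ eigenfunction with Neumann data zero, extended by zero across $\Gamma$, solves the Helmholtz equation on an open set and therefore must vanish identically. I expect no hard obstacle here since every ingredient has been prepared in the preceding subsections; the only point to be mildly careful about is verifying that the limit operators $Q^{\pm}_{\lambda}$ make sense on all of $H^{-1/2}(\Gamma)$ outside the listed discrete set, which is exactly what \eqref{P-Q-2} together with the Fredholm alternative provides.
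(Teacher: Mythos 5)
Your main line of argument coincides with the paper's: invoke Lemma \ref{Neu}, check hypotheses i)--iii) of Theorem \ref{LH} (with ii) provided by the compact embedding $H^{1/2}(\Gamma)\hookrightarrow H^{-1/2}(\Gamma)$ and iii) by a cut-off equal to $1$ near $\Gamma$), and identify the limit by passing to $\epsilon\downarrow 0$ in $\Lambda^{N}_{\lambda+i\epsilon}=-(\gamma^{1}_{\Gamma}\DL_{\lambda+i\epsilon})^{-1}$ and using \eqref{P-Q-2} to get $\Lambda^{N,+}_{\lambda}=Q^{\-}_{\lambda}-Q^{\+}_{\lambda}$; the specialization of $L_{\lambda}$ with $\tau=\gamma^{1}_{\Gamma}$ is also as intended. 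Two points, however, need repair. The minor one: your set identity $\sigma_{p}^{-}(\Delta^{N}_{\Omega_{\-}}\oplus\Delta^{N}_{\Omega_{\+}})=\sigma_{disc}(\Delta^{N}_{\Omega_{\-}})\cup\sigma_{disc}(\Delta^{N}_{\Omega_{\+}})$ is off at $\lambda=0$, since $0\in\sigma_{disc}(\Delta^{N}_{\Omega_{\-}})$ (constants) while $\sigma_{p}^{-}$ is by definition contained in $(-\infty,0)$; the paper writes the identity with $\cup\{0\}$ on the left. This is harmless for \eqref{scatt-neu}, because $\lambda=0$ is excluded there anyway, but the equality as you stated it is false.

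The genuine defect is your justification of $\sigma_{disc}(\Delta^{N}_{\Omega_{\+}})=\emptyset$ for connected $\Omega_{\+}$. Extending an exterior eigenfunction $u$ by zero across $\Gamma$ does \emph{not} produce a solution of the Helmholtz equation in a neighbourhood of $\Gamma$ when only the Neumann trace $\gamma^{1,\+}_{\Gamma}u$ vanishes: the nonzero Dirichlet trace creates a jump $[\gamma^{0}_{\Gamma}]$, so the zero-extension satisfies the equation only up to a double-layer distribution supported on $\Gamma$, and unique continuation cannot be applied to it. (Your argument would require both Cauchy data to vanish, which is not available here.) The correct mechanism, which is what the paper relies on (via the unique continuation principle, cf. \cite[Remark 3.8]{JST}), is the exterior one: an eigenvalue $\lambda<0$ of $\Delta^{N}_{\Omega_{\+}}$ means $(-\Delta-|\lambda|)u=0$ with $u\in L^{2}(\Omega_{\+})$ and $|\lambda|>0$; Rellich's uniqueness theorem forces $u$ to vanish outside a large ball, and then unique continuation propagates the vanishing through the connected open set $\Omega_{\+}$, giving $u\equiv 0$. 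Connectedness enters exactly at this propagation step, and this is also why the statement fails to exclude eigenvalues coming from bounded components of $\Omega_{\+}$ when it is disconnected.
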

\begin{proof} By taking the limit $\epsilon\downarrow 0$ in the identity $-\Lambda^{N}_{\lambda+i\epsilon}\gamma_{\Gamma}^{1}\DL_{\lambda+i\epsilon}=\uno=-
\gamma_{\Gamma}^{1}\DL_{\lambda+i\epsilon}\Lambda^{N}_{\lambda+i\epsilon}
$ and by \eqref{P-Q-1}, one gets $\Lambda^{N,+}_{\lambda}=-(\gamma_{\Gamma}^{1}\DL^{+}_{\lambda})^{-1}=Q_{\lambda}^{\-}-Q_{\lambda}^{\+}$.\par
Moreover $\sigma_{p}^{-}(\Delta^{N}_{\Omega_{\-}}\oplus \Delta^{N}_{\Omega_{\+}})\cup\{0\}=
\sigma_{p}(\Delta^{N}_{\Omega_{\-}})\cup \sigma_{p}(\Delta^{N}_{\Omega_{\+}})=
\sigma_{disc}(\Delta^{N}_{\Omega_{\-}})\cup \sigma_{disc}(\Delta^{N}_{\Omega_{\+}})$. Finally, $\sigma_{disc}(\Delta^{N}_{\Omega_{\+}})=\emptyset$ whenever $\Omega_{\+}$ is connected by the unique continuation principle.  

\end{proof}
\begin{remark}
Formula \eqref{scatt-neu} extends to $n$-dimensional bounded Lipschitz domains 
the one which has been obtained, in the case of $2$-dimensional bounded piecewise $C^{2}$ domains, in \cite[Theorems 4.2 and 4.3]{EP3}; similar formulae are also given, in a smooth $2$-dimensional setting in \cite[Subsection 5.3]{BMN} and in a smooth $n$-dimensional setting in \cite[Subsection 6.2]{JST}
\end{remark}
\end{subsection} 
\begin{subsection} {The Laplace operator with semi-transparent boundary conditions of $\delta$-type  on $d$-sets.} Here we apply Theorem \ref{LH} to a case in which $\tau=\gamma^{0}_{\Gamma}$, $\fh=B^{s_{d}}_{2,2}(\Gamma)$, $s_{d}:={2-(n-d)/2}$, $\fb=H^{s}(\Gamma)$, $0<s<s_{d}-1$, and $\Gamma\subset\RE^{n}$ is a $d$-set with $0<n-d<2$. 
\begin{lemma}\label{lambdalpha} Let $\alpha\in\B(H^{s}(\Gamma),H^{-s}(\Gamma))$, $\alpha^{*}=\alpha$, $0<s<1-\frac{n-d}2$. Then there exists a finite set $\Sigma_{\alpha} \subset(0,+\infty)$ such that for all $z\in\CO\backslash((-\infty,0]\cup \Sigma_{\alpha} )$ one has $(\uno+\alpha \gamma_{\Gamma}^{0}\SL_{z})^{-1}\in \B(H^{-s}(\Gamma))$. Moreover the operator family $\Lambda^{\alpha}$ in $\B(H^{s}(\Gamma),H^{-s}(\Gamma)))$ given by $$\Lambda^{\alpha}_{z}:=-(\uno+\alpha\gamma_{\Gamma}^{0}\SL_{z})^{-1}\alpha\,,\quad z\in\CO\backslash((-\infty,0]\cup \Sigma_{\alpha} )\,,
$$
satisfies \eqref{Lambda1} and \eqref{Lambda2}.
\end{lemma}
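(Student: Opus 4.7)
The plan has four parts: (a) show $\alpha\gamma_{\Gamma}^{0}\SL_{z}$ is compact on $H^{-s}(\Gamma)$ and analytic in $z$, so the analytic Fredholm theorem applies; (b) use a Green-identity argument to locate the exceptional set in $(0,+\infty)$; (c) use an energy/interpolation estimate at large $\lambda$ to confine the exceptional set to a bounded interval, whence finiteness via the discreteness from (a); (d) verify the compatibility conditions \eqref{Lambda1} and \eqref{Lambda2} by direct algebra.

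For (a), the mapping properties following \eqref{SL} give $\gamma_{\Gamma}^{0}\SL_{z}\in\B(B^{-s}_{2,2}(\Gamma),B^{2-s-(n-d)}_{2,2}(\Gamma))$; since the hypothesis $s<1-(n-d)/2$ forces $2-s-(n-d)>s$, the compact embedding \eqref{emb} for compact $d$-sets makes $\gamma_{\Gamma}^{0}\SL_{z}$ a compact operator from $H^{-s}(\Gamma)$ into $H^{s}(\Gamma)$. Composition with $\alpha\in\B(H^{s}(\Gamma),H^{-s}(\Gamma))$ then gives $\alpha\gamma_{\Gamma}^{0}\SL_{z}\in{\mathfrak S}_{\infty}(H^{-s}(\Gamma))$, analytic in $z\in\CO\backslash(-\infty,0]$. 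The analytic Fredholm theorem yields a discrete set $\tilde{\Sigma}_{\alpha}\subset\CO\backslash(-\infty,0]$ outside of which $(\uno+\alpha\gamma_{\Gamma}^{0}\SL_{z})^{-1}$ exists in $\B(H^{-s}(\Gamma))$, provided invertibility holds at one point.

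For (b) and (c), I would show injectivity of $\uno+\alpha\gamma_{\Gamma}^{0}\SL_{z}$ whenever either $\text{Im}(z)\neq 0$ or $z=\lambda>\lambda_{0}$ for some explicit $\lambda_{0}>0$. If $(\uno+\alpha\gamma_{\Gamma}^{0}\SL_{z})\phi=0$, set $u:=\SL_{z}\phi$; by \eqref{SL} and the hypothesis $s<1-(n-d)/2$, $u\in H^{2-s-(n-d)/2}(\RE^{n})\hookrightarrow H^{1}(\RE^{n})\cap L^{2}(\RE^{n})$, and $(-\Delta+z)u=(\gamma_{\Gamma}^{0})^{*}\phi$ distributionally with $\phi=-\alpha\gamma_{\Gamma}^{0}u$. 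Pairing $u$ with $(-\Delta+z)u$ and integrating by parts gives
\[
\|\nabla u\|^{2}+z\,\|u\|^{2}=\langle\gamma_{\Gamma}^{0}u,\phi\rangle_{H^{s},H^{-s}}=-\langle\gamma_{\Gamma}^{0}u,\alpha\gamma_{\Gamma}^{0}u\rangle_{H^{s},H^{-s}},
\]
whose right-hand side is real by $\alpha^{*}=\alpha$. Taking imaginary parts forces $u=0$ when $\text{Im}(z)\neq 0$, so by the Fredholm alternative $\tilde{\Sigma}_{\alpha}\subset(0,+\infty)$. For $z=\lambda>0$, combining the trace inequality $\|\gamma_{\Gamma}^{0}u\|_{H^{s}(\Gamma)}\leq C\|u\|_{H^{s+(n-d)/2}(\RE^{n})}$ from \eqref{0trace} with interpolation between $L^{2}$ and $H^{1}$ (exponent $\theta:=s+(n-d)/2\in(0,1)$) yields $\|\gamma_{\Gamma}^{0}u\|_{H^{s}}^{2}\leq C'\lambda^{-(1-\theta)}(\|\nabla u\|^{2}+\lambda\|u\|^{2})$ for $\lambda\geq 1$; substitution gives $1\leq C'\|\alpha\|\,\lambda^{-(1-\theta)}$, i.e., $\lambda\leq\lambda_{0}:=(C'\|\alpha\|)^{1/(1-\theta)}$. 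Thus $\tilde{\Sigma}_{\alpha}\subset(0,\lambda_{0}]$; combined with the discreteness in $\CO\backslash(-\infty,0]$ this gives finiteness, so we set $\Sigma_{\alpha}:=\tilde{\Sigma}_{\alpha}$.

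For (d), the identity $(\gamma_{\Gamma}^{0}\SL_{z})^{*}=\gamma_{\Gamma}^{0}\SL_{\bar z}$ (which follows from $\SL_{z}=R^{0}_{z}(\gamma_{\Gamma}^{0})^{*}$ and $(R^{0}_{z})^{*}=R^{0}_{\bar z}$), together with $\alpha^{*}=\alpha$ and the algebraic identity $\alpha(\uno+T\alpha)^{-1}=(\uno+\alpha T)^{-1}\alpha$, gives $(\Lambda^{\alpha}_{z})^{*}=-(\uno+\alpha\gamma_{\Gamma}^{0}\SL_{\bar z})^{-1}\alpha=\Lambda^{\alpha}_{\bar z}$, proving \eqref{Lambda1}. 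For \eqref{Lambda2}, setting $A_{z}:=\alpha\gamma_{\Gamma}^{0}\SL_{z}$, the second-resolvent identity for $(\uno+A_{z})^{-1}$ yields $\Lambda^{\alpha}_{w}-\Lambda^{\alpha}_{z}=\Lambda^{\alpha}_{w}\,\gamma_{\Gamma}^{0}(\SL_{w}-\SL_{z})\Lambda^{\alpha}_{z}$, while the resolvent identity for $R^{0}_{z}$ gives $\SL_{\bar w}^{*}\SL_{z}=\gamma_{\Gamma}^{0}R^{0}_{w}R^{0}_{z}(\gamma_{\Gamma}^{0})^{*}=(z-w)^{-1}\gamma_{\Gamma}^{0}(\SL_{w}-\SL_{z})$, so substituting matches \eqref{Lambda2}. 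The hard part is the quantitative estimate in (c): properly combining the $d$-set trace inequality with Sobolev interpolation to exhibit the $\lambda^{-(1-\theta)}$ smallness against the coercive form $\|\nabla u\|^{2}+\lambda\|u\|^{2}$, including ensuring that no exceptional values accumulate near the threshold $0$.
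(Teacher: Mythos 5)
Your proposal is correct, and parts (a) and (d) coincide with the paper's argument: compactness of the relevant operator via the mapping property following \eqref{SL} and the compact embedding \eqref{emb} (the paper works with $\gamma_{\Gamma}^{0}\SL_{z}\alpha$ on $H^{s}(\Gamma)$ and passes to $H^{-s}(\Gamma)$ by duality as in \eqref{WF}, you work directly with $\alpha\gamma_{\Gamma}^{0}\SL_{z}$ on $H^{-s}(\Gamma)$ -- equivalent), analytic Fredholm theory, and the same algebra for \eqref{Lambda1} and \eqref{Lambda2} (the paper delegates \eqref{Lambda1} to \cite[Corollary 2.4]{Acu}, which is essentially your duality computation). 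Where you genuinely diverge is in locating and bounding the exceptional set. The paper (i) proves $\|\gamma_{\Gamma}^{0}\SL_{z}\alpha\|_{H^{s},H^{s}}<1$ for $\mathrm{Re}(z)>c_{\alpha}$ via the resolvent decay $\|R^{0}_{z}\|_{H^{s},H^{s+t}}\le d_{z}^{-(1-t/2)}$ combined with the trace mapping properties, which simultaneously supplies the non-void invertibility set and the upper bound $c_{\alpha}$, and (ii) obtains $\Sigma_{\alpha}\subset\RE$ by a bootstrap: via Theorem \ref{teo}, Remark \ref{rem} and the Kre\u\i n formula \eqref{krein-dset} of Theorem \ref{delta}, the exceptional set must sit inside the spectrum of a self-adjoint operator. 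You instead prove injectivity of $\uno+\alpha\gamma_{\Gamma}^{0}\SL_{z}$ directly: for $\mathrm{Im}(z)\neq0$ by the quadratic-form/Green identity with $u=\SL_{z}\phi\in H^{2-s-(n-d)/2}(\RE^{n})\subset H^{1}$ and $\alpha^{*}=\alpha$ (taking imaginary parts), and for large real $\lambda$ by the trace estimate \eqref{0trace} plus $L^{2}$--$H^{1}$ interpolation against the coercive form $\|\nabla u\|^{2}+\lambda\|u\|^{2}$, then the Fredholm alternative. Your route is self-contained (it avoids the forward reference to Theorem \ref{delta}, which can look circular even though Remark \ref{rem} makes it legitimate) and yields an explicit threshold $\lambda_{0}$; the paper's route is shorter and reuses the abstract machinery already in place. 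The interpolation step you flag as delicate does close as you sketch (Young's inequality with exponent $\theta=s+(n-d)/2\in(0,1)$). As for possible accumulation of exceptional values at the threshold $0$: the paper does not address this either -- it passes from ``discrete and contained in $[0,c_{\alpha}]$'' directly to finiteness -- so your remaining caveat is not a gap relative to the paper's own argument.
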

\begin{proof} By Fourier transform, one has the following estimate holding for any $z\in\CO\backslash(-\infty,0]$ and for any real number $s$:
$$
{\|}R_{z}^{0}{\|}_{H^{s}(\RE^{n}),H^{s+t}(\RE^{n})}\le \frac1{{d}_{z}^{1-\frac{t}2}}\,,\qquad  0\le t\le 2\,,
$$
where $d_{z}:=\text{dist}(z,(-\infty,0])$. Thus, by the mapping properties of $\gamma_{\Gamma}^{0}$ and $(\gamma_{\Gamma}^{0})^{*}$,  one gets 
\begin{align*}
&\ {\|}\gamma_{\Gamma}^{0}R_{z}^{0}(\gamma_{\Gamma}^{0})^{*}\ {\|}_{B_{2,2}^{-s}(\Gamma),B_{2,2}^{-s-(n-d)+t}(\Gamma)}\\
\le& \frac1{d_{z}^{1-\frac{t}2}}\ {\|}(\gamma_{\Gamma}^{0})^{*}{\|}_{B_{2,2}^{-s}(\Gamma),B_{2,2}^{-s-\frac{n-d}2}(\Gamma)}\ {\|} \gamma_{\Gamma}^{0}{\|}_{B_{2,2}^{-s-\frac{n-d}2+t}(\Gamma),B_{2,2}^{-s-(n-d)+t}(\Gamma)}\,.
\end{align*}
Choosing $t=2s+n-d$, such an inequality shows that if $0<s+\frac{n-d}2<1$ then there exists $c_{\alpha}>0$ such that  operator norm ${\|}\gamma_{\Gamma}^{0}\SL_{z}\alpha{\|}_{H^{s}(\Gamma),H^{s}(\Gamma)}$ is strictly smaller than one whenever $\text{\rm Re}(z)>c_{\alpha}$. 
Therefore $(\uno+\gamma_{\Gamma}^{0}\SL_{z}\alpha)^{-1}\in \B(H^{s}(\Gamma))$ whenever $\text{\rm Re}(z)>c_{\alpha}$. 
\par
Let $0<s+\frac{n-d}2<1$. By \eqref{emb}, the embedding $B_{2,2}^{2-s-(n-d)}(\Gamma)\hookrightarrow B_{2,2}^{s}(\Gamma)$  is compact and so, by $\ran(\gamma_{\Gamma}^{0}\SL_{z})\subseteq B_{2,2}^{2-s-(n-d)}(\Gamma)$, the map 
$\gamma_{\Gamma}^{0}\SL_{z}:H^{-s}(\Gamma)\to H^{s}(\Gamma)$ is also compact; thus   $\gamma_{\Gamma}^{0}\SL_{z}\alpha:H^{s}(\Gamma)\to H^{s}(\Gamma)$ is compact as well. 
Since the map $z\mapsto  \gamma_{\Gamma}^{0}\SL_{z}\alpha$ is analytic from $z\in\CO\backslash(-\infty,0]$ to $\B(H^{s}(\Gamma))$ and the set of $z\in\CO\backslash((-\infty,0]$ such that $(\uno+ \gamma_{\Gamma}^{0}\SL_{z}\alpha)^{-1}\in \B(H^{s}(\Gamma))$ is not void, by analytic Fredholm theory (see e.g. \cite[Theorem XIII.13]{RS-IV}), $(\uno+ \gamma_{\Gamma}^{0}\SL_{z}\alpha)^{-1}\in \B(H^{s}(\Gamma))$ for any $z\in\CO\backslash((-\infty,0]\cup \Sigma_{\alpha} )$, where $\Sigma_{\alpha} $ is a discrete set.  By Theorem \ref{teo}, Remark \ref{rem} and next Theorem \ref{delta} (see \eqref{krein-dset}), $\Sigma_{\alpha} $ is contained in the spectrum of a self-adjoint operator and so $\Sigma_{\alpha} \subset\RE$; hence $\Sigma_{\alpha} \subseteq [0,c_{\alpha}]$ and so it is finite being discrete, i.e. without accumulation points.
\par
By $\alpha=\alpha^{*}$ and the same arguments as in the proof of \cite[Corollary 2.4]{Acu}, one obtains \eqref{Lambda1}
and 
\be\label{WF}
(\uno+\alpha\gamma_{\Gamma}^{0}\SL_{z})^{-1}=\big((\uno+\gamma_{\Gamma}^{0}\SL_{\bar z}\alpha)^{-1}\big)^{*}\in\B(H^{-s}(\Gamma))\,.
\ee 
Finally, by $\SL_{z}=R_{z}^{0}(\gamma_{\Gamma}^{0})^{*}$ and resolvent identity for $R^{0}_{z}$,  it results%
\begin{equation*}
(  1+\alpha\gamma_{\Gamma}^{0}\SL_{w})  -(  1+\alpha\gamma_{\Gamma}^{0}
\SL_{z})  =(  z-w)  \alpha\gamma_{\Gamma}^{0}R_w^{0}\SL_{z}\,.
\end{equation*}
This yields%
\begin{equation*}
\Lambda^{\alpha}_{w}  -\Lambda^{\alpha}_{z} =(  z-w)
\Lambda^{\alpha}_{w}\gamma_{\Gamma}^{0}R_{w}^{0}\SL_{z}\Lambda^{\alpha}_{z}
\end{equation*}
i.e. relation \eqref{Lambda2}.
\end{proof}
\vskip8pt
Taking $\lambda_{\circ}>0$, in the following we use the shorthand notation $\SL_{\circ}\equiv \SL_{\lambda_{\circ}}$.
\begin{theorem}\label{delta} Let $\Gamma$ be a $d$-set with $n-2<d<n$ and let $\alpha\in\B(H^{s}(\Gamma),H^{-s}(\Gamma))$, $\alpha^{*}=\alpha$, $0<s<1-\frac{n-d}2$. Then\par\noindent
1) The family of bounded linear operators 
\be\label{krein-dset}
R_{z}^{\alpha}:=R_{z}^{0}-\SL_{z}(\uno+\alpha\gamma_{\Gamma}^{0}\SL_{z})^{-1}\alpha\gamma_{\Gamma}^{0}R_{z}^{0}\,,\quad z\in\CO\backslash((-\infty,0]\cup \Sigma_{\alpha} )
\ee
is the resolvent of the bounded from above self-adjoint operator $\Delta_{\alpha}$ in $L^{2}(\RE^{n})$ defined, in a $\lambda_{\circ}$-independent way, by
\be\label{dom-alfa}
\dom(  \Delta_{\alpha})  :=\{ u\in H^{2-s-\frac{n-d}2}(\RE^{n}): u+\SL_{{\circ}}\alpha\gamma_{\Gamma}^{0}u\in
H^{2}(  \mathbb{R}^{n})\}  \,,
\ee
\be\label{Aalfa}
\Delta_{\alpha}u:=\Delta u-(\gamma_{\Gamma}^{0})^{*}\alpha\gamma_{\Gamma}^{0}u\,.
\ee
2) $\sigma_{ess}(\Delta_{\alpha})=\sigma_{ac}(\Delta_{\alpha})=(-\infty,0]$, $\sigma_{disc}(\Delta_{\alpha})=\Sigma_{\alpha} $ is finite, $\sigma_{sc}(\Delta_{\alpha})=\emptyset$, $\sigma_{p}^{-}(\Delta_{\alpha}):=(-\infty,0)\cap \sigma_{p}(\Delta_{\alpha})$ is at most discrete and  asymptotic completeness holds for the scattering couple $(\Delta,\Delta_{\alpha})$. \par\noindent 
3) The inverse $(\uno_{\ran(\alpha)}+\alpha \gamma_{\Gamma}^{0}\SL^{\pm}_{\lambda})^{-1}:\ran(\alpha)\to \ran(\alpha)$ exists for any $\lambda\in(-\infty,0)\backslash\sigma^{-}_{p}(\Delta_{\alpha})$  
and the  scattering matrix $S^{\alpha}_{\lambda}$ is given by 
\begin{equation}\label{scatt-delta}
S^{\alpha}_{\lambda}=\uno+2\pi iL^{D}_{\lambda}(\uno_{\ran(\alpha)}+\alpha \gamma_{\Gamma}^{0}\SL^{+}_{\lambda})^{-1} \alpha (L^{D}_{\lambda})^{*}\,,\quad\text{ $\lambda\in(-\infty,0)\backslash\sigma^{-}_{p}(\Delta_{\alpha})$}\,,
\end{equation}
$$
L^{D}_\lambda: H^{-s}(\Gamma)\to L^{2}({\mathbb S}^{n-1})\,,\quad 
L^{D}_{\lambda}\phi(\xi ):=\frac1{2^{\frac12}}\,\frac{|\lambda|^{\frac{n-2}4}}{(2\pi)^{\frac{n}2}}\,\langle u_{\lambda}^{\xi}|\Gamma,\phi \rangle_{H^{s}(\Gamma),H^{-s}(\Gamma)}\,.
$$
\end{theorem}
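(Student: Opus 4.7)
The plan is to realize $\Delta_\alpha$ as an instance of the abstract construction in Theorem \ref{teo}, and then to read off its spectral and scattering properties from Theorem \ref{LH}. Take $\tau:=\gamma_\Gamma^0: H^2(\RE^n) \to B_{2,2}^{s_d}(\Gamma)$, which is bounded and surjective by \eqref{0trace}; its kernel contains $C_c^\infty(\RE^n\setminus\Gamma)$ and is therefore dense in $L^2(\RE^n)$ since $\Gamma$ has zero Lebesgue measure ($d<n$). With this choice $G_z$ coincides with the single-layer potential $\SL_z$ of \eqref{SL}. By Lemma \ref{lambdalpha}, the family $\Lambda^\alpha_z$, $z\in\CO\backslash((-\infty,0]\cup\Sigma_\alpha)$, satisfies \eqref{Lambda1}-\eqref{Lambda2} on a set symmetric about the real axis; Theorem \ref{teo} (in the form extended by Remark \ref{rem}) then delivers a self-adjoint operator whose resolvent \eqref{resolvent} reduces to \eqref{krein-dset}, and we set $\Delta_\alpha:=\Delta_{\Lambda^\alpha}$. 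To identify the domain and action, parametrize $u\in\dom(\Delta_\alpha)$ via \eqref{domLambda} as $u=u_\circ+\SL_\circ\phi$ with $\phi=\Lambda^\alpha_{\lambda_\circ}\gamma_\Gamma^0 u_\circ$; a short calculation with the definition of $\Lambda^\alpha$ shows $\phi=-\alpha\gamma_\Gamma^0 u$, so that $u+\SL_\circ\alpha\gamma_\Gamma^0 u=u_\circ\in H^2(\RE^n)$, which is \eqref{dom-alfa}. The action \eqref{Aalfa} then follows from \eqref{op} together with the identity $(-\Delta+\lambda_\circ)\SL_\circ=(\gamma_\Gamma^0)^*$.

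Next I verify the three hypotheses of Theorem \ref{LH}. Hypothesis (i) holds because, as shown in the proof of Lemma \ref{lambdalpha}, $\Sigma_\alpha\subseteq[0,c_\alpha]$ and \eqref{krein-dset} is well defined for every real $z>c_\alpha$, so $\sigma(\Delta_\alpha)$ is bounded above. Hypothesis (ii) follows from $\ran(\Lambda^\alpha_\lambda)\subseteq H^{-s}(\Gamma)$ and the compact embedding $H^{-s}(\Gamma)\hookrightarrow B_{2,2}^{-s_d}(\Gamma)$, which holds by \eqref{emb} since $s<s_d$. Hypothesis (iii) is immediate on choosing $\chi\in C^\infty_{comp}(\RE^n)$ with $\chi\equiv 1$ in a neighborhood of the bounded set $\Gamma$. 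Theorem \ref{LH} therefore gives asymptotic completeness of $(\Delta,\Delta_\alpha)$, $\sigma_{ac}(\Delta_\alpha)=(-\infty,0]$, $\sigma_{sc}(\Delta_\alpha)=\emptyset$, and the discreteness of $\sigma_p^-(\Delta_\alpha)$; Weyl's theorem, combined with the compactness of $R^\alpha_z-R^0_z$ (a consequence of (ii)), yields $\sigma_{ess}(\Delta_\alpha)=(-\infty,0]$. To establish $\sigma_{disc}(\Delta_\alpha)=\Sigma_\alpha$, the inclusion $\subseteq$ is immediate from \eqref{krein-dset}; for $\supseteq$, given $z\in\Sigma_\alpha$, analytic Fredholm theory produces $\psi\in H^{-s}(\Gamma)\setminus\{0\}$ with $\psi+\alpha\gamma_\Gamma^0\SL_z\psi=0$, and $u:=\SL_z\psi$ satisfies $u+\SL_\circ\alpha\gamma_\Gamma^0 u=(\SL_z-\SL_\circ)\psi\in H^2(\RE^n)$ by \eqref{Swz} and \eqref{reg}, hence $u\in\dom(\Delta_\alpha)$, while $(-\Delta_\alpha+z)u=0$ by \eqref{Aalfa}.

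For the scattering matrix, I use the representation $S^\alpha_\lambda=\uno-2\pi i L^D_\lambda\Lambda^{\alpha,+}_\lambda (L^D_\lambda)^*$ given by Theorem \ref{LH}, noting that the operator $L_\lambda$ of Theorem \ref{LH} equals the stated $L^D_\lambda$ because $\tau(\chi u_\lambda^\xi)=u_\lambda^\xi|\Gamma$ on $\Gamma$. The limit $\Lambda^{\alpha,+}_\lambda$ exists in $\B(\fb,\fb^*)=\B(H^s(\Gamma),H^{-s}(\Gamma))$ by the final part of the proof of Theorem \ref{LH}; passing to $\epsilon\downarrow 0$ in the identity $\Lambda^\alpha_{\lambda+i\epsilon}(\uno+\alpha\gamma_\Gamma^0\SL_{\lambda+i\epsilon})=-\alpha$, using the continuity of $\gamma_\Gamma^0\SL_{\lambda+i\epsilon}$ obtained from \eqref{Swz} and \eqref{interp}, gives $\Lambda^{\alpha,+}_\lambda=-(\uno+\alpha\gamma_\Gamma^0\SL_\lambda^+)^{-1}\alpha$. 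Since $\alpha\gamma_\Gamma^0\SL_\lambda^+$ maps $H^{-s}(\Gamma)$ into $\ran(\alpha)$, invertibility on $H^{-s}(\Gamma)$ is equivalent to invertibility on $\ran(\alpha)$, which yields \eqref{scatt-delta}. The main technical obstacle is the Birman--Schwinger identification $\sigma_{disc}(\Delta_\alpha)=\Sigma_\alpha$, specifically the verification that the candidate eigenfunction $\SL_z\psi$ actually belongs to $\dom(\Delta_\alpha)$; once this is in place, the rest of the statement follows by a direct appeal to the abstract machinery of Sections 2--4.
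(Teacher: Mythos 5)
Most of your proposal tracks the paper's own route: parts 1) and 2) are obtained, as in the paper, by feeding Lemma \ref{lambdalpha} into Theorem \ref{teo} and verifying hypotheses i)--iii) of Theorem \ref{LH}; your explicit computation identifying \eqref{dom-alfa}--\eqref{Aalfa} (which the paper outsources to the reference proof of the Lipschitz case) and your Birman--Schwinger construction of an eigenfunction $u=\SL_{z}\psi$ for $z\in\Sigma_{\alpha}$ (in place of the paper's meromorphic-Fredholm/Laurent-coefficient argument for $\sigma_{disc}(\Delta_{\alpha})=\Sigma_{\alpha}$) are both correct and legitimate variants.

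The genuine gap is in part 3). The existence of the inverse $(\uno_{\ran(\alpha)}+\alpha\gamma_{\Gamma}^{0}\SL^{+}_{\lambda})^{-1}:\ran(\alpha)\to\ran(\alpha)$ is itself an assertion of the theorem, and you assume it rather than prove it: passing to the limit $\epsilon\downarrow 0$ in the factorized identities only yields the product relations $(\uno+\alpha\gamma_{\Gamma}^{0}\SL^{+}_{\lambda})\Lambda^{\alpha,+}_{\lambda}=-\alpha$ and $\Lambda^{\alpha,+}_{\lambda}(\uno+\gamma_{\Gamma}^{0}\SL^{+}_{\lambda}\alpha)=-\alpha$; writing "$\Lambda^{\alpha,+}_{\lambda}=-(\uno+\alpha\gamma_{\Gamma}^{0}\SL^{+}_{\lambda})^{-1}\alpha$" already presupposes the invertibility you need, and your subsequent remark that invertibility on $H^{-s}(\Gamma)$ is "equivalent" to invertibility on $\ran(\alpha)$ does not help because neither has been established at that point. (Incidentally, the one-sided identity you quote, $\Lambda^{\alpha}_{z}(\uno+\alpha\gamma_{\Gamma}^{0}\SL_{z})=-\alpha$, is ill-typed: $\uno+\alpha\gamma_{\Gamma}^{0}\SL_{z}$ acts on $H^{-s}(\Gamma)$ while $\Lambda^{\alpha}_{z}$ takes its argument in $H^{s}(\Gamma)$; the correct right-sided identity is $\Lambda^{\alpha}_{z}(\uno+\gamma_{\Gamma}^{0}\SL_{z}\alpha)=-\alpha$.) The missing step can be supplied in two ways: the paper's way, which uses both limit identities together with $\tilde\alpha:=\alpha|\ker(\alpha)^{\perp}$ to show $\ran(\Lambda^{\alpha,\pm}_{\lambda})\subseteq\ran(\alpha)$ and to exhibit $-\Lambda^{\alpha,\pm}_{\lambda}|\ker(\alpha)^{\perp}$ as the inverse of $\tilde\alpha^{-1}+\gamma_{\Gamma}^{0}\SL^{\pm}_{\lambda}$ on $\ran(\alpha)$, whence $(\uno_{\ran(\alpha)}+\alpha\gamma_{\Gamma}^{0}\SL^{\pm}_{\lambda})^{-1}$ exists; or alternatively, one can first prove injectivity of $\uno+\alpha\gamma_{\Gamma}^{0}\SL^{+}_{\lambda}$ on $H^{-s}(\Gamma)$ from the second limit identity (if $\psi+\alpha\gamma_{\Gamma}^{0}\SL^{+}_{\lambda}\psi=0$, put $\varphi:=-\gamma_{\Gamma}^{0}\SL^{+}_{\lambda}\psi$, so that $\alpha\varphi=\psi$ and $(\uno+\gamma_{\Gamma}^{0}\SL^{+}_{\lambda}\alpha)\varphi=0$, hence $\psi=\alpha\varphi=-\Lambda^{\alpha,+}_{\lambda}(\uno+\gamma_{\Gamma}^{0}\SL^{+}_{\lambda}\alpha)\varphi=0$) and then invoke the Fredholm alternative for the compact operator $\alpha\gamma_{\Gamma}^{0}\SL^{+}_{\lambda}$, finally restricting to the invariant subspace $\ran(\alpha)$. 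Without one of these arguments, assertion 3) and hence formula \eqref{scatt-delta} are not proved.
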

\begin{proof} By Lemma \ref{lambdalpha}, we can apply Theorem \ref{teo} and $\Delta_{\alpha}:=\Delta_{\Lambda^{\alpha}}$ is a well defined self-adjoint operator with resolvent given by \eqref{resolvent}. By \eqref{resolvent} and Lemma \ref{lambdalpha}, one gets $\sigma(\Delta_{\alpha})\subseteq(-\infty,\sup\Sigma_{\alpha}]$ and so $\Delta_{\alpha}$ is bounded from above since $\Sigma_{\alpha}$ is finite. By Lemma \ref{lambdalpha}, $\ran(\Lambda_{z}^{\alpha})\hookrightarrow H^{-s}(\Gamma)\hookrightarrow B_{2,2}^{-2+\frac{n-d}2}(\Gamma)$ and so $\ran(\Lambda_{z}^{\alpha})$ is compactly embedded in $\fh^{*}=B_{2,2}^{-2+\frac{n-d}2}(\Gamma)$ by \eqref{emb}. Since $\Gamma$ is bounded, \eqref{HT} hold true. Therefore hypotheses i)-iii) in Theorem \ref{LH} hold.\par 
By \eqref{krein-dset} and \cite[Theorem XIII.13]{RS-IV}, $z\mapsto R^{\alpha}_{z}$ has poles (and the coefficients of the Laurent expansion are finite-rank operators) only at $\Sigma_{\alpha} $; so,  by \cite[Lemma 1, page 108]{RS-IV}, $\sigma_{disc}(\Delta_\alpha)=\Sigma_{\alpha} $.
\par The proofs of  
\eqref{dom-alfa} and \eqref{Aalfa} are the same as the ones given (in the case $\Gamma$ is Lipschitz) in the proof of \cite[Theorem 2.5]{Acu} and are not reproduced here.\par
Considering the limit $\epsilon\downarrow 0$ in the identity $\Lambda^{\alpha}_{\lambda\pm i\epsilon}=-(\uno+\alpha\gamma_{\Gamma}^{0}\SL_{\lambda\pm i\epsilon})^{-1}\alpha$, one gets $\ker(\alpha)\subseteq\ker(\Lambda^{\alpha,\pm}_{\lambda})$. Considering the limit $\epsilon\downarrow 0$ in the identity 
$$
-(\uno+\alpha\gamma_{\Gamma}^{0}\SL_{\lambda\pm i\epsilon})\Lambda^{\alpha}_{\lambda\pm i\epsilon}=\alpha=-\Lambda^{\alpha}_{\lambda\pm i\epsilon}(\uno+\gamma_{\Gamma}^{0}\SL_{\lambda\pm i\epsilon}\alpha)\,,$$ 
one gets
$$
-(\uno+\alpha\gamma_{\Gamma}^{0}\SL^{\pm}_{\lambda})\Lambda^{\alpha,\pm}_{\lambda}=\alpha=-\Lambda^{\alpha,\pm}_{\lambda}(\uno+\gamma_{\Gamma}^{0}\SL^{\pm}_{\lambda}\alpha)\,,$$ 
and
 $$-(\tilde\alpha^{-1}+\gamma_{\Gamma}^{0}\SL^{\pm}_{\lambda})\Lambda^{\alpha,\pm}_{\lambda}|\ker(\alpha)^{\perp}=\uno_{\ker(\alpha)^{\perp}}\,,\qquad -\Lambda^{\alpha,\pm}_{\lambda}(\tilde\alpha^{-1}+\gamma_{\Gamma}^{0}\SL^{\pm}_{\lambda})|\ran(\alpha)=\uno_{\ran(\alpha)}\,,
$$ 
where $\tilde\alpha:\ker(\alpha)^{\perp}\to\ran(\alpha)$ is the bijective bounded linear operator $\tilde\alpha:=\alpha|\ker(\alpha)^{\perp}$.
This shows that $\ran(\Lambda^{\alpha,\pm}_{\lambda})\subseteq\ran(\alpha)$ and that  $\tilde\alpha^{-1}+\gamma_{\Gamma}^{0}\SL^{\pm}_{\lambda}:\ran(\alpha)\to \ker(\alpha)^{\perp}$ is invertible with inverse $-\Lambda^{\alpha,\pm}_{\lambda}|\ker(\alpha)^{\perp}$, i.e. 
$$
\Lambda^{\alpha,\pm}_{\lambda}|\ker(\alpha)^{\perp}=-(\tilde\alpha^{-1}+\gamma_{\Gamma}^{0}\SL^{\pm}_{\lambda})^{-1}:\ker(\alpha)^{\perp}\to\ran(\alpha)\,. 
$$
By $(\tilde\alpha^{-1}+\gamma_{\Gamma}^{0}\SL^{\pm}_{\lambda})^{-1}\tilde\alpha^{-1}=
(\alpha(\tilde\alpha^{-1}+\gamma_{\Gamma}^{0}\SL^{\pm}_{\lambda}))^{-1}=
(\uno_{\ran(\alpha)}+\alpha\gamma_{\Gamma}^{0}\SL^{\pm}_{\lambda})^{-1}$
one gets the existence of the inverse 
$$
(\uno_{\ran(\alpha)}+\alpha \gamma_{\Gamma}^{0}\SL^{\pm}_{\lambda})^{-1}:\ran(\alpha)\to \ran(\alpha)
$$
and the identity
$$
\Lambda_{\lambda}^{\alpha,\pm}=-(\uno_{\ran(\alpha)}+\alpha \gamma_{\Gamma}^{0}\SL^{\pm}_{\lambda})^{-1}\alpha:H^{s}(\Gamma)\to H^{-s}(\Gamma)\,.
$$
\end{proof}
\begin{remark} The limit single-layer operator $\SL_{\lambda}^{\pm}$ admits the representation  
\begin{equation*}
\SL^{\pm}_{\lambda}\phi(x)=\frac{i}{4}\,\int_{\Gamma}\left(\frac{\mp|\lambda|^{1/2}}{2\pi\|x-y\|}\,\right)^{\frac{n}2-1}\!\!\!\!H^{(1)}_{\frac{n}2-1}(\mp|\lambda|^{1/2}\,\|x-y\|)
\,\phi(y)\,d\mu^{d}_{\Gamma} (y)
\end{equation*}
whenever $\phi\in L^{2}(\Gamma)$ and $x\notin\Gamma$, where  $H^{(1)}_{\frac{n}2-1}$ denotes the Hankel function of first kind of order $\frac{n}2-1$ (see \cite[equation (5.1)]{JST}).
\end{remark}
\begin{remark}\label{multi} A particular case of operator $\alpha\in \B((H^{s}(\Gamma),H^{-s}(\Gamma))$, such that $\alpha=\alpha^{*}$ is $\alpha\in M(H^{s}(\Gamma),H^{-s}(\Gamma)) $, $\alpha$ real-valued, where $M(H^{s_{1}}(\Gamma),H^{s_{2}}(\Gamma))$ denotes the set of Sobolev multipliers on  $H^{s_{1}}(\Gamma)$ to $H^{s_{2}}(\Gamma)$ (here and in the following we use the same notation for a function and for the corresponding multiplication operator). By the inequality
$$
\left|\int_{\Gamma}\alpha\bar \phi\psi\,d\mu_{\Gamma}^{d}\right|\le \||\alpha|^{1/2}\phi\|_{L^{2}(\Gamma)}\||\alpha|^{1/2}\psi\|_{L^{2}(\Gamma)}\le \n|\alpha|^{1/2}\n^{2}_{H^{s}(\Gamma),L^{2}(\Gamma)}\|\phi\|_{H^{s}(\Gamma)}\|\psi\|_{H^{s}(\Gamma)}\,,
$$
one has
$$
{|\alpha|^{1/2}}\in M(H^{s}(\Gamma),L^{2}(\Gamma))\quad\Longrightarrow\quad\alpha\in M(H^{s}(\Gamma),H^{-s}(\Gamma))\,.
$$
Then, by the embeddings \eqref{embLp} and H\"older's inequality, one gets
$$p\ge \frac{1}{s} \quad\Longrightarrow\quad L^{p}(\Gamma)\subseteq M(H^{s}(\Gamma),H^{-s}(\Gamma)) \,.
$$
Thus we can define $\Delta_{\alpha}$ for any real-valued $\alpha\in L^{p}(\Gamma)$, $p>\frac2{2-(n-d)}$.
\end{remark}
In the case $\Gamma$ in Theorem \ref{scatt-delta} is a $(n-1)$-set which is the boundary of a bounded Lipschitz domain, some of the results in the previous theorem can be improved:
\begin{corollary}\label{deltaLip} Let $\Omega$ be an open bounded set with a Lipschitz boundary 
$\Gamma$ and $\alpha\in \B(H^{s}(\Gamma),H^{-s}(\Gamma))$, $\alpha=\alpha^{*}$, $0<s<1/2$. Then\par\noindent
\be\label{delta_res}
(-\Delta_{\alpha}+z)^{-1}=R_{z}^{0}+\SL_{z}(P_{z}^{\+}-P_{z}^{\-})(\alpha-(P_{z}^{\+}-P_{z}^{\-}))^{-1}\alpha\gamma_{\Gamma}^{0}R_{z}^{0}\,,\quad z\in\CO\backslash\big((-\infty,0]\cup \Sigma_{\alpha}\big)\,,
\ee
\be
\dom(  \Delta_{\alpha})=\{ u\in H^{3/2-s}(\RE^{n})\cap H_{\Delta}^{0}(\RE^{n}\backslash\Gamma): \alpha\gamma_{\Gamma}^{0}u=[\gamma_{\Gamma}^{1}]u\}\,.
\ee
$$
\Delta_{\alpha}u=(\Delta_{\Omega_{\-}}\oplus\Delta_{\Omega_{\+}})u\,.
$$
Whenever $\lambda\in(-\infty,0)\backslash(\sigma^{-}_{p}(\Delta_{\alpha})\cup\sigma_{disc}(\Delta^{D}_{\Omega_{\-}})\cup\sigma_{disc}(\Delta^{D}_{\Omega_{\+}}))$, the  scattering matrix $S^{\alpha}_{\lambda}$ has the alternative representation  
\begin{equation}\label{scatt-delta-lip}
S^{\alpha}_{\lambda}=\uno-2\pi iL^{D}_{\lambda}
(P_{\lambda}^{\+}-P_{\lambda}^{\-})(\alpha-(P_{\lambda}^{\+}-P_{\lambda}^{\-}))^{-1}\alpha\, (L^{D}_{\lambda})^{*}\,.
\end{equation}
If $\Omega_{\+}$ is connected then $\sigma_{p}^{-}(\Delta_{\alpha})=\sigma_{disc}(\Delta^{D}_{\Omega_{\+}})=\emptyset$. 
\end{corollary}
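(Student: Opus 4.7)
The plan is to derive this corollary essentially as an algebraic reformulation of Theorem~\ref{delta} together with a geometric identification of the domain using Green's formula. The algebraic keystone is the identity $(\gamma_{\Gamma}^{0}\SL_{z})^{-1}=P_{z}^{\-}-P_{z}^{\+}$ provided by \eqref{P-Q-1}, valid for $z\in\CO\backslash((-\infty,0]\cup\sigma_{disc}(\Delta^{D}_{\Omega_{\-}})\cup\sigma_{disc}(\Delta^{D}_{\Omega_{\+}}))$. Writing $K_{z}:=\gamma_{\Gamma}^{0}\SL_{z}$ and $T_{z}:=P_{z}^{\+}-P_{z}^{\-}=-K_{z}^{-1}$, from $\alpha-T_{z}=K_{z}^{-1}(\uno+K_{z}\alpha)$ one obtains $(\alpha-T_{z})^{-1}=(\uno+K_{z}\alpha)^{-1}K_{z}$ whenever $z\notin\Sigma_{\alpha}$, hence
$$T_{z}(\alpha-T_{z})^{-1}\alpha=-K_{z}^{-1}(\uno+K_{z}\alpha)^{-1}K_{z}\alpha=-(\uno+\alpha K_{z})^{-1}\alpha,$$
where in the last step we used the standard resolvent identity $(\uno+\alpha K)^{-1}\alpha=\alpha(\uno+K\alpha)^{-1}$. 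Plugging this into the resolvent formula \eqref{krein-dset} of Theorem~\ref{delta} gives exactly \eqref{delta_res}; taking the boundary value $\epsilon\downarrow 0$ in the same identity together with formula \eqref{scatt-delta} gives \eqref{scatt-delta-lip}. The inverse $(\alpha-T_{z})^{-1}$ in the stated space is justified by this manipulation in tandem with the invertibility of $\uno+\alpha K_{z}$ coming from Lemma~\ref{lambdalpha}.

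For the domain identification, I would use the action formula \eqref{Aalfa} and the Green identity \eqref{Green0}. Let $u\in\dom(\Delta_{\alpha})$. By \eqref{dom-alfa} applied with $d=n-1$ one has $u\in H^{3/2-s}(\RE^{n})\subset H^{1}(\RE^{n})$, so $[\gamma_{\Gamma}^{0}]u=0$. From \eqref{Aalfa}, $\Delta u=\Delta_{\alpha}u+(\gamma_{\Gamma}^{0})^{*}\alpha\gamma_{\Gamma}^{0}u$ in $\mathcal{D}'(\RE^{n})$. On the other hand, testing against $v\in C^{\infty}_{comp}(\RE^{n})$ (so $[\gamma_{\Gamma}^{1}]v=0$) the formula \eqref{Green0} applied to $u_{\-}\oplus u_{\+}:=u|\Omega_{\-}\oplus u|\Omega_{\+}$ yields the distributional identity $\Delta u=(\Delta_{\Omega_{\-}}\oplus\Delta_{\Omega_{\+}})u+(\gamma_{\Gamma}^{0})^{*}[\gamma_{\Gamma}^{1}]u$ in $H^{-s-1/2}(\RE^{n})$. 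Subtracting the two expressions gives $(\gamma_{\Gamma}^{0})^{*}(\alpha\gamma_{\Gamma}^{0}u-[\gamma_{\Gamma}^{1}]u)=(\Delta_{\Omega_{\-}}\oplus\Delta_{\Omega_{\+}})u-\Delta_{\alpha}u\in L^{2}(\RE^{n})$, and since the left-hand side is supported on the null set $\Gamma$ it must vanish (the injectivity of $(\gamma_{\Gamma}^{0})^{*}$ on $H^{-s}(\Gamma)$ following from the surjectivity of $\gamma_{\Gamma}^{0}$). This yields simultaneously $u|_{\Omega_{\-/\+}}\in H^{0}_{\Delta}(\Omega_{\-/\+})$, the jump condition $\alpha\gamma_{\Gamma}^{0}u=[\gamma_{\Gamma}^{1}]u$, and $\Delta_{\alpha}u=(\Delta_{\Omega_{\-}}\oplus\Delta_{\Omega_{\+}})u$.

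For the reverse inclusion, given $u\in H^{3/2-s}(\RE^{n})\cap H^{0}_{\Delta}(\RE^{n}\backslash\Gamma)$ with $\alpha\gamma_{\Gamma}^{0}u=[\gamma_{\Gamma}^{1}]u$, I would verify that $v:=u+\SL_{\circ}\alpha\gamma_{\Gamma}^{0}u$ lies in $H^{2}(\RE^{n})$ as demanded by \eqref{dom-alfa}. Using the jump relations \eqref{jump} together with $[\gamma_{\Gamma}^{0}]u=0$, one computes $[\gamma_{\Gamma}^{0}]v=0$ and $[\gamma_{\Gamma}^{1}]v=\alpha\gamma_{\Gamma}^{0}u-\alpha\gamma_{\Gamma}^{0}u=0$, so $v\in H^{1}(\RE^{n})$ and $(-\Delta+\lambda_{\circ})v\in L^{2}(\RE^{n})$; the characterization of $H^{2}(\RE^{n})$ by Fourier transform then gives $v\in H^{2}(\RE^{n})$. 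This is the step most susceptible to subtle regularity bookkeeping, which I expect to be the main obstacle, since the membership of $\SL_{\circ}\alpha\gamma_{\Gamma}^{0}u$ in $H^{3/2-s}(\RE^{n})$ and the cancellation of all traces on $\Gamma$ must be made simultaneously precise at the Sobolev exponent $3/2-s$.

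Finally, the statement on $\sigma_{p}^{-}(\Delta_{\alpha})$ when $\Omega_{\+}$ is connected follows by a standard Rellich plus unique continuation argument: if $\Delta_{\alpha}u=\lambda u$ with $\lambda<0$ and $u\in L^{2}(\RE^{n})$, then $u|_{\Omega_{\+}}\in L^{2}(\Omega_{\+})$ solves $(-\Delta+\lambda)u=0$ on the connected unbounded Lipschitz domain $\Omega_{\+}$, forcing $u|_{\Omega_{\+}}\equiv 0$ and hence $\gamma_{\Gamma}^{0,\+}u=\gamma_{\Gamma}^{1,\+}u=0$; combining with $[\gamma_{\Gamma}^{0}]u=0$ and the jump condition $[\gamma_{\Gamma}^{1}]u=\alpha\gamma_{\Gamma}^{0}u$ one gets $\gamma_{\Gamma}^{0,\-}u=\gamma_{\Gamma}^{1,\-}u=0$, so $u|_{\Omega_{\-}}$ solves the overdetermined elliptic problem in $\Omega_{\-}$ with both traces vanishing, whence $u\equiv 0$. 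The same argument shows $\sigma_{disc}(\Delta^{D}_{\Omega_{\+}})=\emptyset$ under the connectedness assumption.
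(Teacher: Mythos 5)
Your proposal is correct, and for the resolvent identity \eqref{delta_res} and the scattering--matrix formula \eqref{scatt-delta-lip} it follows essentially the paper's own route: both arguments reduce to the algebraic identity $\Lambda^{\alpha}_{z}=(P^{\+}_{z}-P^{\-}_{z})(\alpha-(P^{\+}_{z}-P^{\-}_{z}))^{-1}\alpha$, obtained from \eqref{P-Q-1} and Lemma~\ref{lambdalpha}, first off the real axis and then in the limit $\epsilon\downarrow 0$ (this limit, taken through \eqref{P-Q-1}, is where the extra exceptional set $\sigma_{disc}(\Delta^{D}_{\Omega_{\-}})\cup\sigma_{disc}(\Delta^{D}_{\Omega_{\+}})$ enters, which you leave implicit but at the same level of detail as the paper). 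Where you genuinely diverge is the domain identification. The paper proves only the inclusion $\dom(\Delta_{\alpha})\subseteq D_{\alpha}:=\{u\in H^{3/2-s}(\RE^{n})\cap H^{0}_{\Delta}(\RE^{n}\backslash\Gamma):\alpha\gamma^{0}_{\Gamma}u=[\gamma^{1}_{\Gamma}]u\}$, directly from \eqref{dom-alfa} and the jump relations \eqref{jump}, and then gets equality for free: $\Delta_{\alpha}$ is self-adjoint, $(\Delta_{\Omega_{\-}}\oplus\Delta_{\Omega_{\+}})|D_{\alpha}$ is symmetric by \eqref{Green0} and $\alpha=\alpha^{*}$, and a self-adjoint operator contained in a symmetric one coincides with it. You instead prove both inclusions, the reverse one by verifying $u+\SL_{\circ}\alpha\gamma^{0}_{\Gamma}u\in H^{2}(\RE^{n})$ through the jump relations; this does work (vanishing of both jumps of $v$ removes the surface terms, so the global distributional Laplacian of $v$ is in $L^{2}$ and Fourier transform gives $H^{2}$), but it is precisely the regularity bookkeeping at the exponent $3/2-s$ that the paper's maximality trick is designed to bypass, so the paper's argument is shorter while yours is more constructive. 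Two small points to tighten: in your forward inclusion, $u\in H^{0}_{\Delta}(\RE^{n}\backslash\Gamma)$ should be obtained first, by restricting the identity coming from \eqref{Aalfa} to the open sets $\Omega_{\-/\+}$ (where the surface term drops out), before \eqref{Green0} and the jump $[\gamma^{1}_{\Gamma}]u$ can legitimately be invoked --- it cannot come out ``simultaneously'' at the end; and your explicit Rellich-plus-unique-continuation argument for $\sigma^{-}_{p}(\Delta_{\alpha})=\emptyset$ spells out what the paper delegates to the citation of the unique continuation principle, so it is a welcome expansion rather than a deviation.
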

\begin{proof} Relation \eqref{delta_res} is consequence of \eqref{P-Q-1}: by
$$
(\alpha-\Lambda^{D}_{z})\gamma_{\Gamma}^{0}\SL_{z}(\uno+\alpha\gamma_{\Gamma}^{0}\SL_{z})^{-1}=\uno=\gamma_{\Gamma}^{0}\SL_{z}(\uno+\alpha\gamma_{\Gamma}^{0}\SL_{z})^{-1}(\alpha-\Lambda^{D}_{z})\,,
$$
one gets 
$(\alpha-\Lambda^{D}_{z})^{-1}=\gamma_{\Gamma}^{0}\SL_{z}(\uno+\alpha\gamma_{\Gamma}^{0}\SL_{z})^{-1}$ and so 
$$
\Lambda_{z}^{\alpha}=\Lambda^{D}_{z}(\alpha-\Lambda^{D}_{z})^{-1}\alpha=
(P_{z}^{\+}-P_{z}^{\-})(\alpha-(P_{z}^{\+}-P_{z}^{\-}))^{-1}\alpha\,.$$
By $H^{2}(\RE^{n})\subseteq\ker([\gamma_{\Gamma}^{1}])$ and by \eqref{jump}, one gets $\dom(  \Delta_{\alpha})  \subseteq D_{\alpha}$, where $$D_{\alpha}:=\{ \psi\in H^{3/2-s}(\RE^{n})\cap H^{0}_{\Delta}(\RE^{n}\backslash\Gamma): \alpha\gamma_{\Gamma}^{0}u=[\gamma_{\Gamma}^{1}]u\}\,.
$$ 
Thus $\Delta_{\alpha}\subseteq (\Delta_{\Omega_{\-}}\oplus\Delta_{\Omega_{\+}})|D_{\alpha}$. 
Since $\Delta_{\alpha}$ is self-adjoint and $(\Delta_{\Omega_{\-}}\oplus\Delta_{\Omega_{\+}})|D_{\alpha}$ is symmetric by \eqref{Green0}, the  two operators coincide.
\par
By the same reasonings as in the proof of  Theorem \ref{delta}, one shows that $(\alpha+\Lambda_{\lambda}^{D,\pm})|\ran(\alpha)$ is invertible and that 
$$
\Lambda_{\lambda}^{\alpha,\pm}=\Lambda^{D,\pm}_{\lambda}(\alpha-\Lambda^{D,\pm}_{\lambda})^{-1}\alpha=
(P_{\lambda}^{\+}-P_{\lambda}^{\-})(\alpha-(P_{\lambda}^{\+}-P_{\lambda}^{\-}))^{-1}\alpha\,.$$
If $\Omega_{\+}$ is connected, then $\sigma_{p}(\Delta_{\alpha})\cap (-\infty,0)=\emptyset$ by the unique continuation principle (see \cite[Remark 3.8]{JST}). 
\end{proof}
\begin{remark}
The conditions providing the self-adjoint operator $\Delta_{\alpha}$ in Theorem \ref{delta} are weaker, as regards the regularity of the boundary and/or the class of admissible strength functions, than the ones assumed in previous works, see, for example, \cite{Her}, \cite{F92}, \cite{BEKS}, \cite{P01}, \cite{BLL}, \cite{JDE}, \cite{ER}. Asymptotic completeness for the scattering couple $(\Delta,\Delta_{\alpha})$ provided in Theorem \ref{delta} extend results on existence and completeness given, in the case the boundary is smooth and the strength are bounded, in \cite{BLL} and \cite{JDE}. The formula for the scattering matrix provided in \eqref{scatt-delta} (respectively in \eqref{scatt-delta-lip}) extends to $d$-sets (respectively to Lipschitz hypersurfaces) the results given, in the case of a smooth hypersurface, in \cite[Subsections 6.4 and 7.4]{JST} and, in the case of a smooth $2$- or   $3$-dimensional hypersurface, in \cite[Subsection 5.4]{BMN} (see also the formula provided in \cite{F93} for Schr\"odinger operators of the kind $-\Delta+\mu$, $\mu$ a signed measure).
\end{remark}
\end{subsection}
\begin{subsection} {The Laplace operator with semi-transparent boundary condition of $\delta'$-type on Lipschitz hypersurfaces.} Here we apply Theorem \ref{LH} to a case in which $\tau=\gamma^{1}_{\Gamma}$, $\fh=H^{1/2}(\Gamma)$, $\fb=H^{-1/2}(\Gamma)$ and $\Gamma$ is the boundary of a bounded Lipschitz set $\Omega\subset\RE^{n}$. 
\begin{lemma}\label{lambdateta} Let $\theta\in\B(H^{s}(\Gamma),H^{-s}(\Gamma))$, $\theta^{*}=\theta$, $0<s<\frac12$. Then there exists a discrete set $\Sigma_{\theta} \subset(0,+\infty)$ such that for all $z\in\CO\backslash((-\infty,0]\cup \Sigma_{\theta} )$ one has $(\uno+\theta(Q^{\-}_{z}-Q^{\+}_{z}))^{-1}\in \B(H^{-1/2}(\Gamma))$. Moreover the operator family $\Lambda^{\theta}$ in $\B(H^{-1/2}(\Gamma),H^{1/2}(\Gamma)))$ given by $$\Lambda^{\theta}_{z}:=(Q^{\-}_{z}-Q^{\+}_{z})
(\uno+\theta(Q^{\-}_{z}-Q^{\+}_{z}))^{-1}
\,,\quad z\in\CO\backslash((-\infty,0]\cup \Sigma_{\theta} )\,,
$$
satisfies \eqref{Lambda1} and \eqref{Lambda2}.
\end{lemma}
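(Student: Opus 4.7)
The plan is to identify $\Lambda^{\theta}_{z}$ with $B_{z}^{-1}$, where $B_{z}:=\theta-\gamma_{\Gamma}^{1}\DL_{z}\in\B(H^{1/2}(\Gamma),H^{-1/2}(\Gamma))$, and then to establish bounded invertibility of $B_{z}$ off a discrete subset of $(0,+\infty)$ via analytic Fredholm theory. Setting $A_{z}:=Q^{\-}_{z}-Q^{\+}_{z}$, one has $A_{z}=-(\gamma_{\Gamma}^{1}\DL_{z})^{-1}$ on $\CO\backslash(-\infty,0]$ by \eqref{P-Q-2}, and a direct computation gives $B_{z}A_{z}=\uno+\theta A_{z}$, whence $\Lambda^{\theta}_{z}=A_{z}(\uno+\theta A_{z})^{-1}=B_{z}^{-1}$. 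Both the claimed boundedness of $(\uno+\theta(Q^{\-}_{z}-Q^{\+}_{z}))^{-1}$ on $H^{-1/2}(\Gamma)$ and of $\Lambda^{\theta}_{z}:H^{-1/2}(\Gamma)\to H^{1/2}(\Gamma)$ then reduce to showing that $B_{z}^{-1}\in\B(H^{-1/2}(\Gamma),H^{1/2}(\Gamma))$.

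For the Fredholm structure, the hypothesis $0<s<\frac12$ makes the embedding $H^{1/2}(\Gamma)\hookrightarrow H^{s}(\Gamma)$ on the compact Lipschitz manifold $\Gamma$ compact, so $\theta$ is compact as an element of $\B(H^{1/2}(\Gamma),H^{-1/2}(\Gamma))$ via the factorization $H^{1/2}(\Gamma)\hookrightarrow H^{s}(\Gamma)\xrightarrow{\theta}H^{-s}(\Gamma)\hookrightarrow H^{-1/2}(\Gamma)$. Writing $B_{z}=-\gamma_{\Gamma}^{1}\DL_{z}\big(\uno-(\gamma_{\Gamma}^{1}\DL_{z})^{-1}\theta\big)$ and noting that $(\gamma_{\Gamma}^{1}\DL_{z})^{-1}\theta\in{\mathfrak S}_{\infty}(H^{1/2}(\Gamma))$ exhibits $B_{z}$ as Fredholm of index zero on the connected domain $\CO\backslash(-\infty,0]$, where $z\mapsto(\gamma_{\Gamma}^{1}\DL_{z})^{-1}\theta$ is analytic.

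The main step, and the principal obstacle, is the verification of invertibility of $B_{z}$ on a non-empty open subset, so that analytic Fredholm theory applies. Fix $z=\lambda+i\eta$ with $\eta\ne 0$ and assume $\varphi\in\ker B_{z}$; the function $v:=\DL_{z}\varphi=R^{0}_{z}(\gamma_{\Gamma}^{1})^{*}\varphi$ lies in $H^{1}_{\Delta}(\RE^{n}\backslash\Gamma)$ (by \eqref{map} combined with exponential decay of the Green's function for $z\in\CO\backslash(-\infty,0]$), solves $(-\Delta+z)v=0$ off $\Gamma$, and by \eqref{jump} together with $B_{z}\varphi=0$ obeys $[\gamma_{\Gamma}^{0}]v=\varphi$, $[\gamma_{\Gamma}^{1}]v=0$, $\gamma_{\Gamma}^{1}v=\theta\varphi$. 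Inserting $u=v$ and $\Delta v=zv$ into Green's identity \eqref{Green1} yields
\begin{equation*}
-2i\,\eta\,\|v\|_{L^{2}(\RE^{n})}^{2}=\langle\varphi,\theta\varphi\rangle_{H^{1/2},H^{-1/2}}-\langle\theta\varphi,\varphi\rangle_{H^{-1/2},H^{1/2}}\,,
\end{equation*}
whose right-hand side equals $2i\,\mathrm{Im}\,\langle\varphi,\theta\varphi\rangle=0$ since $\theta^{*}=\theta$ forces $\langle\varphi,\theta\varphi\rangle\in\RE$. Hence $v=0$ and $\varphi=[\gamma_{\Gamma}^{0}]v=0$; so $B_{z}$ is injective and, being Fredholm of index zero, boundedly invertible throughout $\CO\backslash\RE$.

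Analytic Fredholm theory \cite[Theorem XIII.13]{RS-IV} applied to $\uno-(\gamma_{\Gamma}^{1}\DL_{z})^{-1}\theta$ on the connected domain $\CO\backslash(-\infty,0]$ then localizes the non-invertibility set of $B_{z}$ to a discrete subset which, by the previous step, is contained in $(0,+\infty)$; call it $\Sigma_{\theta}$. For \eqref{Lambda1}, the identity $(\gamma_{\Gamma}^{1}\DL_{z})^{*}=\gamma_{\Gamma}^{1}\DL_{\bar z}$ --- the content of \eqref{RL} applied to $M_{z}=-\gamma_{\Gamma}^{1}\DL_{z}$ as in the proof of Lemma \ref{Neu} --- combined with $\theta^{*}=\theta$ gives $B_{z}^{*}=B_{\bar z}$ and hence $(\Lambda^{\theta}_{z})^{*}=\Lambda^{\theta}_{\bar z}$. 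For \eqref{Lambda2}, the choice $\tau=\gamma_{\Gamma}^{1}$ yields $G_{z}=\DL_{z}$, so \eqref{RL} delivers $B_{w}-B_{z}=-(z-w)G^{*}_{\bar w}G_{z}$, and the elementary identity $B_{w}^{-1}-B_{z}^{-1}=B_{w}^{-1}(B_{z}-B_{w})B_{z}^{-1}$ produces $\Lambda^{\theta}_{w}-\Lambda^{\theta}_{z}=(z-w)\Lambda^{\theta}_{w}G^{*}_{\bar w}G_{z}\Lambda^{\theta}_{z}$.
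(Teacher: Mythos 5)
Your proposal is correct and follows essentially the same route as the paper's proof: compactness of the $\theta$-term, Fredholm alternative plus analytic Fredholm theory on $\CO\backslash(-\infty,0]$, triviality of the kernel for non-real $z$ from $\theta^{*}=\theta$, and then \eqref{Lambda1}--\eqref{Lambda2} via the identification $(\Lambda^{\theta}_{z})^{-1}=\theta-\gamma_{\Gamma}^{1}\DL_{z}$ and Remark \ref{RemL}. The only (equivalent) variation is in the kernel step, where you use Green's identity \eqref{Green1} applied to $v=\DL_{z}\varphi$ together with the jump relation $[\gamma_{\Gamma}^{0}]\DL_{z}\varphi=\varphi$, instead of the resolvent-identity computation \eqref{gG} and the closed-range lower bound for $\DL_{z}$; this requires the global statement $\DL_{z}\varphi\in H^{1}_{\Delta}(\RE^{n}\backslash\Gamma)$, which you justify only informally but which is available, e.g., as in the proof of Theorem \ref{delta'}.
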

\begin{proof} 
By $\theta({Q}_{z}^{\-}-{Q}_{z}^{\+})  \in{\B}(  H^{-1/2}(  \Gamma)  ,
H^{-s}(\Gamma) )  $ and by the compact embedding $H^{-s}(
\Gamma)  \hookrightarrow H^{-1/2}(  \Gamma)  $, one gets 
$\theta({Q}_{z}^{\-}-{Q}_{z}^{\+})\in{\mathfrak S}_{\infty}(  H^{-1/2}(  \Gamma))  $. Therefore, by the Fredholm
alternative, $ \uno+\theta( {Q}_{z}^{\-}-{Q}_{z}^{\+})  $ is invertible if and only if  its kernel $K_{z}$ is trivial.
By ${Q}_{z}^{\+}-{Q}_{z}^{\-}=(\gamma_{\Gamma}^{1}\DL_{z})^{-1}$, $K_{z}\not=\{0\}$ if and only if there is $\psi\in H^{1/2}(\Gamma)\backslash\{0\}$ such that 
\be\label{Kz}
\theta\psi=\gamma_{\Gamma}^{1}\DL_{z}\psi\,.
\ee
By the definition \eqref{DL} and by resolvent identity, we have
\be\label{gG}
\gamma_{\Gamma}^{1}\DL_{z}-(\gamma_{\Gamma}^{1}\DL_{z})^{*}=\gamma_{\Gamma}^{1}\DL_{z}-\gamma_{\Gamma}^{1}\DL_{\bar z}=(\bar z-z)\DL_{z}^{*}\DL_{z}\,.\ee 
Since $\theta=\theta^{*}$, \eqref{Kz} and \eqref{gG} entail, for any $z\in \CO\backslash\RE$, $$0=(z-\bar z)\|\DL_{z}\psi\|^{2}_{L^{2}(\RE^{n})}\,.$$ 
Since $\DL_{z}^{*}=\gamma_{\Gamma}^{1}R^{0}_{z}$ is surjective, $\DL_{z}$ has closed range by the closed range theorem and so (see Remark \ref{RMM}) there exists $c>0$ such that $\|\DL_{z}\psi\|_{L^{2}(\RE^{n})}\ge c\, \|\psi\|_{H^{1/2}(\Gamma)}$. Thus $K_{z}=\{0\}$ whenever $z\in \CO\backslash\RE$ and $\uno+\theta( {Q}_{z}^{\-}-{Q}_{z}^{\+})$ has a bounded inverse for any $z\in \CO\backslash\RE$. Since the operator-valued map $z\mapsto \uno+\theta( {Q}_{z}^{\-}-{Q}_{z}^{\+})$ is analytic on $\CO\backslash(-\infty,0]$, by analytic Fredholm theory (see e.g. \cite[Theorem XIII.13]{RS-IV}), $\uno+\theta( {Q}_{z}^{\-}-{Q}_{z}^{\+})^{-1}\in \B(H^{-1/2}(\Gamma))$ for any $z\in\CO\backslash((-\infty,0]\cup \Sigma_{\theta} )$, where $\Sigma_{\theta} \subset(0,+\infty)$ is a discrete set.\par
Since 
\be\label{gt0}
\Lambda^{\theta}_{z}=\Lambda^{N}_{z}(\uno+\theta\Lambda^{N}_{z})^{-1}\,,
\ee
one has
\be\label{gt}
(\Lambda^{\theta}_{z})^{-1}=(\uno+\theta\Lambda^{N}_{z})(\Lambda^{N}_{z})^{-1}
=\theta+(\Lambda^{N}_{z})^{-1}
\ee
Thus $\Lambda^{\theta}_{z}$ satisfies \eqref{Lambda1} and \eqref{Lambda2} by Remark \ref{RemL} and Lemma \ref{Neu}.
\end{proof}
\vskip8pt
Taking $\lambda_{\circ}>0$, in the following we use the shorthand notation $\DL_{\circ}\equiv \DL_{\lambda_{\circ}}$.
\begin{theorem}\label{delta'} Let $\Gamma$ be the boundary of a bounded Lipschitz set $\Omega\subset\RE^{n}$ and let $\theta\in\B(H^{s}(\Gamma),H^{-s}(\Gamma))$, $\theta^{*}=\theta$, $0<s<\frac12$. Then\par\noindent
1) The family of bounded linear operators 
\be\label{krein}
R_{z}^{\theta}:=R_{z}^{0}+\DL_{z}(Q^{\-}_{z}-Q^{\+}_{z})
(\uno+\theta(Q^{\-}_{z}-Q^{\+}_{z}))^{-1}\gamma_{\Gamma}^{1}R_{z}^{0}\,,\quad z\in\CO\backslash((-\infty,0]\cup \Sigma_{\theta} )
\ee
is the resolvent of the bounded from above self-adjoint operator $\Delta_{\theta}$ in $L^{2}(\RE^{n})$ given by $$
\dom(  \Delta_{\theta})=\{ u\in H^{1}_{\Delta}(\RE^{n}): [\gamma_{\Gamma}^{1}]u=0\,,\ \gamma_{\Gamma}^{1}u=\theta[\gamma_{\Gamma}^{0}]u\}\,,
$$
$$
\Delta_{\theta}u=(\Delta_{\Omega_{\-}}\oplus\Delta_{\Omega_{\+}})u\,.
$$
2) $\sigma_{ess}(\Delta_{\theta})=\sigma_{ac}(\Delta_{\theta})=(-\infty,0]$, $\sigma_{disc}(\Delta_{\theta})=\Sigma_{\theta} $ is finite, $\sigma_{sc}(\Delta_{\theta})=\emptyset$, $\sigma^{-}_{p}(\Delta_{\theta})=(-\infty,0)\cap\sigma_{p}(\Delta_{\theta})$ is at most discrete.
\par\noindent 3) Asymptotic completeness holds for the scattering couple $(\Delta,\Delta_{\theta})$ and, whenever  $\lambda\in(-\infty,0)\backslash(\sigma^{-}_{p}(\Delta_{\theta})\cup\sigma_{disc}(\Delta^{N}_{\Omega_{\-}})\cup\sigma_{disc}(\Delta^{N}_{\Omega_{\+}}))$, the scattering matrix $S^{\theta}_{\lambda}$ is given by 
\begin{equation}\label{scatt-delta'}
S^{\theta}_{\lambda}=\uno-2\pi iL^{N}_{\lambda}(Q^{\-}_{\lambda}-Q^{\+}_{\lambda})
(\uno+\theta(Q^{\-}_{\lambda}-Q^{\+}_{\lambda}))^{-1}(L^{N}_{\lambda})^{*}\,.
\end{equation}
If $\Omega_{\+}$ is connected, then $\sigma^{-}_{p}(\Delta_{\theta})=\sigma_{disc}(\Delta^{N}_{\Omega_{\+}})=\emptyset$.
\end{theorem}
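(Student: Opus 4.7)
The plan is to apply Theorem \ref{teo} with the family $\Lambda^{\theta}$ produced by Lemma \ref{lambdateta} and with the choice $\tau=\gamma^{1}_{\Gamma}$, $\fh=H^{1/2}(\Gamma)$. Then $G_{z}=\DL_{z}$ by definition, and \eqref{resolvent} becomes \eqref{krein} after substituting $\Lambda^{\theta}_{z}=(Q^{\-}_{z}-Q^{\+}_{z})(\uno+\theta(Q^{\-}_{z}-Q^{\+}_{z}))^{-1}$; self-adjointness of $\Delta_{\theta}:=\Delta_{\Lambda^{\theta}}$ is then automatic. For the domain I would use \eqref{domLambda} and write $u=u_{z}+\DL_{z}\Lambda^{\theta}_{z}\gamma^{1}_{\Gamma}u_{z}$ with $u_{z}\in H^{2}(\RE^{n})$: since $[\gamma^{0}_{\Gamma}]u_{z}=[\gamma^{1}_{\Gamma}]u_{z}=0$, the jump relations \eqref{jump} yield $[\gamma^{1}_{\Gamma}]u=0$ and $[\gamma^{0}_{\Gamma}]u=\Lambda^{\theta}_{z}\gamma^{1}_{\Gamma}u_{z}$, and combining $\gamma^{1}_{\Gamma}\DL_{z}=-(\Lambda^{N}_{z})^{-1}$ (from \eqref{P-Q-2}) with the algebraic identity $(\Lambda^{\theta}_{z})^{-1}=\theta+(\Lambda^{N}_{z})^{-1}$ recorded in \eqref{gt} produces $\gamma^{1}_{\Gamma}u=\theta[\gamma^{0}_{\Gamma}]u$. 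The action $\Delta_{\theta}u=(\Delta_{\Omega_{\-}}\oplus\Delta_{\Omega_{\+}})u$ follows from \eqref{op} using $(-\Delta+z)\DL_{z}=0$ on $\RE^{n}\setminus\Gamma$. The reverse inclusion will come from Green's identity \eqref{Green1}, which makes the restriction of $\Delta_{\Omega_{\-}}\oplus\Delta_{\Omega_{\+}}$ to the purported domain symmetric; since $\Delta_{\theta}$ is self-adjoint and contained in this operator, the two coincide.

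For the spectral part I plan to verify hypotheses (i)-(iii) of Theorem \ref{LH}. Hypothesis (iii) is immediate by compactness of $\Gamma$. Hypothesis (ii) follows from $\ran(\Lambda^{\theta}_{\lambda})\subseteq H^{1/2}(\Gamma)$ and the compact embedding $H^{1/2}(\Gamma)\hookrightarrow H^{-1/2}(\Gamma)$. The real work is hypothesis (i), the semiboundedness from above of $\Delta_{\theta}$: integration by parts using the boundary conditions derived above yields the quadratic-form representation
\[
\langle u,-\Delta_{\theta}u\rangle=\int_{\RE^{n}\setminus\Gamma}|\nabla u|^{2}\,dx-\langle\theta[\gamma^{0}_{\Gamma}]u,[\gamma^{0}_{\Gamma}]u\rangle_{H^{-s}(\Gamma),H^{s}(\Gamma)}\,.
\]
For $s<\frac12$ the estimate $\|[\gamma^{0}_{\Gamma}]u\|_{H^{s}(\Gamma)}^{2}\le \varepsilon\|\nabla u\|^{2}+C_{\varepsilon}\|u\|^{2}$ holds with arbitrarily small $\varepsilon>0$ (by standard trace interpolation between $L^{2}$ and $H^{1}$ combined with Young's inequality), so the boundary term is form-bounded by the gradient with relative bound strictly less than one, giving an upper bound for $\Delta_{\theta}$. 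Once (i)-(iii) are in hand, Theorem \ref{LH} delivers asymptotic completeness, $\sigma_{sc}(\Delta_{\theta})=\emptyset$, $\sigma_{ac}(\Delta_{\theta})=(-\infty,0]$, and the scattering-matrix representation \eqref{S-tau}; compactness of $R^{\theta}_{z}-R^{0}_{z}=\DL_{z}\Lambda^{\theta}_{z}\gamma^{1}_{\Gamma}R^{0}_{z}$ (factoring through the compact embedding on $\Gamma$) plus Weyl's theorem give $\sigma_{ess}(\Delta_{\theta})=(-\infty,0]$, and analytic Fredholm theory combined with the upper bound identifies $\Sigma_{\theta}=\sigma_{disc}(\Delta_{\theta})$ as a finite set.

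Formula \eqref{scatt-delta'} then follows by identifying $\Lambda^{\theta,+}_{\lambda}$, whose existence in $\B(\fb,\fb^{*})$ is ensured by Theorem \ref{LH}. For $\lambda\in(-\infty,0)$ outside $\sigma_{disc}(\Delta^{N}_{\Omega_{\-}})\cup\sigma_{disc}(\Delta^{N}_{\Omega_{\+}})$ the limit $Q^{\-}_{\lambda}-Q^{\+}_{\lambda}$ exists in $\B(H^{-1/2}(\Gamma),H^{1/2}(\Gamma))$ by \eqref{P-Q-2} and the continuity established there, and for $\lambda$ additionally outside $\sigma^{-}_{p}(\Delta_{\theta})$ the operator $\uno+\theta(Q^{\-}_{\lambda}-Q^{\+}_{\lambda})$ remains invertible (any failure of invertibility would produce a pole of the extended resolvent, hence a negative eigenvalue). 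Passing to the limit $\epsilon\downarrow 0$ in the defining expression of $\Lambda^{\theta}_{z}$ thus gives $\Lambda^{\theta,+}_{\lambda}=(Q^{\-}_{\lambda}-Q^{\+}_{\lambda})(\uno+\theta(Q^{\-}_{\lambda}-Q^{\+}_{\lambda}))^{-1}$, and inserting this into \eqref{S-tau} together with the formula for $L^{N}_{\lambda}$ obtained in Subsection \ref{neumann} produces \eqref{scatt-delta'}. The concluding statement $\sigma^{-}_{p}(\Delta_{\theta})=\emptyset$ when $\Omega_{\+}$ is connected will follow by the unique-continuation argument of \cite[Remark 3.8]{JST}: an $L^{2}$ eigenfunction at negative $\lambda$ restricts in $\Omega_{\+}$ to an exponentially decaying solution of the Helmholtz equation, which must vanish by UCP on the connected exterior, and the interior piece is then ruled out by the $\delta'$-type boundary conditions. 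The main obstacle I anticipate is the semiboundedness step, since $\theta$ is not assumed sign-definite and one must exploit $s<\frac12$ and the trace-interpolation carefully to keep the relative form-bound strictly below one.
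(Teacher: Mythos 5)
Your route is the paper's own: Lemma \ref{lambdateta} plus Theorem \ref{teo} give \eqref{krein} and self-adjointness; the boundary conditions come from \eqref{domLambda}, the jump relations \eqref{jump} and the identity \eqref{gt}; equality with the symmetric restriction to the purported domain follows from \eqref{Green1}; semiboundedness from above is obtained from the form identity together with an $\epsilon$-small trace estimate (the paper does exactly this via the trace bound and Ehrling's lemma; the sign of the boundary term is immaterial since $\theta$ is not sign-definite and only its modulus is estimated); and hypotheses i)--iii) of Theorem \ref{LH} then deliver asymptotic completeness, the spectral statements and the representation \eqref{S-tau}, with $L_\lambda=L^{N}_{\lambda}$ for $\tau=\gamma^{1}_{\Gamma}$.

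The one step whose justification does not hold up as written is the identification of $\Lambda^{\theta,+}_{\lambda}$. You claim that for $\lambda\in(-\infty,0)\backslash\sigma_{p}^{-}(\Delta_{\theta})$ the operator $\uno+\theta(Q^{\-}_{\lambda}-Q^{\+}_{\lambda})$ is invertible because ``any failure of invertibility would produce a pole of the extended resolvent, hence a negative eigenvalue''. This does not follow: at a point of the absolutely continuous spectrum the resolvent has no meromorphic continuation in the $L^{2}$ operator topology, and a nontrivial kernel of the limiting boundary operator corresponds a priori to a resonance/spectral singularity; to convert a kernel element into an $L^{2}$ eigenfunction of $\Delta_{\theta}$ (so as to contradict $\lambda\notin\sigma_{p}^{-}(\Delta_{\theta})$) would require a separate Rellich-type argument that you do not supply. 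The paper sidesteps this: since $\Lambda^{\theta,\pm}_{\lambda}$ exists by Lemma \ref{bound} (this is precisely where $\lambda\notin\sigma_{p}^{-}(\Delta_{\theta})$ enters) and $(\Lambda^{N}_{z})^{-1}=-\gamma^{1}_{\Gamma}\DL_{z}$ has boundary limits by \eqref{P-Q-2} and the continuity established in Subsection 5.1.2, one passes to the limit $\epsilon\downarrow 0$ in the identities $(\uno+\theta\Lambda^{N}_{\lambda\pm i\epsilon})(\Lambda^{N}_{\lambda\pm i\epsilon})^{-1}\Lambda^{\theta}_{\lambda\pm i\epsilon}=\uno=(\Lambda^{N}_{\lambda\pm i\epsilon})^{-1}\Lambda^{\theta}_{\lambda\pm i\epsilon}(\uno+\theta\Lambda^{N}_{\lambda\pm i\epsilon})$ and $(\uno+\theta\Lambda^{N}_{\lambda\pm i\epsilon})^{-1}=(\Lambda^{N}_{\lambda\pm i\epsilon})^{-1}\Lambda^{\theta}_{\lambda\pm i\epsilon}$, which simultaneously yields the invertibility of $\uno+\theta\Lambda^{N,\pm}_{\lambda}$ and the identity $\Lambda^{\theta,\pm}_{\lambda}=\Lambda^{N,\pm}_{\lambda}(\uno+\theta\Lambda^{N,\pm}_{\lambda})^{-1}$, i.e.\ \eqref{scatt-delta'}. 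With this replacement (and noting that in your sketch of the unique-continuation step the exterior part of an embedded eigenfunction vanishes by Rellich's lemma, not by exponential decay; the paper simply invokes \cite[Remark 3.8]{JST}) your argument coincides with the paper's.
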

\begin{proof} 
By \eqref{domLambda}, $u$ belongs to $\dom(  \Delta_{\theta})$ if and only if $u=u_{z}+\DL_{z}\Lambda^{\theta}_{z}\gamma^{1}_{\Gamma}u_{z}$. By \cite[Lemma 3.1]{JDE}, $\DL_{z}\in\B(H^{1/2}(\Gamma),H^{1}(\Omega_{\-/\+}))$ and so, since $(-(\Delta_{\Omega_{\-}}\oplus\Delta_{\Omega_{\-}})+z)\DL_{z}=0$, one has $u\in H^{1}_{\Delta}(\RE^{n}\backslash\Gamma)$. Then, by $H^{2}(\RE^{n})\subset(\ker([\gamma^{0}_{\Gamma}])\cap\ker([\gamma^{1}_{\Gamma}]))$ and \eqref{jump}, one obtains $[\gamma^{0}_{\Gamma}]u=\Lambda_{z}^{\theta}\gamma_{\Gamma}^{1}u_{z}$ and $[\gamma^{1}_{\Gamma}]u=0$. Moreover, by \eqref{gt}, 
$(\Lambda^{\theta}_{z})^{-1}=\theta-\gamma^{1}_{\Gamma}\DL_{z}$, and so $(\theta-\gamma^{1}_{\Gamma}\DL_{z})[\gamma_{\Gamma}^{0}]u=\gamma_{\Gamma}^{1}u_{z}$; thus $\gamma_{\Gamma}^{1}u=\theta[\gamma_{\Gamma}^{0}]u$ and 
$$
\dom(  \Delta_{\theta})  \subseteq D_{\theta}:=\{ u\in H^{1}_{\Delta}(\RE^{n}\backslash\Gamma):[\gamma^{1}_{\Gamma}]u=0,\ \gamma_{\Gamma}^{1}u=\theta[\gamma_{\Gamma}^{0}]u\}. 
$$
Therefore $\Delta_{\theta}\subseteq (\Delta_{\Omega_{\-}}\oplus\Delta_{\Omega_{\+}})|D_{\theta}$. 
Since $\Delta_{\theta}$ is self-adjoint and $(\Delta_{\Omega_{\-}}\oplus\Delta_{\Omega_{\+}})|D_{\theta}$ is symmetric by \eqref{Green1}, the  two operators coincide.
\par
By Green's formula (see e.g. \cite[Theorem 4.4]{McL}) and by Ehrling's lemma (see e.g. \cite[Theorem 7.30]{RR}), one has (here $0<s<1/2$ and $B\supset\overline\Omega$ is an open ball)
\begin{align*}
&\langle-\Delta_{\theta}u,u\rangle_{L^{2}(\RE^{n})}=
\|\nabla u\|^{2}_{L^{2}(\Omega_{\-})}+\|\nabla u\|^{2}_{L^{2}(\Omega_{\+})}+\langle\theta[ \gamma_{\Gamma}^{0}]u,[ \gamma_{\Gamma}^{0}]u\rangle_{H^{-s}(\Gamma),H^{s}(\Gamma)}\\
\ge&\|\nabla u\|^{2}_{L^{2}(\Omega_{\-})}+\|\nabla u\|^{2}_{L^{2}(\Omega_{\+})}-2\,{\|}\theta{\|}_{H^{s}(\Gamma),H^{-s}(\Gamma)}\big(\|\gamma^{\-}_{0}u_{\-}\|^{2}_{H^{s}(\Gamma)}+\|\gamma^{\+}_{0}u_{\+}\|^{2}_{H^{s}(\Gamma)}\big)\\
\ge&\|\nabla u\|^{2}_{L^{2}(\Omega_{\-})}+\|\nabla u\|^{2}_{L^{2}(\Omega_{\+})}-c\,{\|}\theta{\|}_{H^{s}(\Gamma),H^{-s}(\Gamma)}\big(\|u_{\-}\|^{2}_{H^{s+1/2}(\Omega_{\-})}+\|u_{\+}\|^{2}_{H^{s+1/2}(\Omega_{\+}\cap B)}\big)\\
\ge&\|\nabla u\|^{2}_{L^{2}(\Omega_{\-})}+\|\nabla u\|^{2}_{L^{2}(\Omega_{\+})}-c\,
{\|}\theta{\|}_{H^{s}(\Gamma),H^{-s}(\Gamma)}
\left(\epsilon\big(\|u_{\-}\|^{2}_{H^{1}(\Omega_{\-})}+\|u_{\+}\|^{2}_{H^{1}(\Omega_{\+}\cap B)}\big)+c_{\epsilon}\|u\|^{2}_{L^{2}(B)}\right)\\
\ge&-\kappa_{\epsilon}\|u\|^{2}_{L^{2}(\RE^{n})}
\end{align*}
and so $\Delta_{\theta}$ is bounded from above. 
\par
By Lemma \ref{lambdateta} and by \eqref{emb}, $\ran(\Lambda_{z}^{\alpha})=H^{1/2}(\Gamma)$ is compactly embedded in $H^{-1/2}(\Gamma)$ . Since $\Gamma$ is bounded, \eqref{HT} hold true. Therefore hypotheses i)-iii) in Theorem \ref{LH} hold.\par 
By \eqref{krein} and \cite[Theorem XIII.13]{RS-IV}, $z\mapsto R^{\theta}_{z}$ has poles (and the coefficients of the Laurent expansion are finite-rank operators) only at $\Sigma_{\theta} $; so,  by \cite[Lemma 1, page 108]{RS-IV}, $\sigma_{disc}(\Delta_\theta)=\Sigma_{\theta} $. Since $\Delta_{\theta}$ is bounded from below, $\Sigma_{\theta}$ is finite.
\par
If $\Omega_{\+}$ is connected, then $\sigma_{p}(\Delta_{\theta})\cap (-\infty,0)=\emptyset$ by the unique continuation principle (see \cite[Remark 3.8]{JST}). 
\par
By taking the limit $\epsilon\downarrow 0$ in the relations (use \eqref{gt0})
$$
(\uno+\theta\Lambda^{N}_{\lambda\pm i\epsilon})(\Lambda^{N}_{\lambda\pm i\epsilon})^{-1}\Lambda^{\theta}_{\lambda\pm i\epsilon}=\uno=(\Lambda^{N}_{\lambda\pm i\epsilon})^{-1}\Lambda^{\theta}_{\lambda\pm i\epsilon}(\uno+\theta\Lambda^{N}_{\lambda\pm i\epsilon})
$$
and
$$
(\uno+\theta\Lambda^{N}_{\lambda\pm i\epsilon})^{-1}=(\Lambda^{N}_{\lambda\pm i\epsilon})^{-1}\Lambda^{\theta}_{\lambda\pm i\epsilon}\,,
$$
one gets the existence of the inverse operator $(\uno+\theta\Lambda^{N,\pm}_{\lambda})^{-1}$
and 
$$
\lim_{\epsilon\downarrow 0}(\uno+\theta\Lambda^{N}_{\lambda\pm i\epsilon})^{-1}=(\uno+\theta\Lambda^{N,\pm}_{\lambda})^{-1}\,.
$$
Thus
$$
\Lambda^{\theta,\pm}_{\lambda}=\lim_{\epsilon\downarrow 0}\Lambda^{N}_{\lambda\pm i\epsilon}(\uno+\theta\Lambda^{N}_{\lambda\pm i\epsilon})^{-1}=
\Lambda^{N,\pm}_{\lambda}(\uno+\theta\Lambda^{N,\pm}_{\lambda})^{-1}=
(Q^{\-}_{\lambda}-Q^{\+}_{\lambda})^{\pm}
(\uno+\theta(Q^{\-}_{\lambda}-Q^{\+}_{\lambda})^{\pm})^{-1}\,.
$$
\end{proof}
%\begin{remark} The limit double-layer operator $\DL_{\lambda}^{\pm}$ admits the representation  
%\begin{equation*}
%\DL^{\pm}_{\lambda}\phi(x)=\frac{i}{4}\,\int_{\Gamma}\nu_{\Gamma} (y)\cdot\nabla\left(\left(\frac{\mp|\lambda|^{1/2}}{2\pi\|x-y\|}\,\right)^{\frac{n}2-1}\!\!\!\!H^{(1)}_{\frac{n}2-1}(\mp|\lambda|^{1/2}\,\|x-y\|)\right)
%\,\phi(y)\,d\mu^{d}_{\Gamma} (y)
%\end{equation*}
%whenever $\phi\in L^{2}(\Gamma)$ and $x\notin\Gamma$, where  $H^{(1)}_{\frac{n}2-1}$ denotes the Hankel function of first kind of order $\frac{n}2-1$ (see \cite[equation (5.2)]{JST}).
%\end{remark}
\begin{remark} By remark \ref{multi}, we can define $\Delta_{\theta}$ for any real-valued $\theta\in L^{p}(\Gamma)$, $p>2$.
\end{remark}
\begin{remark} In the quantum mechanics oriented literature, the $\delta'$-like boundary conditions are usually represented in terms of a different parameter: let us suppose that $\beta$ is a real-valued function which is a.e. different from zero and such that $\theta:=\beta^{-1}\in L^{p}(\Gamma)$, $p>2$; then one gets the self-adjoint operator $\Delta_{\beta}$ with domain 
$$
\dom(  \Delta_{\beta})=\{ u\in H^{1}_{\Delta}(\RE^{n}): [\gamma_{\Gamma}^{1}]u=0\,,\ \beta\gamma_{\Gamma}^{1}u=[\gamma_{\Gamma}^{0}]u\}\,.
$$
That extends the results contained in \cite[Section 3.2]{BLL}, where $\Delta_{\beta}$ is defined in case $\beta^{-1}\in L^{\infty}(\Gamma)$ and $\Gamma$ is a smooth hypersurface (see also \cite[Section 5.5]{JDE}). In the case $\beta\not=0$ on the measurable set $\Gamma_{\beta}\subsetneq\Gamma$, one can define the corresponding function $\theta$ as $\theta:=\chi_{\beta}\beta^{-1}$, where $\chi_{\beta}$ is the characteristic function of $\Gamma_{\beta}$. Whenever such a function $\theta$ belongs to $L^{p}(\Gamma)$, $p>2$, one gets again a self-adjoint operator $\Delta_{\beta}$, with domain characterized by the boundary conditions 
$$
\dom(  \Delta_{\beta})=\{ u\in H^{1}_{\Delta}(\RE^{n}): [\gamma_{\Gamma}^{1}]u=0\,,\ 
(1-\chi_{\beta})\gamma_{\Gamma}^{1}u=0\,,\ \beta\gamma_{\Gamma}^{1}u=[\gamma_{\Gamma}^{0}]u\}\,.
$$
Operators with such kind of boundary conditions have been constructed (in case $\beta$ and $\theta$ belong to $L^{\infty}(\Gamma)$) in \cite{ER} (see also \cite[Section 6.5]{JDE} for a different construction in the case $\Gamma$ is smooth).  Asymptotic completeness for the scattering couple $(\Delta,\Delta_{\theta})$ provided in Theorem \ref{delta'} extends results on existence and completeness given, in the case the boundary is smooth and $\theta$ is bounded, in \cite{BLL} and \cite{JDE}. The formula for the scattering matrix provided in \eqref{scatt-delta'} extends to Lipschitz hypersurfaces the results given, in the case of a smooth hypersurface and bounded $\theta$, in \cite[Subsections 6.5 and 7.5]{JST}.      
\end{remark}
\end{subsection}
\end{section}
%\section*{References}
%\phantomsection
%\addcontentsline{toc}{section}{References}%
%\markboth{References}{References}%
\vskip10pt\noindent
%{\bf References}

\end{document}